\newcommand{\ignore}[1]{}
\newcommand{\GWSPD}[1]{WSPD$_G$ with separation factor #1}
\newcommand{\GST}[1]{#1-CT}
\newcommand{\diam}[1]{diam(#1)}
\newcommand{\radius}[1]{rad(#1)}
\newcommand{\dist}[2]{dist(#1, #2)} 
\newcommand{\gdist}[2]{dist_{G}(#1,#2)}
\newcommand{\gdub}[1]{\text{dub}_G(#1)}
\newcommand{\rel}[1]{\text{REL($#1$)}}
\newcommand{\R}{\mathbb{R}}
\newcommand{\etal}{\emph{et~al.}\xspace}
\title{A WSPD, Separator and Small Tree Cover for \texorpdfstring{$c$}{c}-packed Graphs} 
\author{Lindsey Deryckere}{School of Computer Science, The University of Sydney, Australia}{lindsey.deryckere@sydney.edu.au}{}{}
\author{Joachim Gudmundsson}{School of Computer Science, The University of Sydney, Australia}{joachim.gudmundsson@sydney.edu.au}{}{}
\author{André van Renssen}{School of Computer Science, The University of Sydney, Australia}{andre.vanrenssen@sydney.edu.au}{}{}
\author{Yuan Sha}{School of Computer Science, The University of Sydney, Australia}{ysha3185@uni.sydney.edu.au}{}{}
\author{Sampson Wong}{Department of Computer Science, The University of Copenhagen, Denmark}{sawo@di.ku.dk}{}{}
\authorrunning{L. Deryckere, J. Gudmundsson, A. van Renssen, Y. Sha and S. Wong}
\keywords{Well-separated pair decomposition, separator, tree cover, distance oracles, realistic graphs}
\begin{document}
\maketitle

\begin{abstract}
The $c$-packedness property, proposed in 2010, is a geometric property that captures the spatial distribution of a set of edges. Despite the recent interest in $c$-packedness, its utility has so far been limited to Fr\'echet distance problems. An open problem is whether a wider variety of algorithmic and data structure problems can be solved efficiently under the~$c$-packedness assumption, and more specifically, on~$c$-packed graphs.

In this paper, we prove two fundamental properties of $c$-packed graphs: that there exists a linear-size well-separated pair decomposition under the graph metric, and there exists a constant size balanced separator. We then apply these fundamental properties to obtain a small tree cover for the metric space and distance oracles under the shortest path metric. In particular, we obtain a tree cover of constant size, an exact distance oracle of near-linear size and an approximate distance oracle of linear size.
\end{abstract}

\section{Introduction}

The study of graphs and their properties is a cornerstone of theoretical computer science. A wide variety of graph properties have been proposed in the literature, such as planarity~\cite{Planar-separator-theorem_LiptonTarjan79}, treewidth~\cite{bertele1972nonserial} and doubling dimension~\cite{DBLP:conf/focs/GuptaKL03}. By assuming these graph properties, one can often obtain better algorithmic or data structure solutions to graph theoretic problems.

The $c$-packedness property~\cite{DBLP:journals/dcg/DriemelHW12}, proposed in 2010, is a geometric property that captures the spatial distribution of the edges in a graph. A graph is $c$-packed if, for any positive real $r$ and any ball of radius $r$, the length of the edges contained in the ball is at most $c \cdot r$. Driemel, Har-Peled and Wenk~\cite{DBLP:journals/dcg/DriemelHW12} introduced the $c$-packedness property for polygonal curves, and showed that one can compute the Fr\'echet distance between a pair of $c$-packed curves in near-linear time. In 2013, Gudmundsson and Smid~\cite{DBLP:journals/comgeo/GudmundssonS15} adapted the $c$-packedness definition to graphs, and proposed a Fr\'echet distance data structure on $c$-packed trees with long edges that decides if there is a path in the tree with small Fr\'echet distance to a query curve. In 2023, Gudmundsson, Seybold and Wong~\cite{DBLP:conf/soda/GudmundssonSW23} generalised the result of~\cite{DBLP:journals/comgeo/GudmundssonS15} by proposing a Fr\'echet distance data structure for all $c$-packed graphs. 

Despite the recent interest in $c$-packedness, its utility so far has been limited to Fr\'echet distance problems. Moreover, the fundamental properties of $c$-packed graphs are not well studied, thereby limiting the number of problems that can be solved efficiently on $c$-packed graphs. An open problem is whether $c$-packed graphs have applications beyond Fr\'echet distance problems.

\subsection{Our Results}

We show that $c$-packed graphs lie in the intersection of two important graph classes, that is, doubling metrics and bounded treewidth graphs. Using the properties of doubling metrics and bounded treewidth graphs, one can obtain, for $c$-packed graphs, a linear-size well-separated pair decomposition, a linear-size exact distance oracle, a linear-size approximate distance oracle, and a constant-size tree cover. However, these constructions have processing times that are either $(i)$ randomised, $(ii)$ depend on the spread of the $c$-packed metric, or $(iii)$ are exponential in the treewidth. See Table~\ref{table:results}, Column~3\footnote{A reviewer pointed us to the tree-like properties of the graph and the possibility of adapting the work by Chazelle~\cite{Chazelle87a}. However, such adaptations would be likely to incur an exponential dependency on $c$.}.

We provide the first deterministic constructions that are independent of the spread of the $c$-packed metric, for the aforementioned structures. Moreover, our preprocessing times and data structure sizes are polynomial in both~$c$ and~$\varepsilon$, whereas previous constructions are not.

We summarise the main results of our paper. First, we show that any $c$-packed graph in $\mathbb R^d$ admits a well-separated pair decomposition (WSPD) of size $O((c^3/\varepsilon) \cdot n)$, where $1/\varepsilon$ is the separation constant of the WSPD. Note that we avoid the $(1/\varepsilon)^d$ factor that appears in the sizes of many other WSPD constructions~\cite{DBLP:journals/jacm/CallahanK95,Hierarchical_Nets_WSPD_randomised}. Then, we show that any $c$-packed graph in~$\mathbb R^d$ admits an $O(c)$-size separator. We use this separator to show that $c$-packed graphs have $O(c)$ treewidth, and admits an exact distance oracle (EDO) of size $O(cn \log n)$. Finally, we combine our WSPD and the EDO to construct a $(1+\varepsilon)$ distortion tree cover with $O(c^{2d+2}/ \varepsilon^{d})$ trees. Our tree cover implies an approximate distance oracle (ADO) of size $O((c^{2d+2} /\varepsilon^{d}) \cdot n)$ and $O(c^{2d+2} /\varepsilon^{d})$ query time. We summarise our results in Table~\ref{table:results}.

\begin{table}[ht]
\renewcommand{\arraystretch}{1.2}
\hspace*{-2.75cm}
\begin{minipage}{\textwidth}
\begin{tabular}{|c||c|c|c|c||c|c|c|}
    \hline
    & \multicolumn{4}{c||}{Previous} & \multicolumn{3}{c|}{New}  \\
    \hline
    &  & Preprocessing & Size & Source & 
    Preprocessing & Size & Source
    \\
    \hline
    \multirow{2}{*}{WSPD}
    & \multirow{2}{*}{$dd$}
    & $(1/\varepsilon)^{O(\log c)} \cdot n \log (\Delta n)$ 
     & \multirow{2}{*}{$(1/\varepsilon)^{O(\log c)} \cdot n$}
    & \cite{WSPD_doubling_dim_aspect_ratio_deterministic}
    & \multirow{2}{*}{$(c^3/\varepsilon) \cdot n \log n$}
    & \multirow{2}{*}{$(c^3/\varepsilon) \cdot n$}
    & \multirow{2}{*}{Thm~\ref{thm:GWSPD}}
    \\
    \cline{3-3} \cline{5-5}
    & & Rand. $(1/\varepsilon)^{O(\log c)} \cdot n \log n$ & 
    &\cite{Hierarchical_Nets_WSPD_randomised}
    & & &
    \\
    \hline
    EDO 
    & $tw$
    & $2^{O(c^3)} n$
    & $c^2 n$
    & \cite{ShortestPathQuery_Digraph_Chaudhuri&Zaroliagis}+\cite{DvorakN19_separator_treewidth}
    & $c^2 n \log^2 n$
    & $c n \log n$
    & Thm~\ref{thm:exact-distance-oracle}
    \\
    \hline
    Tree cover
    & $dd$
    & $(1/\varepsilon)^{O(\log c)} \cdot n \log^2 (\Delta n)$ 
     & $(1/\varepsilon)^{O(\log c)}$
    & \cite{tree-cover-doubling-metrics_BartalFN-JCSS2022} + \cite{WSPD_doubling_dim_aspect_ratio_deterministic}
    & {$(c^{2d+6}/\varepsilon^{d+1}) \cdot n \log n$}
    & {$c^{2d+2} /\varepsilon^{d}$}
    & {Thm~\ref{thm:tree-cover-c-packed}}
    \\
    \hline
   ADO
   & $dd$
    & $(1/\varepsilon)^{O(\log c)} \cdot n \log^2 (\Delta n)$ 
     & $(1/\varepsilon)^{O(\log c)} \cdot n$
    & \cite{tree-cover-doubling-metrics_BartalFN-JCSS2022} + \cite{WSPD_doubling_dim_aspect_ratio_deterministic}
    & $(c^{2d+6}/\varepsilon^{d+1}) \cdot n \log n$
    & $(c^{2d+2} /\varepsilon^{d}) \cdot n$
    & Thm~\ref{col:tree-cover-apprx-do}
    \\
    \hline
\end{tabular}
\end{minipage}
\vspace{10pt}
\caption{In the table, $c$ = the $c$-packedness value, $\varepsilon$ = either an $\varepsilon^{-1}$ separation constant or a~$(1+\varepsilon)$ approximation ratio, Rand. = randomised algorithm, $\Delta$ = spread of the doubling metric, $d$ =  dimension of the Euclidean space, $dd$ = previous results using doubling dimension, $tw$ = previous results using treewidth.}
\label{table:results}
\end{table}

\subsection{Related work}

The $c$-packedness property is a popular model for realistic curves. A wide range of Fr\'echet distance problems have been studied on $c$-packed curves, including map matching~\cite{Chen_map_match_lowdensegraph_cpackedcurve}, the mean curve~\cite{DBLP:journals/talg/Har-PeledR14}, the shortcut Fr\'echet distance~\cite{Driemel2013}, subtrajectory clustering~\cite{DBLP:conf/esa/BruningCD22,DBLP:conf/isaac/GudmundssonHRW23} and the approximate nearest neighbour data structure~\cite{DBLP:journals/cgt/ConradiDK24}. However, $c$-packed graphs are less well understood~\cite{DBLP:conf/soda/GudmundssonSW23,DBLP:journals/comgeo/GudmundssonS15}, despite often being used as a vital stepping stone towards a related property called $\lambda$-low density, which was also introduced by Driemel and Har-Peled~\cite{Driemel2013}. Low density graphs have been studied in map matching~\cite{Chen_map_match_lowdensegraph_cpackedcurve,buchin2023map}. The well-separated pair decomposition of~\cite{gudmundsson2024well} has size polynomial in $\lambda$ and $\varepsilon$, but its disadvantage over those stated in Table~\ref{table:results} is that it has size $O(n \log n)$. 

Well-Separated Pair Decompositions (WSPD) are used for compact representation of the quadratic distances between pairs of points in the metric. For metrics that allow for a sub-quadtratic size WSPD, they have therefore been used as fundamental tools to approximate solutions to a range of proximity problems that require looking at the distances between all pairs of points, such as nearest neighbour, diameter, stretch and minimum spanning tree. Not all metrics allow for a WSPD of subquadratic size, an example of which is the metric induced by a star tree with unit weight on all edges. For this metric, any WSPD requires a quadratic number of pairs. For a point set in $\mathbb R^d$, where $d$ is considered a constant, Callahan and Kosaraju~\cite{DBLP:journals/jacm/CallahanK95} showed that there exists a WSPD with separation factor $\sigma$ of size $O(\sigma^d n)$ that can be computed in $O(n \log n + \sigma^d n)$ time. In contrast to this, we show that for $c$-packed graphs, the size of the WSPD is not exponential in $d$, while maintaining that the size is linear. For graphs with bounded doubling dimension ($dim$), Har-Peled and Mendel~\cite{Hierarchical_Nets_WSPD_randomised} designed a $O(2^{O(dim)} n \log n + n\varepsilon^{-O(dim)})$ expected time randomised algorithm to construct a WSPD of linear size with logarithmic query time. They also designed a deterministic construction which incurs a logarithmic dependency on the aspect ratio of the metric space. In this paper we show that a $c$-packed graph has doubling dimension $O(\log c)$ (see Lemma~\ref{lem:doublingProof}). However, in contrast to previous results, our deterministic construction of the WSPD is, to the best of our knowledge, the first that does not depend on the aspect ratio of the metric space and does not incur any factors exponential in the dimension, assuming constant dimension. 

Distance oracles are shortest path data structures for graphs. For planar graphs, Lipton and Tarjan~\cite{Planar-separator-theorem_LiptonTarjan79,planar-separator-applications_LiptonT80} proved the planar separator theorem, which states that any planar graph with~$n$ vertices has a balanced separator of size~$\sqrt n$. Using the separator to build a separator hierarchy, they constructed an exact distance oracle of size $O(n \sqrt n)$. Thorup~\cite{DBLP:journals/jacm/Thorup04} and Klein~\cite{DBLP:conf/soda/Klein02} independently presented $(1+\varepsilon)$-approximate distance oracles for planar graphs of size $O(n \log n)$, for any constant $\varepsilon > 0$. Subsequent works~\cite{DBLP:conf/icalp/KawarabayashiKS11,DBLP:conf/soda/KawarabayashiST13,DBLP:conf/soda/Wulff-Nilsen16,DBLP:conf/focs/LeW21} improved the preprocessing time, space, and query time. Dvořák and Norin~\cite{DvorakN19_separator_treewidth} showed that if a graph admits a small size balanced separator, it also has small treewidth. Combined with our separator results this implies that $c$-packed graphs have treewidth $O(c)$. Chaudhuri and Zaroliagis\cite{ShortestPathQuery_Digraph_Chaudhuri&Zaroliagis} designed an exact distance oracle whose preprocessing time is single exponential in the treewidth of the graph. In contrast to these results, our algorithms do not incur any terms exponential in $c$.

Approximate distance oracles have been studied on non-planar graphs. Thorup and Zwick~\cite{ApprxDistanceOracle_ThorupZ05} constructed a $(2k-1)$-approximate distance oracle of $O(kn^{1+1/k})$ size on any general graph, and showed that the size bound is optimal under the Erd\"os girth conjecture. Gudmundsson, Levcopoulos, Narasimhan and Smid~\cite{DBLP:journals/talg/GudmundssonLNS08} provided a $(1+\varepsilon)$-approximate distance oracle of size $O(n \log n)$ on any $t$-spanner, assuming $\varepsilon, t > 0$ are constants. Gao and Zhang~\cite{DBLP:journals/siamcomp/GaoZ05} constructed a $(1+\varepsilon)$-approximate distance oracle of size $O(n \log n)$ for any unit-disk graph, assuming $\varepsilon > 0$ is a constant.

Balanced separators of sublinear size have been found for planar graphs~\cite{Planar-separator-theorem_LiptonTarjan79}, graphs of bounded genus~\cite{Separator-bounded-genus_GilbertHT84}, graphs that exclude a fixed minor~\cite{Separator-exclude-minor_AlonST90}, and graphs that are the 1-skeletons of simplicial complexes in 3-dimensions~\cite{Separator-simplicial-complex_MillerT90}. They have been used as a fundamental tool in devising efficient algorithms for graphs~\cite{Separator-exclude-minor_AlonST90, Separator-bounded-genus_GilbertHT84, planar-separator-applications_LiptonT80} and in numerical analysis~\cite{incomplete-nested-dissection_KyngPSZ18, Separator-simplicial-complex_MillerT90}. Randomized balanced separators can be found in expected linear time for $c$-packed graphs~\cite{DBLP:journals/siamcomp/Har-PeledQ17}. 

Metric embeddings approximate harder metric spaces by simpler, well-structured metric spaces, such as tree metrics. A tree cover for a finite metric space is a small number of trees such that the distance between any two points in the metric is preserved with low distortion in at least one of the trees. Tree covers with $(1+\epsilon)$-distortion and a constant number of trees have been found for Euclidean metrics~\cite{dumbbell-theorem-Euclidean-stoc95}, doubling metrics~\cite{tree-cover-doubling-metrics_BartalFN-JCSS2022} and planar metrics~\cite{tree-cover-planar-metrics_ChangEtal-focs2023}.

\section{Preliminaries}
Let $G(V,E)$ be a geometric graph in $\mathbb{R}^d$ consisting of the point set $V$ and edge set $E$. In this paper, we consider graphs in a fixed $d$-dimensional space that satisfy $c$-packedness. 
We denote the graph distance between two nodes $u,v \in V$ as $\gdist{u}{v}$, and their Euclidean distance as $\dist{u}{v}$. 
We first define a Well-Separated Pair Decomposition~\cite{DBLP:journals/jacm/CallahanK95} for Euclidean point sets, followed by its counterpart for geometric graphs.

\begin{definition}[Geometric Well-Separated Pair] Let $\sigma$ be a real positive number, and let $A$ and $B$ be two finite sets of points in $\mathbb{R}^d$. We say that $A$ and $B$ are \emph{well-separated with respect to $\sigma$} if the distance between the bounding boxes $C_A$ and $C_B$ of $A$ and $B$, respectively, is at least $\sigma \cdot \max(\radius{C_A}, \radius{C_B})$.
\end{definition}
Here $\radius{C_A}$ refers to the radius of $C_A$, which is half its diameter.

\begin{definition}[Geometric Well-Separated Pair Decomposition (WSPD)]
    Let $S$ be a set of $n$ points in $\mathbb{R}^d$, and let $\sigma$ be a real positive number. A \emph{well-separated pair decomposition} (WSPD) for $S$, with respect to $\sigma$, is a sequence $\{A_1, B_1\}, \ldots \{A_m,B_m\}$ of pairs of nonempty subsets of $S$, for some integer $m$, such that (i) for each $i$ with $1\leq i\leq m$, $A_i$ and $B_i$ are well separated w.r.t. $\sigma$, and (ii) for any two distinct points $p$ and $q$ of $S$, there is exactly one index $i$ with $1\leq i \leq m$, such that $p\in A_i$ and $q\in B_i$, or vice versa.
\end{definition}

The notion of well-separated pairs and WSPD can easily be extended to graphs. For the graph version we will simply replace all the geometric distances with the distances in the graph. We will refer to a WSPD in a graph $G$ as WSPD$_G$.

\section{Technical Overview}
\label{section:technical_overview}

In Section~\ref{technical_overview:GWSPD}, we describe our deterministic WSPD$_G$ construction. In Section~\ref{technical_overview:SmallSeparator}, we discuss our separator theorem. 
In Section~\ref{technical_overview:Dist_oracles}, we describe our exact distance oracle of~$O(n \log n)$ size.
Finally, in Section~\ref{ssec:overview-tree-cover} we show how to use the WSPD$_G$ and the exact distance oracle to construct a $(1+\varepsilon)$-distortion tree cover of size $O(c^{2d+2}(\frac{1}{\varepsilon})^d)$ for $c$-packed metrics.
 
\subsection{A Well-Separated Pair Decomposition for \texorpdfstring{$c$}{c}-packed Graphs}
\label{technical_overview:GWSPD}
  
Split trees~\cite{DBLP:journals/jacm/CallahanK95} or quadtrees~\cite{spanner-book} are commonly used in the construction of a geometric WSPD of a point set. An essential property of these is that the maximum Euclidean distance between two points contained in the same cell is always bounded by a function of the diameter of the cell. We construct a tree that fulfills a similar purpose to the trees above but for graph distances between points. We call this new type of tree a \emph{$\delta$-connected tree} ($\delta$-CT). Each cell, corresponding to a cube $s$, of the $\delta$-connected tree is a $\delta$-connected set, meaning that points contained in the cell are within a graph distance of at most $\delta \cdot \diam s$ from one another. To construct the $\delta$-CT, we use a bottom up approach. The leaves of the compressed quadtree are already $\delta$-connected sets. At higher levels of the compressed quadtree, we consider the $\delta$-connected sets of its children, and merge together pairs of previously $\delta$-connected sets that are also a $\delta$-connected set in the higher level. To obtain an efficient running time, we make two observations. First, when computing the $\delta$-connected sets for a higher level, it suffices to maintain a vertex representative for each $\delta$-connected set of the lower level. Second, to check if a pair of sets are $\delta$-connected, it suffices to check whether their representatives are path-connected in the cube centred at the cell but with double its radius. 

To upper bound the graph diameter of the $\delta$-connected set in each cell of the $\delta$-CT we compute the length of intersection of edges with the cell and the $3^d$ surrounding cells in a canonical grid. To do this efficiently, we note that not all edges intersecting with a cell can contribute to a $\delta$-connected component inside the cell. We therefore construct a data structure that can be queried for the total length of all edges that can contribute to a $\delta$-connected component contained in a cell.

Combining these ideas obtains the following theorem. For details refer to Section~\ref{subsec:construction}.

\begin{restatable}{thm}{GWSPDthm}
\label{thm:GWSPD}
    Given a $c$-packed graph $G$ in $\mathbb{R}^d$, for fixed $d$, one can construct a WSPD$_G$ with separation factor $\sigma$ of size $O(c^{3} \sigma n)$ in $O(cn \log n + c^3\sigma n)$ time, using $O(cn)$ space.
\end{restatable}

\subsection{A Separator Theorem for \texorpdfstring{$c$}{c}-packed Graphs}
\label{technical_overview:SmallSeparator}

We prove that every $c$-packed graph admits a balanced vertex separator of size~$O(c)$. We start with the ring separator of Har-Peled and Mendel~\cite{Hierarchical_Nets_WSPD_randomised}, which states that for a point set in $\mathbb R^d$, one can efficiently compute a pair of balls so that $n/2\lambda^3$ of the points are inside the inner ball, and $n/2\lambda^3$ of the points are outside the outer ball, where $\lambda$ is the doubling constant of $\mathbb{R}^d$. Using the ring separator, we construct a max-flow instance in a similar fashion to Gudmundsson et al.~\cite{DBLP:conf/soda/GudmundssonSW23} to locate a cut of size~$O(c)$. This cut $(1 - 1/2\lambda^3)$-separates the graph, in that it separates the graph into two components each with at most $n \cdot (1-1/2\lambda^3)$ points. We obtain the following theorem, for details refer to Section~\ref{sec:SmallSeparator}.

\begin{restatable}{thm}{separator}
\label{thm:separator-technical-overview}
        Given a $c$-packed graph $G$ in $\mathbb R^d$, where $d$ is fixed, with $n$ vertices, one can find a separator of size $O(c)$ that $(1-\frac{1}{2\lambda^3})$-separates $G$, in $O(c^2n)$ time.
\end{restatable}

\subsection{Distance Oracles for \texorpdfstring{$c$}{c}-packed Graphs}
\label{technical_overview:Dist_oracles}

We can combine our separator with standard techniques to construct an exact distance oracle. We recursively apply the separator theorem to obtain a separator heirarchy~\cite{planar-separator-applications_LiptonT80}, which is a balanced binary tree decomposition. Each internal node stores a separator and a shortest path tree rooted at each vertex in the separator. At query time, we perform a lowest common ancestor query to find the appropriate internal node and use the shortest path trees to compute the exact shortest path distance. Putting this together, we obtain Theorem~\ref{thm:exact-distance-oracle}. For details refer to Section~\ref{sec:Dist_oracles}. 

\begin{restatable}{theorem}{exactdistoracle}
\label{thm:exact-distance-oracle}
    Given any $c$-packed graph $G$ with $n$ vertices, using $O(c^2n\log n+cn\log^2 n)$ preprocessing time and $O(cn\log n)$ space, a distance query between any two vertices in $G$ can be answered in $O(c\log n)$ time.
\end{restatable}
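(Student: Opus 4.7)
The plan is to build a separator hierarchy by recursive application of Theorem~\ref{thm:separator-technical-overview}, storing local shortest path data at each level so that every shortest path in $G$ can be decomposed across some ancestor's separator. First, I would construct a binary hierarchy tree $T$ whose root is associated with $G$ itself; for each internal node $X$ with associated subgraph $G_X = G[V_X]$, I invoke the separator theorem to obtain a balanced separator $S_X$ of size $O(c)$ and recurse on $G[V_{X,L}]$ and $G[V_{X,R}]$. Because the balance factor $1-1/(2\lambda^3)$ is a constant (with $\lambda$ the doubling constant of $\mathbb{R}^d$), the depth of $T$ is $O(\log n)$. At every internal node $X$ and every $s\in S_X$, I would run Dijkstra on $G_X$ to produce a shortest path tree rooted at $s$ and store $d_{G_X}(s,v)$ for each $v\in V_X$.

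For the complexity analysis, I would use that a $c$-packed graph has $O(cn)$ edges, since its treewidth is $O(c)$ by combining Theorem~\ref{thm:separator-technical-overview} with~\cite{DvorakN19_separator_treewidth}, and that $G_X$ remains $c$-packed as a subgraph and hence has $O(cn_X)$ edges on $n_X$ vertices. Each Dijkstra invocation then costs $O(cn_X+n_X\log n_X)$, and since $|S_X|=O(c)$, node $X$ contributes $O(c^2n_X+cn_X\log n_X)$ work. The subproblem sizes at any single level sum to $O(n)$, and multiplying by the $O(\log n)$ levels yields the claimed preprocessing bound $O(c^2n\log n + cn\log^2 n)$; the calls to the separator routine fit comfortably within this budget. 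Each vertex stores $O(c)$ scalars per level, for total space $O(cn\log n)$.

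For a distance query between $u$ and $v$, I would first locate the lowest common ancestor $X_{\mathrm{LCA}}$ in $T$ of the leaves containing $u$ and $v$ using standard constant-time LCA machinery, and then return the minimum of $d_{G_X}(s,u)+d_{G_X}(s,v)$ over all ancestors $X$ of $X_{\mathrm{LCA}}$ and all $s\in S_X$. Each of the $O(\log n)$ ancestors contributes $O(c)$ work, giving a total query time of $O(c\log n)$.

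The main obstacle will be the correctness argument, since the precomputed distances are taken inside subgraphs rather than in $G$ itself and are therefore only upper bounds on the true distance. I would resolve this by letting $P^*$ be a shortest $u$-$v$ path in $G$ and taking $X^*$ to be the lowest node of $T$ with $P^*\subseteq V_{X^*}$, which is well-defined because $V_R=V$. Then $d_{G_{X^*}}(u,v)=d_G(u,v)$, and by the minimality of $X^*$ the path $P^*$ cannot lie entirely inside $V_{X^*,L}$ nor entirely inside $V_{X^*,R}$, so it must visit some $s\in S_{X^*}$. This yields $d_{G_{X^*}}(s,u)+d_{G_{X^*}}(s,v)=d_G(u,v)$, while the inequality $d_{G_X}(s,u)+d_{G_X}(s,v)\ge d_G(u,v)$ holds at every ancestor $X$, so the minimum computed by the query procedure is exactly $d_G(u,v)$.
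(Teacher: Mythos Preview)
Your proposal is correct and follows essentially the same separator-hierarchy approach as the paper: recursive application of Theorem~\ref{thm:separator-technical-overview}, Dijkstra-based shortest path trees from each separator vertex inside the current subgraph, constant-time LCA, and a scan over $O(\log n)$ ancestors each contributing $O(c)$ candidates. The one minor imprecision is that a vertex $u$ lying in some separator $S_X$ never reaches a leaf of $T$, so ``the leaf containing $u$'' should be replaced by ``the node of $T$ whose separator (or leaf subgraph) contains $u$,'' exactly as the paper does; with that adjustment your correctness argument via the lowest ancestor $X^*$ whose vertex set contains a fixed shortest path $P^*$ goes through unchanged.
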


\subsection{A Small Tree Cover for \texorpdfstring{$c$}{c}-packed Graphs}
\label{ssec:overview-tree-cover}

Our approach follows that of the celebrated ``Dumbbell Theorem''~\cite{dumbbell-theorem-Euclidean-stoc95} for Euclidean metrics. For $c$-packed graphs, we construct a linear number of dumbbells from the WSPD$_G$ in Corollary~\ref{thm:GWSPD-graph-distance} which is constructed using the graph distances. The dumbbells are partitioned into groups, each of which satisfies the \emph{length-group property} and the \emph{empty-region property} with respect to the graph distance. The $c$-packedness property enables us to partition the dumbbells into a small number (depending only on the packedness value $c$ and the separation ratio $\sigma$ of WSPD$_G$) of groups, each of which satisfies the empty-region property. The main difficulty is proving the packing lemmas required for establishing the empty-region property. A dumbbell tree, which connects the dumbbells in a group hierarchically, is built for each group of dumbbells. The $c$-packedness property and the $c$-CT also enable us to do range searching and efficiently build the dumbbell trees. The dumbbell trees together constitute a tree cover for the $c$-packed metric. We obtain the following theorem. For details refer to~Section~\ref{sec:tree_cover}.

\begin{restatable}{theorem}{treecovermain}
\label{thm:tree-cover-c-packed}
    Given a $c$-packed graph $G$ in $\R^d$ of fixed $d$ and any $0<\varepsilon<1$. In $O(n\log n(c^{2d+6}(\frac{1}{\varepsilon})^{d+1}+c\log n))$ time, one can construct $O(c^{2d+2}(\frac{1}{\varepsilon})^d)$ dumbbell trees, each of size $O(n)$, such that the dumbbell trees constitute a $(1+\varepsilon)$-distortion tree cover for the graph metric induced by $G$.
\end{restatable}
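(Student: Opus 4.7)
The plan is to adapt the Dumbbell Theorem of Arya, Das, Mount, Salowe and Smid~\cite{dumbbell-theorem-Euclidean-stoc95} to the graph-metric setting, using the graph-distance WSPD (Corollary~\ref{thm:GWSPD-graph-distance}) in place of the Euclidean WSPD and the $c$-packedness property in place of the usual Euclidean volume packing. First I would invoke the graph-distance WSPD with separation constant $\sigma = \Theta(1/\varepsilon)$, yielding a collection of $O((c^3/\varepsilon)\,n)$ well-separated pairs. From each pair $(A_i,B_i)$ I extract a \emph{dumbbell} $D_i = (a_i, b_i, \ell_i)$ with representatives $a_i \in A_i$, $b_i \in B_i$ and handle length $\ell_i = \gdist{a_i}{b_i}$; well-separatedness forces $\ell_i$ to match the graph-metric diameters of $A_i$ and $B_i$ up to a $1 + O(\varepsilon)$ factor.

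Next I would partition the dumbbells into classes satisfying the two properties required by the dumbbell framework. A length-grouping step sorts the handles by $\lceil \log_{1+\varepsilon} \ell_i \rceil$ and splits consecutive scales so that within each \emph{length group} all handles are within a bounded factor of each other (the length-group property). Inside one length group I subdivide further using the \emph{empty-region property}: two dumbbells go into different subgroups whenever the ball of one lies too close to the enlarged region of the other. For each resulting subgroup I hierarchically link dumbbells into a dumbbell tree by connecting endpoints that lie within $O(\ell_i)$ of each other in the graph metric, evaluated via the exact distance oracle from Theorem~\ref{thm:exact-distance-oracle} in $O(c \log n)$ per distance query. The same telescoping argument as in~\cite{dumbbell-theorem-Euclidean-stoc95} then shows that for every WSPD pair assigned to a subgroup, the corresponding tree preserves $\gdist{a_i}{b_i}$ up to a $1 + O(\varepsilon)$ factor, and since every vertex pair lies in exactly one WSPD pair, rescaling $\varepsilon$ gives a $(1+\varepsilon)$-distortion tree cover.

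The main obstacle will be the \emph{packing lemma} bounding the number of subgroups by $O(c^{2d+2}\varepsilon^{-d})$. In Euclidean space this comes from a ball-volume argument: few balls of comparable radius can be packed near a fixed ball. Here the correct surrogate is $c$-packedness, which caps the total edge length inside any graph-metric ball of radius $r$ by $c \cdot r$. The plan is to show that for a fixed dumbbell $D_i$, every competing dumbbell in the same length group that violates the empty-region property has representatives inside a graph ball of radius $O(\ell_i / \varepsilon)$ around $a_i$ or $b_i$, and therefore is incident to edges drawn from a pool of total length $O(c \ell_i / \varepsilon)$. Combining this edge-budget bound with the WSPD's $\sigma$-separation, which prevents representatives from clustering too densely, and iterating the argument over $d$ coordinate directions and over the two endpoints of a handle, yields the $O(c^{2d+2}\varepsilon^{-d})$ bound on subgroups. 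This is precisely where the $d$-dependence in $\varepsilon^{-d}$ enters, and I expect the delicate step to be balancing the separation factor $\sigma$ against the directional discretisation so that the $c$-packedness bound is tight.

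For the running time I would account for the WSPD$_G$ construction at $O((c^3/\varepsilon)\,n \log n)$ (Theorem~\ref{thm:GWSPD}); the partition into length groups and empty-region subgroups, which requires range queries supported by the $\delta$-CT of Section~\ref{subsec:construction} at cost proportional to the per-dumbbell packing bound; and the distance oracle calls inside the hierarchical linking phase, which contribute the $cn \log^2 n$ term. Summing over all $O(c^{2d+2}\varepsilon^{-d})$ subgroups and multiplying by the per-subgroup work yields the advertised $O(n \log n \cdot (c^{2d+6}\varepsilon^{-(d+1)} + c \log n))$ total time, with each dumbbell tree inheriting the linear size guarantee from the standard construction.
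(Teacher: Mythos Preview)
Your high-level plan---adapt the Dumbbell Theorem using the graph-distance WSPD$_G$ of Corollary~\ref{thm:GWSPD-graph-distance}, partition the dumbbells by length-grouping and empty-region properties, and build one dumbbell tree per group---is exactly what the paper does in Section~\ref{sec:tree_cover}, and your running-time bookkeeping is broadly right. The gap is in the packing lemma, which you correctly flag as the crux but whose proposed proof does not work. An edge-length budget (``total edge length in a graph-metric ball of radius $O(\ell_i/\varepsilon)$ is $O(c\ell_i/\varepsilon)$'') does not bound the number of competing dumbbell \emph{heads}: arbitrarily many vertices, and hence arbitrarily many $c$-CT nodes serving as heads, can sit in a small region without violating $c$-packedness, and distinct heads need not consume disjoint edges. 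The phrase ``iterating over $d$ coordinate directions'' does not correspond to a concrete step, and in the graph metric there is no volume argument that lets $\sigma$-separation alone cap the count.

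The paper's mechanism is structurally different. A head is a node of the $c$-CT, and the count is obtained by a quadtree-cell argument: Lemma~\ref{lem:head-size} shows that the \emph{parent} quadtree cell of any head of a length-$\ell$ dumbbell has side length at least $\ell/(3^d c\sqrt d(\sigma{+}4))$; Lemma~\ref{lem:hypercube-intersecting} then bounds, for any hypercube of side $\ell'$, the number of pairwise-disjoint $c$-CT nodes with parent cell this large whose own cell meets the hypercube by $c(2\alpha{+}2)^d$. Since candidate heads are not pairwise disjoint (an ancestor and a descendant in the $c$-CT can both appear), Lemma~\ref{lem:edges-on-path-between-heads} caps the ancestor--descendant depth between two similar-length heads by $O(d+\log c)$, and Lemma~\ref{lem:main-packing-tc} layers these ingredients to get the $O(c^{2d+2}\sigma^d(1+\gamma\sigma)^d)$ bound. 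You will need these $c$-CT structural lemmas (or equivalents) rather than an edge-budget argument to close the proof.
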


The tree cover of Theorem~\ref{thm:tree-cover-c-packed} immediately implies a $(1+\varepsilon)$-approximate distance oracle for the $c$-packed metric.
\begin{restatable}{corollary}{apprxDOmain}\label{col:tree-cover-apprx-do}
    Given a $c$-packed graph $G$ in $\R^d$ of fixed $d$, and any $0<\varepsilon<1$, one can preprocess it in $O(n\log n(c^{2d+6}(\frac{1}{\varepsilon})^{d+1}+c\log n))$ time, using $O(c^{2d+2}(\frac{1}{\varepsilon})^dn)$ space, to answer a $(1+\varepsilon)$-approximate distance query between any two vertices in $G$ in $O(c^{2d+2}(\frac{1}{\varepsilon})^d)$ time.
\end{restatable}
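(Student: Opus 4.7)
The plan is to reduce Corollary~\ref{col:tree-cover-apprx-do} to Theorem~\ref{thm:tree-cover-c-packed} by augmenting each of the dumbbell trees with a standard linear-time, constant-query tree distance oracle, and then answering a query by taking the minimum distance reported across all trees in the cover.

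First I would invoke Theorem~\ref{thm:tree-cover-c-packed} to construct, in $O(n\log n\,(c^{2d+6}(1/\varepsilon)^{d+1}+c\log n))$ time, a collection $\mathcal{T}$ of $k=O(c^{2d+2}(1/\varepsilon)^d)$ dumbbell trees $T_1,\dots,T_k$, each on $n$ vertices and each of size $O(n)$, with the property that for every pair $u,v\in V$ there exists some $T_i$ in which $d_{T_i}(u,v)\le (1+\varepsilon)\,\gdist{u}{v}$, while $d_{T_i}(u,v)\ge \gdist{u}{v}$ holds in every tree (since dumbbell trees are spanning subtrees of the graph metric, their distances dominate the graph distances). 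Second, I would preprocess each $T_i$ with a standard weighted-tree distance oracle: root $T_i$ arbitrarily, compute the depth (sum of edge weights from the root) of every vertex in linear time, and build a constant-time lowest common ancestor data structure such as that of Harel and Tarjan. Then $d_{T_i}(u,v)=\mathrm{depth}(u)+\mathrm{depth}(v)-2\,\mathrm{depth}(\mathrm{lca}(u,v))$ is answered in $O(1)$ time.

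To answer a $(1+\varepsilon)$-approximate distance query between $u$ and $v$, I would compute $d_{T_i}(u,v)$ for each $i\in\{1,\dots,k\}$ and return the minimum. Since all tree distances are $\ge \gdist{u}{v}$ and at least one $T_i$ gives $d_{T_i}(u,v)\le (1+\varepsilon)\gdist{u}{v}$, the minimum is a valid $(1+\varepsilon)$-approximation. The query cost is $O(k)=O(c^{2d+2}(1/\varepsilon)^d)$.

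Accounting for resources: the $k$ auxiliary tree oracles cost $O(n)$ preprocessing and $O(n)$ space each, contributing $O(kn)=O(c^{2d+2}(1/\varepsilon)^d\,n)$ space and preprocessing time, which is absorbed by the tree cover construction bound from Theorem~\ref{thm:tree-cover-c-packed}. Hence the total preprocessing time remains $O(n\log n\,(c^{2d+6}(1/\varepsilon)^{d+1}+c\log n))$, the total space is $O(c^{2d+2}(1/\varepsilon)^d\,n)$, and the query time is $O(c^{2d+2}(1/\varepsilon)^d)$, as claimed. There is no real obstacle here: the only mild subtlety is verifying that each dumbbell tree's metric dominates the graph metric so that the minimum across trees is a safe upper estimate, which follows from the dumbbell trees being built on graph-metric distances in Theorem~\ref{thm:tree-cover-c-packed}.
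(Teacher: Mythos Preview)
Your approach is correct and is exactly what the paper intends: the corollary is stated in the paper as an immediate consequence of Theorem~\ref{thm:tree-cover-c-packed}, with no separate proof, and the standard reduction you describe (preprocess each tree with an LCA-based $O(1)$-query distance oracle, return the minimum over all trees) is the intended one.

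One small imprecision worth flagging: dumbbell trees are not literally ``spanning subtrees of the graph metric'' (they contain auxiliary dumbbell and head nodes in addition to the leaves corresponding to $V(G)$), so that phrasing is misleading. The correct justification for domination is that each edge of a dumbbell tree is weighted by the graph distance between the representatives of its endpoints, and the leaf representatives are the points themselves; hence the tree path from the leaf storing $p$ to the leaf storing $q$ is a sum of graph distances along a walk from $p$ to $q$, which by the triangle inequality is at least $\gdist{p}{q}$. With that adjustment your argument is complete.
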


\begin{toappendix}
\section{Doubling dimension of $c$-packed graphs}
\begin{lemma}
\label{lem:doublingProof}
    The doubling constant of a $c$-packed graph $G$ is at most $4c+1$.
\end{lemma}
\begin{proof}
    Let $G$ be a $c$-packed graph in $\R^d$. Let $(X,d_G)$ denote the metric induced by the graph distance in $G$ where the elements of $X$ are the points on $G$ (including all the points on the edges). Let $p$ be a point on $G$ and let $\mathbf{B}_G(p,r)$ denote the metric ball in $X$ with center $p$ and radius $r$. In the following, we prove that $\mathbf{B}_G(p,r)$ can be covered by at most $4c+1$ metric balls of radius $r/2$, thus proving the lemma.

    Let $SPT(p)$ be the shortest path tree of $p$ on $G$, including shortest paths from $p$ to all other points on $G$. Let $SPT(p,r)$ be the subtree of $SPT(p)$ which only includes points within distance $r$ to $p$. Thus $SPT(p,r)$ contains the shortest paths from $p$ to all points in $\mathbf{B}_G(p,r)$. The longest path in $SPT(p,r)$ from $p$ to a leaf point is called the \emph{axis} of $SPT(p,r)$. A subtree of $SPT(p,r)$ whose root is on the axis of $SPT(p,r)$ is called a \emph{branch} of $SPT(p,r)$. Similarly, the longest path in the subtree from its root to a leaf point is called the axis of the subtree. In Figure~\ref{fig:doubling-dimension}, the axis of $SPT(p,r)$ is the path from $p$ to $d$. The subtree rooted at $p_1$ is a branch of $SPT(p,r)$ whose axis is $p_1q_1$.

    \begin{figure}[H]
	\begin{center}
		\includegraphics[width=.85\textwidth]{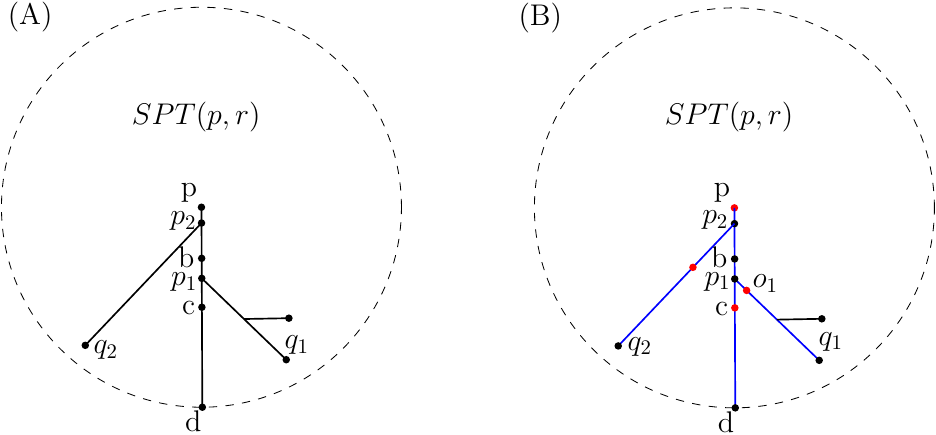}
		\caption{(a) The $SPT(p,r)$. The path from $p$ to $d$ is the axis of $SPT(p,r)$, the subtree rooted at $p_1$ is a branch of $SPT(p,r)$. (b) The chosen centers are drawn in red.}
		\label{fig:doubling-dimension}
	\end{center}
    \end{figure} 
    
    With a bit abuse of notation, let $pd$ denote the path from $p$ to $d$. Without loss of generality, assume that $|pd|=r$ ($|pd|$ is the length of $pd$). Let $b$ be the point on the axis $pd$ such that $|pb|=\frac{r}{4}$ and let $c$ be the point on $pd$ such that $|pc|=\frac{r}{2}$. Next we choose the centers of the balls of radius $r/2$ which together cover $\mathbf{B}_G(p,r)$.

    Choose $c$ as a center. For each branch whose root is on path $bc$ and whose axis has length at least $r/4$, choose centers on it recursively as follows. Let $Br$ be such a branch. If the length of its axis is at most $r/2$, choose its root as a center and stop. Otherwise the length of its axis is between $r/2$ and $3r/4$. Let $p_iq_i$ be its axis where $p_i$ is the root of the branch. Choose $o_i$ on $p_iq_i$ where $|o_iq_i|=r/2$ as a center. For each branch of $Br$ whose root is on $p_io_i$ and whose axis has length at least $r/4$, choose centers on the branch recursively as above. For example in Figure~\ref{fig:doubling-dimension}(b), the branch rooted at $p_1$ has axis $p_1q_1$ whose length is longer than $r/2$. Choose $o_1$ on $p_1q_1$ as a center where $|o_1q_1|=r/2$. For each branch whose root is on $p_1o_1$ and whose axis has length at least $r/4$, choose centers on the branch using the above recursive procedure.

    Then choose $p$ as a center. For each branch whose root is on $pb$ and whose axis has length at least $r/4$, choose centers on it recursively as follows. Let $BH$ be such a branch. If the length of its axis is at most $r/2$, choose its root as a center and stop. Otherwise the length of its axis is between $r/2$ and $r$. Let $p_jq_j$ be its axis where $p_j$ is its root. If $|p_jq_j|\leq 3r/4$, choose the centers on $BH$, using the recursive procedure in the last paragraph (for the branch whose axis is $p_iq_i$). Otherwise $3r/4 < |p_jq_j|$ and let $t_j$ be the point on $p_jq_j$ where $t_jq_j=3r/4$. Choose $o_j$ on $p_jq_j$ where $|o_jq_j|=r/2$ as a center. For each branch of $BH$ whose root is on $t_jo_j$ and whose axis has length at least $r/4$ (and at most $3r/4$), choose centers on it using the recursive procedure in the last paragraph (for the branch whose axis is $p_iq_i$). For each branch of $BH$ whose root is on $p_jt_j$ and whose axis has length as least $r/4$, choose centers on it using the recursive procedure in this paragraph.
    
    We can prove that the algorithm chooses centers such that the balls of radius $r/2$ centered at the centers cover all the points in $\mathbf{B}_G(p,r)$. The ball (of radius $r/2$) centered at $c$ covers points on path $cd$ and all the branches of $SPT(p,r)$ whose roots are on $cd$. The ball centered at $c$ also covers points on $bc$ and all the branches of $SPT(p,r)$ whose roots are on $bc$ and whose axis have length at most $r/4$. For any branch of $SPT(p,r)$ whose root is on $bc$ and whose axis has length at least $r/4$, if its axis has length at most $r/2$, the branch is covered by the ball centered at its root, otherwise its axis has length between $r/2$ and $3r/4$ and the branch is covered by the balls centered at the centers chosen on the branch. Using a similar argument, any branch of $SPT(p,r)$ whose center is on $pb$ and whose axis has length at least $r/4$, the branch is covered by the balls centered at the centers chosen on the branch. The ball centered at $p$ covers points on $pb$ and all the branches whose centers are on $pb$ and whose axis have length at most $r/4$. Therefore all the points in $\mathbf{B}_G(p,r)$, are covered.

    Finally we bound the number of centers chosen by the algorithm. Observe that except the center $p$, the center $c$ and the centers chosen on the branches, each corresponds to a disjoint path on $SPT(p,r)$ with length at least $r/4$. For example in Figure~\ref{fig:doubling-dimension}(b), $c$ corresponds to path $cd$ and $o_1$ corresponds to path $o_1q_1$. By the $c$-packedness property, the total edge length of $SPT(p,r)$ is at most $cr$. Therefore the number of centers, including $c$ and the centers chosen on the branches, is at most $4c$. Plus the center $p$, the total number of centers chosen is at most $4c+1$.
\end{proof}
\end{toappendix}
\section{A Well-Separated Pair Decomposition for \texorpdfstring{$c$}{c}-packed Graphs}
\label{sec:GWSPD}
Given a $c$-packed graph $G(V,E)$ in $d$ dimensions, where $d$ is fixed, and a positive constant $\sigma$, we show how to deterministically construct a linear-size \GWSPD{$\sigma$} (Section~\ref{subsec:construction}).
Section~\ref{subsec:sec4_notation} introduces some important terminology. In Section~\ref{subsec:construction} we show that one can construct such a well-separated pair decomposition in $O(n \log n + c^3\sigma n)$ time.

\subsection{Notation and Preliminaries}
\label{subsec:sec4_notation}
Given a geometric graph $G(V,E)$ and cube $s$ in $\mathbb{R}^d$, we define the diameter of $s$ to be the length of a diagonal and denote it as $\diam{s}$. We define the radius of $s$ to be half the diameter and denote it as $\radius{s}$. We define $V(s)$ to be the subset of vertices of $V$ within $s$. We will need the following definitions.

\begin{definition}[$\delta$-Connected Set]
    \label{defn:delta_connected_set}
    Given a geometric graph $G(V,E)$ and a cube $s$ in $\mathbb{R}^d$. Let $s^+$ be the concentric cube with twice the diameter. Two vertices $u,v\in V(s)$ are \emph{$\delta$-connected} if there is a path between $u$ and $v$ that lies within $s^+$ and has length at most $\delta \cdot \diam{s}$. We say that a set of vertices $C \subseteq V(s)$ is a $\delta$-connected set with respect to $s$ if all pairs of vertices in $C$ are $\delta$-connected, and no vertex in $V(s)\setminus C$ is $\delta$-connected to a vertex in $C$.
\end{definition}

\begin{definition}[Partition into $\delta$-Connected Sets]
    Let $s$ be a cube in $\mathbb{R}^d$. A partition $\Psi_{\delta}(V(s)) = \{C_1,... C_k\}$ of $V(s)$ into $\delta$-connected sets is a set of $k$ disjoint subsets of $V(s)$ that satisfies the following properties:
    \begin{enumerate}
        \item $\bigcup_{1\leq i \leq k}C_i = V(s)$.
        \item For all $i$ s.t. $1\leq i \leq k$, $C_i$ is a $\delta$-connected set with respect to $s$.
    \end{enumerate}
\end{definition}

\subsection{Constructing a WSPD for \texorpdfstring{$c$}{c}-packed graphs}
\label{subsec:construction}
Split trees~\cite{DBLP:journals/jacm/CallahanK95} or quadtrees~\cite{spanner-book} are commonly used in the construction of a geometric WSPD of a point set. An essential property of these is that the Euclidean distance between two points contained in the same cell is always bounded by the diameter of the cell. In order to construct a WSPD$_G$ we construct a new type of tree, which we will refer to as a $\delta$-connected tree (\GST{$\delta$}). This tree will satisfy the similar property, but relative to \emph{graph distances} between the points in the graph, rather than their Euclidean distances. The main difference is that cells in a $\delta$-connected tree may overlap. We formally define this tree below. 

\begin{definition}[$\delta$-Connected Tree ($\delta$-CT)]
\label{def:gst}
Given a connected geometric graph $G(V,E)$ in $\mathbb{R}^d$ and a quadtree $Q$ of $V$, a $\delta$-connected tree $T$ of $G$ is a rooted tree, where every node $u$ stores a cube $s$, corresponding to a cell of $Q$, and a representative point of a $\delta$-connected set with respect to $s$. The root of $T$ stores the cube $s$ corresponding to the cell stored in the root of $Q$, and every leaf contains a single point in $V$. In particular, the leaves contained in the subtree rooted at $u$ contain the points in $V$ that are represented by the point stored at $u$.
\end{definition}
Once we have a $\delta$-CT, it remains to upper bound the graph diameter of the $\delta$-connected set represented in each cell. With this upper bound, one can then follow a standard approach to compute a WSPD$_G$ with separation factor $\sigma$ from the $\delta$-CT.

In Subsection~\ref{subsec:c_CT} we show how to construct a $c$-CT, $T$, of a $c$-packed graph $G$. In Subsection~\ref{subsec:upperbounddiam} we then show how to upper bound the diameter of the $c$-connected set represented in each cell of $T$. In Subsection~\ref{subsec:analysisWSPD} we analyze the complexity of the construction.

\subsubsection{Constructing a $c$-Connected Tree}
\label{subsec:c_CT}
The algorithm to compute a $c$-CT takes as input a $c$-packed graph $G$, as well as its corresponding compressed quadtree $Q$. Without loss of generality we assume that the cell representing the root of $Q$ is a unit cube. The \emph{level value} of a node $u$ in $Q$ is said to be $i$ if the cell/cube $s(u)$ stored at $u$ has side length $2^{-i}$. The root of $Q$ has level value $0$.   

We are now ready to construct a $c$-CT, $T$. Initially, every leaf in $Q$ becomes a leaf in $T$, and are then deleted from $Q$. Assume w.l.o.g. that the number of distinct level values in the remaining compressed quadtree $Q^\prime$ is $\ell$ and let $I=\langle i_1, i_2, \ldots, i_{\ell}\rangle$ be the level values sorted in decreasing order. Let $U_{i_j}$ be the set of nodes in $Q'$ having level value $i_j$, $1\leq j \leq \ell$.  

Next, we iteratively process the values in $I$ in decreasing order. For each value $i_j$ in $I$ we will build a graph $H_{i_j}$ from $H_{i_{j-1}}$. Initially we set $H_{i_0}(V_{i_0},E_{i_0})=G$.

Set $H_{i_1}=H_{i_0}$. For each node $u \in U_{i_1}$, let $s^+(u)$ be a concentric cube of twice the diameter of $s(u)$. The algorithm picks an arbitrary point $p$ in $V_{i_1}(s(u))$ and merges all nodes in $V_{i_1}(s(u))$ that are path-connected to $p$ in $H_{i_0}$ within the cube $s^+(u)$. For all edges $(u,v)$, if both $u$ and $v$ have been merged, we remove the edge. If only one endpoint has been merged, we move this endpoint to $p$ and redefine the length of the edge to be the Euclidean distance between this point and $p$. Finally, we remove any parallel edges. Note that all changes are made to $H_{i_1}$, while $H_{i_0}$ stays unaffected. The surviving node $p$ is the representative point in $H_{i_1}$ of the merged set of nodes in $H_{i_0}$. A node is added to $T$ storing $p$ and its children are the nodes in $T$ storing the points that were merged into $p$. The algorithm repeats this process until all the points in $V_{i_1}(s(u))$ have been merged into representative points and the process has been complete for each $u \in U_{i_1}$.  

In the next iteration, we let $H_{i_2} = H_{i_1}$. The nodes in $U_{i_2}$ are now processed using $H_{i_1}$ as input to merge the nodes in $H_{i_2}$. This continues until all the level values in $I$ have been processed. The process for iteration $j$ of the algorithm is visualised in Figure~\ref{fig:GST_Construction}.

\begin{figure}[th]
    \begin{center}
        \includegraphics[scale=0.55]{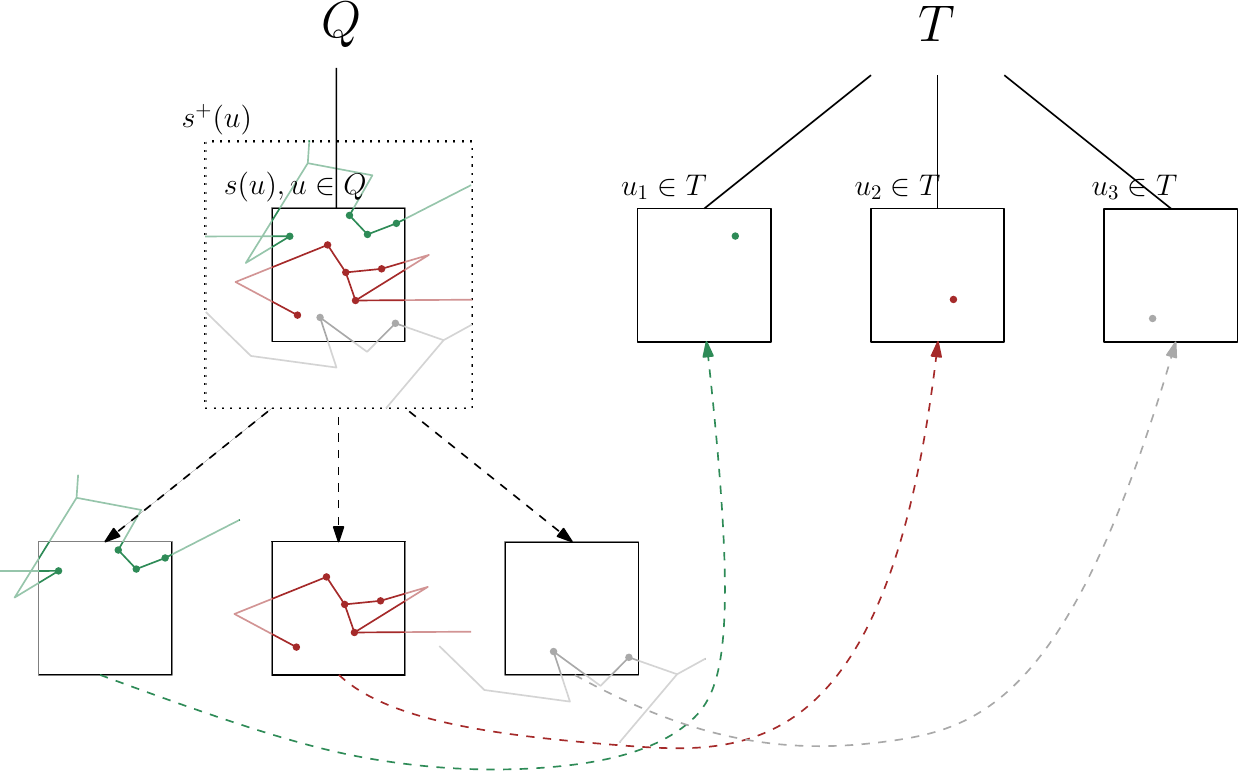}
        \caption{An illustration of iteration $j$ of the algorithm constructing the $c$-connected tree.}
        \label{fig:GST_Construction}
    \end{center}
\end{figure}

\begin{restatable}{lemma}{validGST}
    \label{lem:valid_GST}
    Given a $c$-packed graph $G(V,E)$ the above algorithm produces a valid $c$-connected tree $T$ of $G(V,E)$.
\end{restatable}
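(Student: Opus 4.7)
The plan is to prove the lemma by induction on the iteration index $j$, equivalently on the level value $i_j$ processed in decreasing order from finest to coarsest. The inductive invariant is that for every node $v \in T$ created at iteration $j$, storing cube $s(v)$ and representative $p$, the set $C(v) \subseteq V$ of original vertices appearing at the leaves of the subtree of $T$ rooted at $v$ satisfies (i) $C(v) \subseteq V(s(v))$ and (ii) every pair of vertices in $C(v)$ is joined by a path in $G$ of length at most $c \cdot \diam{s(v)}$ contained in $s^+(v)$. Together with the direct observations that the root of $T$ inherits the root cube of $Q$, and that the leaves of $T$ are exactly the original vertices of $V$, this invariant yields all the conditions of Definition~\ref{def:gst} with $\delta = c$.

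The base case consists of the leaves of $Q$ carried directly into $T$, where each $C(v)$ is a singleton and the invariant holds trivially. For the inductive step at iteration $j$, I fix a node $u \in U_{i_j}$ and a new node $v \in T$ produced by one merge step at $u$ with $s(v) = s(u)$ and representative $p$. By construction $C(v) = \bigcup_q C(q)$, where the union runs over the $H_{i_{j-1}}$-vertices $q$ that are path-connected to $p$ in $H_{i_{j-1}}$ within $s^+(v)$. For (i), the nesting property of quadtree cells gives that each such $q$ lies in $s(v)$ and its stored cube $s(q)$ is a quadtree cell at a strictly finer level containing $q$, so $s(q) \subseteq s(v)$; the inductive hypothesis then yields $C(q) \subseteq V(s(q)) \subseteq V(s(v))$, and taking the union establishes (i) for $C(v)$.

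The main step is (ii). Given leaves $x, y \in C(v)$ represented respectively by $q_x, q_y$, the $H_{i_{j-1}}$-path from $q_x$ to $q_y$ lies inside $s^+(v)$ by construction. Each of its edges traces back to an original $G$-edge $(a, b)$ whose endpoints were moved to representatives $q, q'$ by the merging rule, and the inductive hypothesis supplies $G$-paths from $q$ to $a$ and from $b$ to $q'$ inside $s^+(q)$ and $s^+(q')$, respectively. The crucial containment $s^+(q) \subseteq s^+(v)$ follows from a short geometric calculation: since $q$ is a representative created at a strictly finer level, the side length of $s(q)$ is at most half that of $s(v)$, so the expansion margin of $s^+(q)$ beyond $s(q) \subseteq s(v)$ does not exceed the margin of $s^+(v)$ beyond $s(v)$. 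Concatenating these subpaths with the inductive $G$-paths from $x$ to $q_x$ and $y$ to $q_y$ produces a $G$-walk from $x$ to $y$ entirely inside $s^+(v)$; shortcutting to a simple path and applying $c$-packedness to the ball of radius $\radius{s^+(v)} = \diam{s(v)}$ enclosing $s^+(v)$ bounds its length by $c \cdot \diam{s(v)}$, completing (ii).

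The main obstacle will be the geometric bookkeeping on the chain of nested cubes: guaranteeing $s(q) \subseteq s(v)$ and $s^+(q) \subseteq s^+(v)$ at every recursive step so that the inductively assembled $G$-subwalks never escape $s^+(v)$. Once this is in place, $c$-packedness applied inside the single cube $s^+(v)$ yields the length bound in one shot, independent of the depth of $T$; this is essential for the bound $\delta = c$ rather than a bound that degrades with the recursion depth. A symmetric projection argument, tracing a hypothetical witnessing $G$-path from outside $C(v)$ through its representatives in $H_{i_{j-1}}$, handles the maximality clause of Definition~\ref{defn:delta_connected_set}, so each stored representative genuinely represents a $c$-connected set with respect to its cube, as required by Definition~\ref{def:gst}.
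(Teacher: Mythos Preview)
Your approach is essentially the paper's: both argue that any two leaves below a $T$-node $v$ are joined by a $G$-path lying inside $s^+(v)$, and then apply $c$-packedness to the ball of radius $\diam{s(v)}$ enclosing $s^+(v)$ to obtain the length bound $c\cdot\diam{s(v)}$. The paper does this in one line at the LCA, invoking the nesting of the cubes $s^+(\cdot)$ along the $T$-path from each leaf up to $v$; your inductive unrolling of the $H_{i_{j-1}}$-path edge by edge is the same argument made explicit.

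There is, however, a gap in your containment step that the explicit formulation exposes. You assert $s(q)\subseteq s(v)$ (hence $s^+(q)\subseteq s^+(v)$) for every vertex $q$ on the $H_{i_{j-1}}$-path from $q_x$ to $q_y$. This is correct for the endpoints $q_x,q_y$, which are children of $v$ in $T$ and therefore have their representative points---and hence their cubes---inside $s(v)$. But an \emph{intermediate} vertex $q$ of the $H_{i_{j-1}}$-path is only guaranteed to lie in $s^+(v)$, not in $s(v)$; if $q\in s^+(v)\setminus s(v)$ then the canonical cell $s(q)$ is disjoint from $s(v)$, and $s^+(q)$ can protrude beyond $s^+(v)$. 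For instance, with $s(v)=[0,L]^d$ and $s(q)=[-L/2,0]\times[0,L/2]^{d-1}$, the cube $s^+(q)$ reaches first coordinate $-3L/4$, whereas $s^+(v)$ only reaches $-L/2$. Consequently the inductive $G$-subpath you place inside $s^+(q)$ is not confined to $s^+(v)$, and the concatenated walk may escape $s^+(v)$. The paper's proof never names these intermediate $H$-vertices---it only asserts nesting for cubes along the $T$-path, all of which do sit inside $s(v)$---but it also never actually exhibits the $G$-path whose containment in $s^+(v)$ it claims; your more careful unrolling reveals precisely the step that both arguments leave unjustified.
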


\begin{proof}
    To prove the lemma, it is clear from Definition~\ref{def:gst} that it only remains to prove that every node in $T$ is a $c$-connected set in $G$. That is, consider any node $u$ in $T$ and let $p$ and $q$ be any two points stored in the leaves of the subtree rooted at $u$. We will need to prove that $\gdist{p}{q} \leq c \cdot \diam{s(u)}$.
    
    Let $v(p)$ and $v(q)$ be the leaves storing $p$ and $q$ and let $v$ be the lowest common ancestor of $v(p)$ and $v(q)$ in $T$. Let $s(v)$ denote the cube corresponding to node $v$. We claim that the shortest path between $p$ and $q$ in $G$ must lie entirely within a cube $s^+(v)$ of diameter twice the diameter of $s(v)$, centered at $s(v)$. This follows from the fact that every cube and corresponding cube of twice the diameter on the path from $v(p)$ (resp. $v(q)$) to $v$ must lie strictly within $s^+(v)$. From the $c$-packedness of the original graph we deduce that the length of the path between $p$ and $q$ can be at most $2c\cdot \radius{s(v)} = c \cdot \diam{s(v)}$. 

    This allows us to conclude that, since $\diam{s(v)}$ is the smallest diameter of any cell in which $p$ and $q$ appear together, it holds for all cells of $T$ that the original graph distance between any two points contained in the leaves of its subtree, is at most $c$ times the diameter of the cell. We conclude that $T$ is indeed a valid \GST{$c$}.
\end{proof}
\subsubsection{Bounding the Graph Diameter of a Cell in the $c$-CT}
\label{subsec:upperbounddiam}
In order to construct the well-separated pairs, it is necessary to upper bound the graph diameter of the $c$-connected set represented in each cell of the $c$-CT, $T$, constructed above. Let $u$ be a node in $Q$. Denote the corresponding canonical cube as $s(u)$, and a concentric cube of twice its diameter as $s^+(u)$. We call a connected subgraph of $G$ a connected component of a cell $u$ if it is fully contained within $s^+(u)$, with at least one vertex inside $s(u)$. To compute an upper bound on the graph diameter of the $c$-connected set we instead give an upper bound on \emph{any} connected component of $u \in Q$. This upper bound can then be copied across during the construction of $T$. We show how to compute this upper bound below.

Observe that in order for an edge to contribute to a connected component of $u$, its length can be at most twice the diameter of $s(u)$. Let us call the set of edges of length at most twice the diameter, which overlap with $s(u)$, \emph{relevant} edges w.r.t. $s(u)$ (see Figure~\ref{fig:relevant_edges}). 

\begin{figure}[th]
    \begin{center}
        \includegraphics[scale=0.8]{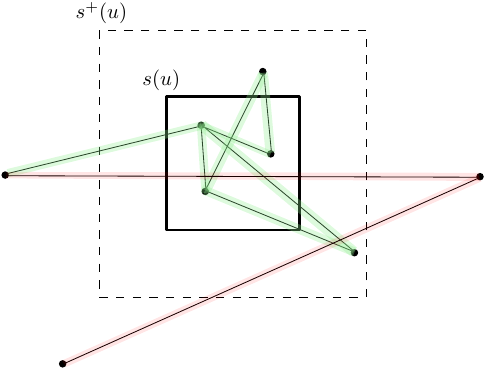}
        \caption{Set of \emph{relevant} edges w.r.t. $s(u)$ (green edges) and \emph{irrelevant} edges (red edges).}
        \label{fig:relevant_edges}
    \end{center}
\end{figure}
Let $N(s(u))$ be the set of $3^d-1$ neighboring canonical cubes on a grid (note that not all members of $N(s(u))$ may correspond to cells in $Q$). Observe that all edges in any connected component associated with $u$ must be a subset of the union of the relevant edges w.r.t. $s(u)$, and all the relevant edges w.r.t. cubes in $N(s(u))$. We refer to the total overlapping length of $s(u)$ with relevant edges w.r.t. $s(u)$ as $\rel{s(u)}$. The above observation allows us to upper bound the graph diameter of any connected component associated with $u \in Q$ by $\rel{s(u)} + \sum_{s_n\in N(s(u))}\rel{s_n}$. 

It thus remains to compute, for all $u \in Q$, $\rel{s(u)}$ and $\rel{s_n}$ for all $s_n \in N(s(u))$. Let $\ell(e)$ denote the length of an edge $e \in E$. For all $e\in E$ we compute the canonical cubes of size $2^{-i}$ such that $2^{-i} = \lceil \ell(e) / 2 \rceil$. 
Denote this set of canonical cubes as $S$. For each cube $s \in S$, store the length of the overlap of the edge with the cube. Note that this correctly gives us the value $\rel{s}$ for all $s \in S$. One can now use Lemma 2.11 from~\cite{book-geometric-apprx-algo-Peled2011} to compute a compressed quadtree $Q_S$ from $S$. For any cell $v$ in $Q_S$ whose corresponding cube is not in $S$, initialize its value $\rel{v} = 0$. Next, one can use a bottom-up approach to compute, for every cell in $Q_S$ its relative edge length. For each parent, simply add the relative edge lengths stored in the children to its current value. In order to perform efficient searching on $Q_S$ we preprocess it into a finger tree, $T_f$, using similar techniques to Theorem 2.14 in~\cite{book-geometric-apprx-algo-Peled2011}.
A brief overview of the process is as follows. We construct a finger tree $T_f$ from $Q_S$ with height $O(\log cn)$ by adding a separator of $Q_S$ as the root of $T_f$ and then recursively adding the separator of the two subtrees to be the children of the previous separator.
We are now ready to compute, for each cell $u \in Q$, an upper bound on the graph diameter of any connected component. For each $u\in Q$, search for the largest cell in $T_f$ whose corresponding cube is fully contained within $s(u)$, and return the relative edge length stored at the cell. If such a cell does not exist, return zero. Repeat the search for each $s \in N(s(u))$ and add up the results. The resulting value is our upper bound on the graph diameter of any connected component of the cell $u$. We call this value $\gdub{u}$. 

When constructing the $c$-CT from $Q$ using the algorithm in Section~\ref{subsec:c_CT}, we upper bound the graph diameter of any connected component of $u \in Q$ by $\gdub{u}$.

\begin{restatable}{lemma}{correctnessDiamUb}
    \label{lem:correctnessDiamUb}
    The value $\gdub{v}$ for any cell in $v \in T$ upper bounds the graph diameter of the connected subgraph represented in the cell $v$.
\end{restatable}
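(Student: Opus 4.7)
I would split the proof into two parts. The first part formalises the bound sketched in the text preceding the lemma: for a cell $v \in T$, the graph diameter of the connected subgraph $H$ represented in $v$ is at most $\rel{s(v)} + \sum_{s_n \in N(s(v))} \rel{s_n}$. The second part verifies that this sum is precisely what the finger-tree procedure produces as $\gdub{v}$.

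For the combinatorial bound, note that the graph diameter of $H$ is at most the total length of its edges, so it suffices to show $\sum_{e \in H}\ell(e) \le \rel{s(v)} + \sum_{s_n} \rel{s_n}$. By Definition~\ref{def:gst}, $H$ lies inside $s^+(v)$, and $s^+(v)$ is contained in the union $s(v) \cup N(s(v))$ of $3^d$ disjoint canonical cubes of the same size. Every edge $e \in H$ therefore lies in this union and satisfies $\ell(e) \le 2\diam{s(v)}$, so $e$ is relevant with respect to each cube of $\{s(v)\} \cup N(s(v))$ that it meets. Because these cubes disjointly tile their union,
\[
\sum_{e \in H}\ell(e) \;=\; \sum_{s' \in \{s(v)\} \cup N(s(v))} \sum_{e \in H} |e \cap s'| \;\le\; \sum_{s' \in \{s(v)\} \cup N(s(v))} \rel{s'},
\]
where the inequality uses that $\rel{s'}$ sums overlap contributions from \emph{all} edges relevant to $s'$, of which the edges of $H$ are a subset.

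For the algorithmic side, I would track the invariant maintained by the bottom-up pass in $Q_S$: after aggregation, every node of $Q_S$ stores the sum of the initial overlap values of all canonical cubes of $S$ in its subtree. Because canonical cubes of a fixed side length partition $\mathbb{R}^d$, this aggregated value equals the total overlap of all relevant edges with the cube at the node. The query on $T_f$ for the largest cell contained in $s(v)$ therefore returns $\rel{s(v)}$, and likewise for each of the at most $3^d - 1$ neighbours $s_n \in N(s(v))$. Summing these returned values reproduces $\gdub{v} = \rel{s(v)} + \sum_{s_n} \rel{s_n}$, which by the first part upper-bounds the graph diameter of the connected subgraph represented in $v$.

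The main technical obstacle is the second part's bookkeeping: I must ensure that every relevant edge is accounted for exactly once across the queries on $s(v)$ and its neighbours, even when the edge's canonical-cube level is comparable to that of $s(v)$ and its canonical cube lies across the boundary of $s(v)$ and some $s_n$. I would resolve this by exploiting that each edge's total length is distributed among the canonical cubes of its level that it intersects (summing to $\ell(e)$), and that the compressed-quadtree aggregation plus the largest-contained-cell query in $T_f$ exactly gathers these contributions for every queried cube, so nothing is double-counted or missed.
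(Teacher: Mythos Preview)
Your proposal is correct and follows essentially the same two-step route as the paper: first bound the graph diameter of the component by $\rel{s(u)} + \sum_{s_n \in N(s(u))} \rel{s_n}$ via containment in $s^+(u)$, then argue that the bottom-up aggregation in $Q_S$ together with the largest-contained-cell query returns exactly these $\rel{\cdot}$ values. The only noteworthy difference is presentational: the paper spells out the compressed-quadtree case analysis (when the largest cell of $Q_S$ inside the query cube is strictly smaller, its parent has a single child so the stored value still equals $\rel{\cdot}$), whereas you flag this as the ``main technical obstacle'' and sketch its resolution more abstractly---but the argument is the same.
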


\begin{proof}
    Let $u\in Q$ be the original cell in the compressed quadtree $Q$ that $v \in T$ was created from in Section~\ref{subsec:c_CT}. To argue $\gdub{v}$ is a valid upper bound on the graph diameter of the connected component corresponding to cell $v \in T$ we argue that it correctly upper bounds the graph diameter of \emph{any} connected component contained in $u$. 
    Since the original connected component in $G$ contained in any cell $u \in Q$ is fully contained within a cube of twice the radius of $u$, centered at $u$, we conclude that the graph diameter of any connected component represented by $u$ is at most $\rel{s(u)} + \sum_{s \in N(s(u)}\rel{s}$. It thus remains to argue that $\rel{s}$ for any canonical cube $s$ is correctly computed. 

    Recall that $Q_S$ denotes the compressed quadtree constructed from the set of canonical cubes $S$.
    We first argue that for all cells $w$ in $Q_S$, the value stored at the cell is equal to $\rel{s(w)}$. This follows from the fact that only edges of length at most twice the diameter contribute to this value. Since the canonical cells are the lowest cells in the compressed quadtree that overlap with relevant edges, it follows that the bottom-up summation of the values correctly maintains $\rel{s(w)}$ for all $w \in Q_S$.

    Next, for all canonical cubes $s_c$ that have non-zero relative edge length, let $w$ be a node in $Q_S$ who's corresponding cube is the cube of largest diameter fully contained inside $s_c$. We argue $\rel{s(w)} = \rel{s_c}$. If $\diam{s(w)} = \diam{s_c}$ the statement trivially follows by construction. Otherwise, if $\diam{s(w)} < \diam{s_c}$ it follows from the structure of the compressed quadtree that the parent node of $w$ can only have one child in $Q_S$, and it's length of overlap with relative edges must be the same as that of $s_c$. Finally, if $w$ does not exist in $Q_S$, it follow by construction of $Q_S$ that $\rel{s(w)} = \rel{s_c} = 0$.
\end{proof}

We are now ready to describe the construction of a WSPD$_G$ for $c$-packed graphs.

\subsubsection{Constructing the WSPD$_G$}
Construct a compressed quadtree $Q$ and compute, for each cell $u \in Q$, $\gdub{u}$ using the techniques described in Section~\ref{subsec:upperbounddiam}. Next, construct a $c$-CT, $T$, from $Q$ using the algorithm described in Section~\ref{subsec:c_CT} and store, for each cell $v \in T$, created from $u \in Q$, $\gdub{u}$ as an upper bound on the graph diameter of the connected component represented in $v$. Apply the algorithm of~\cite{book-geometric-apprx-algo-Peled2011} (Section 3.1.1) to compute a WSPD$_G$ from $T$. To determine whether a pair of nodes is well-separated we use the upper bound on the graph diameter and the Euclidean distance between their representatives.

To bound the size of the resulting WSPD$_G$ we first bound the size of $T$ by observing that each cell $u\in Q$ contains at most $O(c)$ $c$-connected sets w.r.t. $u$. 

\begin{restatable}{lemma}{repcconnected}
    \label{lem:rep_c_connect}
    For all $u \in U_{i_j}$, $1\leq j \leq \ell$, each vertex in $H_{i_j}$ within $s(u)$ represents a $c$-connected component in $H_{i_{j-1}}$ with respect to $s(u)$.
\end{restatable}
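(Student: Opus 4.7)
The plan is to argue by induction on $j$. The algorithm's merging rule for $u \in U_{i_j}$ groups the vertices of $V_{i_j}(s(u))$ that are path-connected to the chosen representative $p$ in $H_{i_{j-1}}$ through a path lying in $s^+(u)$. I need to show that each such group is exactly a $c$-connected component of $H_{i_{j-1}}$ with respect to $s(u)$ in the sense of Definition~\ref{defn:delta_connected_set}. The maximality clause (no vertex outside the group is $c$-connected to a vertex inside) is essentially automatic: a $c$-connected neighbour is in particular joined by some path lying in $s^+(u)$, so it would already have been absorbed by the path-connectedness sweep. The essential content is therefore the first condition, namely that any two vertices in the merged group are joined in $H_{i_{j-1}}$ by a path of length at most $c \cdot \diam{s(u)}$ lying inside $s^+(u)$.

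For the base case $j = 1$ we have $H_{i_0} = G$, which is $c$-packed. Because $s^+(u)$ has radius $\diam{s(u)}$, the $c$-packedness of $G$ bounds the total length of edges of $G$ contained in $s^+(u)$ by $c \cdot \diam{s(u)}$, and so any path in $G$ confined to $s^+(u)$ has length at most $c \cdot \diam{s(u)}$, settling the first condition.

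For the inductive step I would strengthen the hypothesis with the invariant that, after the levels coarser than $i_j$ have been processed, the total length of edges of $H_{i_{j-1}}$ contained in $s^+(u)$ is at most $c \cdot \diam{s(u)}$ for every $u \in U_{i_j}$. The argument then proceeds exactly as in the base case: any path of $H_{i_{j-1}}$ inside $s^+(u)$ has total length at most the total edge length there, which is at most $c \cdot \diam{s(u)}$, giving the required $c$-connectedness.

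The main obstacle is establishing and preserving this packedness-type invariant across levels. When earlier iterations are processed, three kinds of edge modifications occur: edges whose both endpoints were merged are deleted, parallel edges are deleted, and an edge $(v,w)$ with only $v$ merged to a representative $p$ is redirected to $(p,w)$ with new length $|p-w|$. The first two operations only decrease the total edge length, and the redirection is the delicate case. The key handle is that $v$ and $p$ are merged only when they already lie in a common $c$-connected component of some finer-level cube $s(u')$, so $|v - p| \le c \cdot \diam{s(u')}$, which is strictly smaller than $\diam{s(u)}$. This allows the redirected edge to be charged back to the original edge $(v,w)$, which lies within a mild enlargement of $s^+(u)$, and a careful application of $c$-packedness of $G$ on the enlargement controls the cumulative effect and recovers the invariant at the current level.
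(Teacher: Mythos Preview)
Your plan diverges from the paper at the inductive step, and the divergence is exactly where the gap lies. You propose to carry the invariant that the total edge length of $H_{i_{j-1}}$ inside $s^{+}(u)$ is at most $c\cdot\diam{s(u)}$; in effect you want $H_{i_{j-1}}$ itself to remain $c$-packed on the relevant cubes. But the redirection step $(v,w)\mapsto(p,w)$ with weight $|p-w|$ can strictly \emph{increase} an edge's length, and many such redirections accumulate across levels. Your proposed charging (``the redirected edge can be charged back to the original edge $(v,w)$ within a mild enlargement of $s^{+}(u)$'') does not yield the stated bound: bounding edges in a cube of radius $(1+\eta)\diam{s(u)}$ gives $c(1+\eta)\diam{s(u)}$, not $c\cdot\diam{s(u)}$, and the definition of a $c$-connected set does not tolerate a constant-factor slack. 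So as written the invariant is neither established nor obviously true with the exact constant $c$.

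The paper avoids this difficulty entirely. It never claims $H_{i_{j-1}}$ is $c$-packed. Instead it observes two facts: every surviving vertex of $H_{i_{j-1}}$ is an original vertex of $G$, and the merging process can only shorten paths between surviving vertices (contracting a connected set to a point cannot lengthen any walk through it). Hence the graph distance in $H_{i_{j-1}}$ between any two merged vertices is at most their graph distance in $G$, and the latter is bounded by $c\cdot\diam{s(u)}$ directly from the $c$-packedness of $G$. This sidesteps the bookkeeping of how edge lengths evolve in the intermediate graphs and delivers the exact constant. If you want to salvage your route, you would need to replace the local edge-length accounting by this global comparison with $G$.
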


\begin{proof}
    We argue the algorithm maintains this property by induction on the number of iterations. The algorithm initially uses as input the $c$-packed graph $H_{i_1} = H_{i_0} = G$. After the first iteration of the algorithm, for each node $u \in U_{i_1}$, each surviving point $x$ in $V_{i_1}$ represents a set of points in $V_{i_0}$ that were connected to $x$ within $s^+(u)$. Since $H_{i_0} = G$ is $c$-packed, we can conclude that the distance between $x$ and any point merged into $x$ is at most $c$ times the radius of $s^+(u)$, and therefore $c$ times the diameter of $s(u)$. 
    
    Assume that the algorithm maintains the invariant for iterations 1 to $k < \ell$. Note that the graph $H_{i_k}$, used as input for iteration $k+1$, is no longer guaranteed to satisfy $c$-packedness. However, observe that after each merge step, the representative nodes in $V_{i_{k+1}}$ always correspond to original nodes present in $G$. Since the process of merging each connected set into a single node can only reduce the length of the paths between the remaining representatives, the graph distances between the remaining nodes in $H_{i_k}$ can only have decreased in comparison to their original graph distance in $G$. We therefore maintain the property that, for every node $u \in U_{i_{k+1}}$, for each representative point in $V_{i_{k+1}}$ within $u$, the points in $V_{i_k}$ that it represents can be at most $c$ times the diameter of $u$ apart in $H_{i_k}$.
\end{proof}
 With the help of Lemma~\ref{lem:rep_c_connect} one can bound the size of $T$.

\begin{restatable}{lemma}{nodespercell}
    \label{lem:number_nodes_per_cell}
    For every node $u \in Q$, the number of nodes added to $T$ is at most $O(c)$.
\end{restatable}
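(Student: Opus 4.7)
The plan is to show that for each internal cell $u\in U_{i_j}$, the merging step produces at most $O(c)$ representatives; leaves of $Q$ trivially contribute a single node to $T$. By the algorithm, the number of representatives created when processing $u$ equals the number of connected components of $H_{i_{j-1}}[s^+(u)]$ that intersect $s(u)$: the procedure picks an arbitrary unmerged vertex of $V_{i_{j-1}}(s(u))$, merges its entire connected component in $H_{i_{j-1}}[s^+(u)]$ into a single representative, and repeats. By Lemma~\ref{lem:rep_c_connect}, each such component is a $c$-connected set in $s(u)$, so it suffices to bound the number of these components by $O(c)$.

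I would argue this via a charging scheme. First observe that $H_{i_{j-1}}$ inherits connectedness from $G$, because every operation used during merging---identifying the vertices of a connected set, relocating edge endpoints to the surviving representative, and removing parallel edges and self-loops---preserves graph connectivity. Consequently, at most one component of $H_{i_{j-1}}[s^+(u)]$ can fail to have an $H_{i_{j-1}}$-edge leaving $s^+(u)$, since such a component would have to coincide with all of $H_{i_{j-1}}$. Every other component $C$ must contain some edge $e_C$ of $H_{i_{j-1}}$ with one endpoint inside $s^+(u)$ and the other outside. Concatenating a walk inside $C$ from $C$'s representative $x_\ell\in s(u)$ to the inner endpoint of $e_C$ with $e_C$ itself yields a walk that starts in $s(u)$ and ends outside $s^+(u)$; its Euclidean length is therefore at least $\mathrm{dist}(s(u),\partial s^+(u))=\Omega(\diam{s(u)})$, and the walks chosen for distinct components are edge-disjoint because the components are vertex-disjoint in $H_{i_{j-1}}[s^+(u)]$ and each $e_C$ has its inner endpoint in a unique component.

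To close the argument, I would bound the total $H_{i_{j-1}}$-edge length used by these walks by $O(c\cdot\diam{s(u)})$ and then divide by the per-walk lower bound. Each surviving edge of $H_{i_{j-1}}$ can be traced back through the merging history to a distinct original edge of $G$, and since every previously processed cell has side length at most $\diam{s(u)}$, each endpoint has been displaced by only $O(\diam{s(u)})$ from its original location. Thus every original $G$-edge underlying a charging walk lies in a concentric enlargement $s^{++}(u)$ of $s(u)$ of diameter $O(\diam{s(u)})$, and $c$-packedness of $G$ applied to the ball circumscribing $s^{++}(u)$ caps the total length by $O(c\cdot\diam{s(u)})$. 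The main obstacle is the bookkeeping for this last step---verifying that the assignment from $H_{i_{j-1}}$-edges to $G$-edges is injective and controlling the total displacement across the iterations of merging without losing a factor depending on iteration depth. Combining then gives $O(c\cdot\diam{s(u)})/\Omega(\diam{s(u)})=O(c)$ components, completing the bound.
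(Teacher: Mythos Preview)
Your annulus-crossing charging is the same core idea the paper uses, but you apply it to $H_{i_{j-1}}$ and then attempt to transfer the bound back to $G$, whereas the paper reverses the order. The paper first observes (via the reasoning already present in Lemma~\ref{lem:rep_c_connect}) that merging can only shorten paths between surviving vertices, so the number of $c$-connected sets of $H_{i_{j-1}}$ with respect to $s(u)$ is bounded above by the number of $c$-connected sets of $G$ with respect to $s(u)$. It then bounds the latter directly: distinct components of $G$ restricted to $s^+(u)$ that meet $s(u)$ force edge-disjoint subpaths of length at least $\radius{s(u)}/\sqrt{d}$ across the annulus between $s(u)$ and $s^+(u)$, and $c$-packedness of $G$ on the ball circumscribing $s^+(u)$ caps their number at $2c\sqrt{d}$. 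All of the edge-tracing you flag as the obstacle simply disappears.

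The specific worries you raise---injectivity of the edge map and per-iteration displacement---are in fact resolvable (the previously processed cells containing a given vertex are nested, so the cumulative displacement is at most the diameter of the largest one, which is below $\diam{s(u)}$). The real gap lies one step later. Even with injectivity and bounded displacement, you only get $|v-w|\le |v_0-w_0|+2\,\diam{s(u)}$ per edge, so summing over all walk edges leaves a term proportional to the total number of edges used, for which you have no independent $O(c)$ bound. Separately, the leaving edges $e_C$ can have their outer endpoint arbitrarily far from $s(u)$, so the corresponding $G$-edges need not lie in any fixed $s^{++}(u)$ and $c$-packedness on that ball does not control them. The paper's reduction-first strategy is precisely what dissolves both issues: once the component count in $H_{i_{j-1}}$ is dominated by that in $G$, you run your annulus argument in $G$, where $c$-packedness applies verbatim.
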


\begin{proof}
    From Lemma~\ref{lem:rep_c_connect} we know that each node in $V_{i_l}(s(u))$ must represent a valid $c$-connected set in $H_{i_{l-1}}(s(u))$. We now observe that one can upper bound the number of $c$-connected sets in $H_{i_{l-1}}(s(u))$ by the number of $c$-connected sets in $G(s(u))$. This observation follows from the fact that for any $1 \leq j \leq \ell$, $H_{i_j}$ was constructed by merging $c$-connected sets w.r.t. $s(u)$ in $H_{i_{j-1}}$, for every $u \in U_{i_j}$. Note that the argument used in the proof of Lemma~\ref{lem:rep_c_connect} shows that this process, for every $j$, can only cause the length of the paths between surviving nodes to become smaller. Since the position of surviving nodes does not change from their original position in $G$, it follows that the number of $c$-connected components in $H_{i_j}$ within $s(u)$ can only be smaller than that in $G$ within $s(u)$. It thus remains to argue that the number of $c$-connected components within each $s(u)$, for all $u\in V$ is at most $O(c)$. To this end, note that in order for any pair of points in $V(s)$ to be disconnected within $s^+$, every path between them must contain a subpath which intersects the boundaries of $s$ and $s^+$ and therefore has length at least $\frac{r}{\sqrt{d}}$, as shown in Figure~\ref{fig:nmbr_CC}. From the $c$-packedness of $G$ we conclude there cannot be more than $\frac{2cr\sqrt{d}}{r} = 2c\sqrt{d}$ such paths in $s^+$.
    We conclude there can be at most $O(c)$ $c$-connected sets found by the algorithm.
    \end{proof}
    \begin{figure}[H]
	\begin{center}
		\includegraphics[scale=0.45]{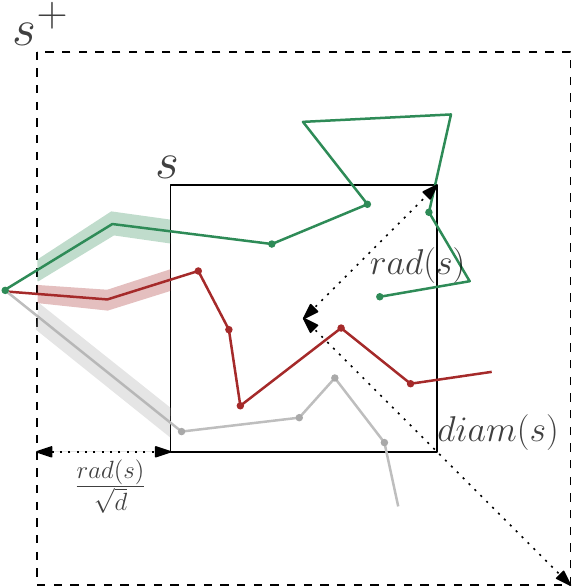}
		\caption{Two $c$-connected components in $s$ must be separated by a path of length at least $\frac{\radius{s}}{\sqrt{d}}$.}
		\label{fig:nmbr_CC}
	\end{center}
\end{figure} 

With the help of the above lemma we can analyse the size of the resulting WSPD$_G$ using a simple charging argument inspired by the one used in~\cite{book-geometric-apprx-algo-Peled2011}. 
\begin{lemma}
    \label{lem:size_and_separation_WSPDG}
    The resulting WSPD$_G$, $W$, is $\sigma$-well separated and is of size $O(c^3 \sigma n)$.
\end{lemma}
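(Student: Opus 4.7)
The argument splits into two parts: verifying well-separatedness in the graph metric, and bounding the number of pairs.

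For well-separatedness, I follow the recursive construction of~\cite{book-geometric-apprx-algo-Peled2011} on the $c$-CT $T$, using $\sigma' := \sigma + 2$ as the internal separation parameter so the test emits a pair $(A,B)$ with representatives $a,b$ only when $\dist{a}{b} \geq \sigma'\max(\gdub{A},\gdub{B})$. For any $p \in A$ and $q \in B$, the triangle inequality together with $\gdist{p}{a} \leq \gdub{A}$ and $\gdist{q}{b} \leq \gdub{B}$ (Lemma~\ref{lem:correctnessDiamUb}) gives
\[
   \gdist{p}{q} \;\geq\; \dist{p}{q} \;\geq\; \dist{a}{b} - \gdub{A} - \gdub{B} \;\geq\; \sigma \cdot \max(\gdub{A},\gdub{B}),
\]
which, again by Lemma~\ref{lem:correctnessDiamUb}, dominates $\sigma$ times the graph diameters of $A$ and $B$. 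The covering property --- that every pair of distinct points is represented exactly once --- is inherited directly from the generic tree-based WSPD algorithm, which depends only on $T$ being a rooted hierarchical tree.

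For the size bound I use a two-level charging argument. First, by Lemma~\ref{lem:number_nodes_per_cell} the compressed quadtree $Q$ has $O(n)$ cells and each contributes at most $O(c)$ representatives, so $|T| = O(cn)$. Second, for each $v \in T$ I bound the number of WSPD pairs containing $v$. Such a pair $(v,w)$ with $\diam{s(v)} \leq \diam{s(w)}$ arises from an unseparated ancestor pair $(v, \mathrm{par}(w))$, so $s(w)$ must lie within Euclidean distance $(\sigma'+1) \cdot \gdub{\mathrm{par}(w)} = O(c\sigma) \cdot \diam{s(w)}$ of $s(v)$, using $\gdub{u} = O(c)\diam{s(u)}$, which is immediate from $c$-packedness applied to the three-cell neighbourhood in the definition of $\gdub{\cdot}$. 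Multiplying the per-node bound of $O(c^2\sigma)$ obtained below over all $O(cn)$ nodes of $T$ and halving for double counting yields $O(c^3\sigma n)$.

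The crux, and the main obstacle, is the packing claim used in the per-node count: within a Euclidean ball of radius $O(c\sigma) r$ centred at a cell of diameter $r$, the number of cells of $Q$ at scales within a constant factor of $r$ is $O(c\sigma)$ --- linear in $\sigma$ rather than the trivial volume bound $O((c\sigma)^d)$. This linear dependence is where $c$-packedness must be used non-trivially; one natural route is to charge each candidate cell to a disjoint chunk of edge length of magnitude $\Omega(r)$ inside a concentric ball of radius $O(c\sigma) r$, and then invoke the total edge-length budget $O(c^2\sigma) r$ guaranteed by $c$-packedness to conclude. Multiplying the resulting $O(c\sigma)$ cells by the $O(c)$ representatives per cell (Lemma~\ref{lem:number_nodes_per_cell}) gives the $O(c^2\sigma)$ per-node bound and closes the count.
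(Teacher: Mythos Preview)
Your well-separatedness argument and the overall charging skeleton are fine and match the paper. The genuine gap is in the packing step, and it stems from routing the count through quadtree cells rather than through the connected sets that the $c$-CT actually stores.

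Two concrete problems. First, the recursion of~\cite{book-geometric-apprx-algo-Peled2011} chooses which side to split by comparing $\gdub{\cdot}$, not $\diam{s(\cdot)}$; these are related only in one direction ($\gdub{u}\le O(c)\,\diam{s(u)}$, with no converse), so you cannot infer that $s(w)$ sits at a quadtree scale comparable to $s(v)$, nor that $\gdub{\mathrm{par}(w)}=O(c)\,\diam{s(w)}$ (in a compressed tree the parent cell may be arbitrarily larger). Second, and more seriously, your packing claim does not follow from $c$-packedness: the number of scale-$r$ quadtree cells inside a ball of radius $O(c\sigma)r$ is controlled only by volume, giving $O((c\sigma)^d)$. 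The $c$-packedness assumption bounds total \emph{edge length}, not cell count, and your proposed fix of charging each candidate cell to a disjoint $\Omega(r)$ chunk of edge length does not go through---a quadtree cell of diameter $r$ that contains a vertex of $G$ need have no edges of $G$ nearby at all, and even the inequality $\gdub{\mathrm{par}(w)}\ge\gdub{v}$ only guarantees edge mass in the $3^d$-neighbourhood of $\mathrm{par}(w)$'s (possibly much larger) cell, where such neighbourhoods may overlap heavily.

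The paper sidesteps both issues by never mentioning quadtree cells in the charge. It charges the pair $(u,v)$ to $u$ (where $\pi(v)$ was split last), observes that the entire connected set represented by $\pi(v)$---an actual connected subgraph of $G$---lies inside the ball $B\bigl(u,(1+\sigma)\gdub{\pi(v)}\bigr)$, and applies $c$-packedness directly to that ball to bound the number of candidate $\pi(v)$'s by $c\sigma+1$. Multiplying by the $2^d\cdot O(c)$ children per $\pi(v)$ gives $O(c^2\sigma)$ candidates for $v$ per node $u$, and then $O(cn)$ nodes in $T$ finish the count. The key difference is that connected subgraphs of $G$ carry intrinsic edge length that $c$-packedness can budget against, whereas abstract quadtree cells do not.
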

\begin{proof}
    The fact that $W$ is $\sigma$-well separated follows from our use of the Euclidean distance as a lower bound on the distance between two representatives, and our use of dub$_G$ (Section~\ref{subsec:upperbounddiam}) as upper bound on the graph diameter, to determine whether a pair is $\sigma$-well separated.
    
    To prove the size of $W$ we use a simple charging argument. For a pair $(u,v) \in W$, let $\pi(v)$ denote the parent of $v$ in $T$. Assume the last call to the algorithm in~\cite{book-geometric-apprx-algo-Peled2011} (Section 3.1.1) was via $(\pi(v), u)$. We charge the pair to $u$ and argue any node can be charged at most $O(c^2 \sigma)$ times. Since the pair $(\pi(v), u)$ was considered not well-separated by the algorithm, it follows that $d(\pi(v), u) \leq d_G(\pi(v), u) < \gdub{\pi(v)}\sigma.$
    Consider a geometric ball $B(u, r)$ of radius $r = (1+\sigma)\gdub{\pi(v)}$ centered at $u$. Note that the connected set in $G$ represented by $\pi(v)$ must be fully contained inside $B$. By $c$-packedness, there can be at most $c\sigma +1$ candidates for $\pi(v)$. Since $\pi(v)$ can have at most $2^d \cdot O(c)$ children, it follows that there are $O(c^2 \sigma)$ candidates for $v$. We conclude a node $u$ can be charged at most $O(c^2\sigma)$ in this way. By Lemma~\ref{lem:number_nodes_per_cell} there are $O(c n)$ nodes in $T$. This concludes the lemma.
\end{proof}

We now argue the runtime and space complexity of our construction.


\subsubsection{Complexity analysis}
\label{subsec:analysisWSPD}
In Lemma~\ref{lem:GSTRuntime} the runtime of the construction of the $c$-CT is analysed, as well as the size of the resulting tree. We then analyse the runtime and space complexity for computing the upper bound on the graph diameter of the subgraph represented in each cell of the $c$-CT in Lemma~\ref{lem:gdub_complexity}. Finally, with the help of these lemmas we analyse the construction time and space complexity of our WSPD$_G$ in Theorem~\ref{thm:GWSPD}. 
We start by introducing some useful terminology and a helper lemma to help bound the construction time of the $c$-CT.
\begin{definition}[New edge w.r.t $H_{i_j}$]
    We define an edge to be new w.r.t $H_{i_j}$ if the edge does not exist in $H_{i_{j-1}}$.
\end{definition}

\begin{restatable}{lemma}{chargeedges}
    \label{lem:charge_edges}
    Over all iterations $j$, the algorithm to construct a $c$-CT creates at most $O(|E|)$ new edges, where $|E|$ is the number of edges in $G$.
\end{restatable}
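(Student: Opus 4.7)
The plan is to employ a charging argument that maps each new edge back to an original edge in $E(G)$. For each new edge $(p,w) \in E(H_{i_j}) \setminus E(H_{i_{j-1}})$, the edge arose by modifying an existing edge $(v,w) \in H_{i_{j-1}}$, where $v$ was part of a $c$-connected set merged into the representative $p$ at iteration $j$. I would define a predecessor relation from each new edge to this old edge, and by following the predecessor chain across iterations, associate each new edge with a unique original edge $e \in E(G)$ at the root of the chain. The total number of new edges is then bounded by the sum, over all original edges $e$, of the length of $e$'s lineage (the number of modifications undergone by $e$ and its descendants).

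Next, I would bound the lineage length per original edge by a constant. A key structural observation is that a vertex $v$ can be absorbed at most once throughout the algorithm: once $v$ is merged into a representative at iteration $j$, it no longer belongs to the surviving vertex set $V_{i_{j'}}$ for any $j'>j$. Hence every modification in an edge's lineage is triggered by the absorption of a distinct vertex from $V$. Using the geometric properties of the $c$-CT, after each modification the affected endpoint relocates to a representative inside the current cube $s^+(u)$, and subsequent iterations operate on cubes with strictly larger diameter due to the compressed quadtree structure. Combined with the $c$-packedness of $G$, which via Lemma~\ref{lem:number_nodes_per_cell} bounds the number of $c$-connected sets per cube by $O(c)$, this limits the number of absorptions that can involve the endpoints of $e$'s lineage before both endpoints end up in the same $c$-connected set, at which point $e$ is deleted.

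The main obstacle is rigorously establishing the constant bound on lineage length, since a naive argument that simply walks up the compressed quadtree would yield only an $O(\log n)$ bound per edge. I expect the full proof to rely on a more careful aggregate counting: at each iteration, the number of new edges created can be related to the number of edges deleted or removed as parallel edges, plus the number of vertex absorptions at that iteration. Summing across iterations and using that the total deletions plus parallel removals is at most $|E|$ (since the edge count only decreases), together with the total number of absorptions being at most $n-1$, one would conclude a bound of $O(|E|)$ on the total number of new edges across all iterations.
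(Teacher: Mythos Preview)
Your proposal diverges from the paper's argument and the closing step contains a genuine gap. The paper's proof is a direct credit scheme: allocate one credit to every edge of $G$, and maintain the invariant that every edge currently present in $H_{i_j}$ carries at least one unused credit. The key claim is that whenever a \emph{new} edge appears in $H_{i_j}$, it replaces at least two edges of $H_{i_{j-1}}$; one of those credits pays for the creation of the new edge, and the remaining credit(s) are transferred to the new edge to maintain the invariant. Since only $|E|$ credits are ever issued, at most $|E|$ new edges can be paid for in total.

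Your lineage idea in the first paragraph is a reasonable reformulation, but as you yourself note, bounding the lineage length of a single original edge by a constant is the hard part, and your second paragraph does not establish it: invoking Lemma~\ref{lem:number_nodes_per_cell} would at best give a bound depending on $c$ or on the height of the compressed quadtree, not a universal constant independent of the edge. The fallback aggregate argument in your third paragraph then breaks at the step ``the total deletions plus parallel removals is at most $|E|$ (since the edge count only decreases)''. Monotonicity of the edge count only yields that the \emph{net} change is non-positive, i.e.\ $\sum_j(\text{deleted}_j-\text{new}_j)\leq |E|$; it does not bound $\sum_j\text{deleted}_j$ by $|E|$, because newly created edges can themselves be deleted in later iterations. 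Substituting this into your inequality makes the bound on $\sum_j\text{new}_j$ circular, and the additive $n-1$ from absorptions does not rescue it. What you are missing is precisely the ``two-for-one'' replacement observation that drives the paper's credit argument and breaks the circularity.
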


\begin{proof}
    We prove this lemma using a charging argument. We start allocating each edge in $H_{i_0} = G$ a credit (for a total of $|E|$ credits). We now inductively argue that in each iteration $j$ of the algorithm (i) every \emph{new} edge w.r.t. $H_{i_j}$ can be paid for using the remaining credits, and (ii) at the end of the iteration, every edge in $H_{i_j}$ can be allocated at least one unused credit. During the first iteration of the algorithm, each \emph{new} edge in $H_{i_1}$ must replace at least two edges in $E_{i_0} = E$. We use the credit from one edge to pay for the \emph{new} edge, and move the credit of the other replaced edge(s) to the new edge. Assume up to iteration $k < \ell$, the algorithm maintains invariants (i) and (ii). In iteration $k+1$, any \emph{new} edge w.r.t. $H_{i_{k+1}}$ must replace at least two edges in $E_{i_k}$. By invariant (ii) in our inductive hypothesis, we know all edges replaced by the new edge are allocated at least one unused credit. We use one credit to pay for the new edge, and allocate the remainder of the credits from the replaced edges to the new edge. This completes the proof of the lemma.
\end{proof}
We are now ready to analyse the runtime of the algorithm described in Subsection~\ref{subsec:c_CT}.
To aid the analysis, we denote the uncompressed version of $Q$ as $Q^\prime$ and denote its height as $h$. Let $U^\prime_{i_j}$, $1 \leq j \leq \ell$, be the set of nodes at level $i_j$ in the uncompressed quadtree (note that for all $1 \leq j \leq \ell$, $U_{i_j} \subseteq U^\prime_{i_j}$). We introduce some useful terminology for the edges and points contained in the cells of $Q^\prime$ for all value levels $i_j$, $1 \leq j\leq \ell$. 

For all $u \in U^\prime_{i_j}$,  $1\leq j \leq \ell$, we partition the edges in $H_{i_j}$ that have at least one endpoint in $s(u)$ into three types:

\begin{enumerate}
    \item An edge in $H_{i_j}$ is \emph{internal} with respect to $s(u)$ if both endpoints of the edge are contained inside the cell.
    \item An edge in $H_{i_j}$ is \emph{short external} if one endpoint of the edge is located inside $s(u)$, and the other is located in a neighboring cell in $U^\prime_{i_j}$.
    \item An edge in $H_{i_j}$ is \emph{long external} if one endpoint of the edge is located inside $s(u)$ and the other endpoint is located outside the cell $s(u)$ and any neighboring cell in $U^\prime_{i_j}$.
\end{enumerate}

We are now ready to analyse the construction complexity of the $c$-CT.

\begin{restatable}{lemma}{GSTRuntime}
    \label{lem:GSTRuntime}
    Given a $c$-packed graph $G$, one can construct a \GST{$c$} of size $O(cn)$ in $O(n \log n + c^2n)$ time.
\end{restatable}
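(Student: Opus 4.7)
The lemma bundles two claims: $|T| = O(cn)$ and a construction time of $O(n\log n + c^2 n)$. The size bound is immediate from Lemma~\ref{lem:number_nodes_per_cell} (at most $O(c)$ nodes of $T$ are added per cell of the compressed quadtree $Q$) combined with the standard fact that $Q$ has $O(n)$ cells.

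For the runtime, I would first build $Q$ in $O(n\log n)$ time using a standard compressed-quadtree construction (Lemma~2.11 of~\cite{book-geometric-apprx-algo-Peled2011}). The core of the proof is then to show that the iterative merging phase costs $O(c^2 n)$ by bounding the work of processing a single cell $u \in U_{i_j}$ by $O(c^2)$. The key step is bounding $|V_{i_{j-1}}(s^+(u))|$ by $O(c)$: each cell of $Q$ has at most $2^d$ children, all of which have been processed by the start of iteration $j$, and by Lemma~\ref{lem:number_nodes_per_cell} each contributes at most $O(c)$ surviving representatives; applying this to each of the $O(3^d)$ canonical-grid cells that meet $s^+(u)$ yields at most $O(3^d \cdot 2^d \cdot c) = O(c)$ representatives of $H_{i_{j-1}}$ lying in $s^+(u)$ (for fixed $d$). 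Since the algorithm eliminates parallel edges at each merge step, the induced subgraph of $H_{i_{j-1}}$ on these representatives has at most $O(c^2)$ edges, so a BFS or union-find identifies the path-connected components in $O(c^2)$ time. Summing over the $O(n)$ cells of $Q$ yields the bound.

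The main implementation hurdle is realizing the per-cell $O(c^2)$ cost, namely retrieving the $O(c)$ representatives lying in $s^+(u)$ and their incident edges within $s^+(u)$ without rescanning global structures. My plan is to equip $Q$ with canonical-grid neighbor pointers during its construction and to maintain, for every surviving representative, an adjacency list of its current edges together with a pointer to the cell of $Q$ it currently sits in; a merge step then relinks the adjacency lists of the merged vertices onto the survivor and prunes duplicate edges on the fly. By Lemma~\ref{lem:charge_edges}, the total number of edge-creation and edge-deletion events across all iterations is $O(|E|)$, so this bookkeeping does not dominate the $O(c^2 n)$ merging cost. Combining the $O(n\log n)$ cost of building $Q$ with the $O(c^2 n)$ merging cost yields the claimed runtime.
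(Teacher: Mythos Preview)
Your size bound matches the paper's. For the running time you take a different route: you bound the number of representatives lying in $s^+(u)$ by $O(c)$ and conclude that the induced subgraph has $O(c^2)$ edges, hence $O(c^2)$ work per cell. The paper instead partitions the edges touched while processing a cell $u$ into three types---\emph{internal} (both endpoints in $s(u)$), \emph{short external} (one endpoint in $s(u)$, the other in a neighbouring canonical cell), and \emph{long external} (one endpoint in $s(u)$, the other beyond all neighbours)---and bounds each separately: internal edges globally via the credit argument of Lemma~\ref{lem:charge_edges}, short external by essentially your $O(c)$-vertices-per-cell count, and long external by a direct packedness argument (each such edge replaces one in $G$ of length at least the side of $s(u)$, so only $O(c)$ of them can meet the cell).

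There is a real gap where the two approaches diverge. Your $O(c^2)$ count on the induced subgraph is correct, but executing the BFS means scanning the \emph{full} adjacency list of every representative you visit in $s^+(u)$, including edges that leave $s^+(u)$, and your argument never bounds those. A representative in $H_{i_{j-1}}$ inherits all edges of the vertices merged into it and your relink-and-prune scheme keeps them, so its degree is not a~priori $O(c)$; Lemma~\ref{lem:charge_edges} bounds edge creations and deletions, not scans. The paper's three-way classification is what closes this: the long-external count (at most $O(c)$ per cell, from $c$-packedness) is exactly the piece your induced-subgraph bound omits, and together with the short-external count it caps the total adjacency scan at $O(c^2)$ per cell. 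A smaller issue: you invoke Lemma~\ref{lem:number_nodes_per_cell} on all $3^d$ canonical neighbours of $s(u)$, but the lemma is stated for nodes of $Q$ and those neighbours need not lie in $Q$; the packedness argument inside its proof does extend to arbitrary canonical cubes, but that should be said explicitly.
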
 

\begin{proof}
    From Lemma~\ref{lem:number_nodes_per_cell} we deduce that the size of $T$ is at most $O(cn)$. It remains to prove that $T$ can be constructed in $O(n \log n + c^2 n)$ time. Constructing the compressed quadtree $Q$ from $V$ takes $O(n \log n)$ time~\cite{book-geometric-apprx-algo-Peled2011}. Next we prove the algorithm processes at most $O(c^2 n)$ edges. 
    We prove this by considering each type of edge processed by the algorithm separately. 
    First, for all $u \in U_{i_j}$, an internal edge of $H_{i_j}$ in $s(u)$ will be processed exactly once, before being compressed into a \emph{new} edge in $H_{i_{j+1}}$ by the algorithm. From Lemma~\ref{lem:charge_edges} we deduce there can only be a linear number of internal edges processed throughout the execution of the algorithm. 
    Next, we argue the number of occurrences of short external edges. Since, by the end of iteration $j-1$, there can be at most $O(c)$ points of $H_{i_j}$ in $V(v)$ for any $v \in U_{i_{j-1}}$, there can be at most $2^d O(c)$ points of $H_{i_j}$ in $V(u)$ for any $u \in U_{i_j}$\footnote{This follows from the fact that a cell in $Q$ can have at most $2^d$ non-empty children.}. Observe that for each short external edge in $u \in U_{i_j}$, there are at most $(3^d - 1)$ neighboring cells in $U^\prime_{i_j}$ in which the other endpoint can lie. By Lemma~\ref{lem:number_nodes_per_cell}, $u$ can therefore only share $O(c^2)$
    edges with each neighbor. We conclude there can be at most $O(c^2)$
    short external edges for each cell in $Q$. 
    Finally, we observe that each long edge that is \emph{new} in $H_{i_j}$ with respect to $s(u)$, must replace at least one edge that was already long in $G$ with respect to $s(u)$\footnote{This follows from the fact that a representative point for a connected set is always chosen from the same cell as all the members of the connected set.}. Since the length of a long external edge in $G$ with respect to $s(u)$ must have length at least $\diam{u}$, it follows from the $c$-packedness of $G$ that there can be at most $O(c)$ long edges in $G$ with at least one endpoint in $s(u)$. We conclude that there can be at most $O(c)$ such edges for each node $u \in U_{i_j}$ across all value levels of $Q$. 
\end{proof}

Next, we analyse the runtime and space complexity associated with computing the upper bound on the graph diameter of any subgraph represented in each cell of the $c$-CT.

\begin{restatable}{lemma}{gdubcomplexity}
    \label{lem:gdub_complexity}
    Let $G$ be a $c$-packed graph and $Q$ its corresponding compressed quadtree. One can compute an upper bound on the graph diameter of any $c$-connected component contained in $u$, for all $u\in Q$, in $O(cn \log n)$ time, using at most $O(cn)$ space.
\end{restatable}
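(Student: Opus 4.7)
The plan is to follow the construction of Subsection~\ref{subsec:upperbounddiam} step by step and bound the time and space of each of its five phases: (a)~form $S$ by associating to each edge $e$ the canonical cubes of side length $2^{-i}=\lceil \ell(e)/2\rceil$ that it overlaps; (b)~build the compressed quadtree $Q_S$ from $S$; (c)~propagate $\rel{\cdot}$ bottom-up on $Q_S$; (d)~preprocess $Q_S$ into a finger tree $T_f$ of logarithmic height using the separator-based construction of Theorem~2.14 of~\cite{book-geometric-apprx-algo-Peled2011}; and (e)~for each $u\in Q$, query $T_f$ at $s(u)$ and at each of its $3^d-1$ canonical neighbours, summing the retrieved values to obtain $\gdub{u}$.

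The pivotal counting step is $|S|=O(cn)$, from which $|Q_S|=O(cn)$ follows via Lemma~2.11 of~\cite{book-geometric-apprx-algo-Peled2011}; this rests on the $c$-packedness of $G$ implying $|E|=O(cn)$ and on the fact that an edge of length $\ell(e)$ meets only $O(2^d)$ canonical cubes of side $\Theta(\ell(e))$. With this in hand, phases~(a) and~(c) cost $O(|S|)=O(cn)$ each, phase~(b) costs $O(|S|\log|S|)=O(cn\log n)$ (the dominant term), and phase~(d) builds a finger tree of height $O(\log|Q_S|)=O(\log n)$ in $O(|Q_S|)=O(cn)$ time and space. In phase~(e) each of the $O(n)$ cells of $Q$ triggers $O(3^d)=O(1)$ finger-tree queries, each returning the largest $Q_S$-cell fully contained in the query cube in $O(\log n)$ time, for $O(n\log n)$ total; correctness of the returned $\gdub{u}$ values is given by Lemma~\ref{lem:correctnessDiamUb}. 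Summing across phases yields the claimed $O(cn\log n)$ time and $O(cn)$ space bound.

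The main obstacle is cleanly pinning down $|S|=O(cn)$ from $c$-packedness alone, since a priori a single edge could contribute several canonical cubes at its own scale, and verifying that the finger-tree cell-location query on $Q_S$ indeed returns \emph{the largest} $Q_S$-cell fully contained in the query cube in $O(\log n)$ time independently of the spread of the metric. Both are by-products of the standard compressed-quadtree machinery of~\cite{book-geometric-apprx-algo-Peled2011}; once they are established, the remainder of the analysis is straightforward additive bookkeeping across the five phases.
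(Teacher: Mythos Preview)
Your proposal is correct and follows essentially the same five-phase argument as the paper's own proof: bound $|S|=O(cn)$ via $|E|=O(cn)$ and a constant per-edge cube count, build $Q_S$ in $O(cn\log(cn))$ time, propagate bottom-up in $O(cn)$, construct the finger tree in $O(cn)$, and issue $3^d$ queries of cost $O(\log(cn))$ for each of the $O(n)$ cells of $Q$. The only cosmetic difference is that the paper bounds the number of canonical cubes an edge of length $\ell(e)$ meets at scale $\lceil \ell(e)/2\rceil$ by $O(d)$ rather than your $O(2^d)$; since $d$ is fixed this is immaterial.
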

\begin{proof}
    Computing the set of canonical cells from the edges of $G$ takes $O(|E|) = O(cn)$ time\footnote{This follows from the fact that each edge $e$ can overlap with at most $O(d)$ canonical cells of diameter at least $\lceil \ell(e) / 2 \rceil$.}. The number of canonical cells in the set $S$ is therefore at most $O(cn)$. Constructing a compressed quadtree from these cells can be done in $O(cn \log cn)$ time and $O(cn)$ space using the techniques described in Section 2.2.3 in~\cite{book-geometric-apprx-algo-Peled2011}. Computing the relative edge length for each cell in $Q^\prime$ using the bottom-up approach can now be done in $O(cn)$.
    Constructing a finger tree $T_f$ for efficient searching can be done in $O(cn)$ time, using $O(cn)$ space. Note that the height of $T_f$ is now $O(\log cn)$. Therefore, finding the node in $T_f$ corresponding to a cube $s(u)$ of a cell $u\in Q$ and every cube $s \in N(s(u))$ takes at most $3^d \cdot O(\log cn)$ time. Therefore, we can compute $\gdub{u}$ for all $u \in Q$ in $O(cn \log n)$ time using $O(cn)$ space. 
\end{proof}
With the help of the above lemmas we obtain the following theorem.
\GWSPDthm*
\begin{proof}
    Constructing the compressed quadtree $Q$ requires $O(n \log n)$ time and $O(n)$ space. Upper bounding the diameter using the techniques introduced in Section~\ref{subsec:upperbounddiam} can be done in $O(cn \log n)$ time, using $O(cn)$ space (Lemma~\ref{lem:gdub_complexity}). We conclude from Lemma~\ref{lem:GSTRuntime} that we can construct a $c$-CT, $T$, from $Q$ in $O(c^2n)$ time, using $O(cn)$ space. Finally, using the construction from~\cite{book-geometric-apprx-algo-Peled2011} we can compute the WSPD$_G$ from $T$ in $O(c^3\sigma n)$ time using $O(cn)$ space. This concludes the proof. 
\end{proof}

\section{A Separator Theorem for \texorpdfstring{$c$}{c}-packed Graphs}\label{sec:SmallSeparator}

A subset of vertices $C$ of a graph with $n$ vertices is called a (balanced) separator if the remaining vertices can be partitioned into two sets $A$ and $B$ such that there are no edges between $A$ and $B$, with $|A|< \alpha\cdot n$ and $|B|<\alpha \cdot n$ for some constant $1/2\leq \alpha <1$. The subset $C$ is said to \emph{$\alpha$-separate} the graph.

In this section we show how to compute a separator of size $O(c)$ for any $c$-packed graph that $(1-\frac{1}{2\lambda^3})$-separates the graph. The constant $\lambda$ is the doubling constant of $\R^d$.

We have the following separating lemma.
\begin{lemma}~\label{lem:separator-ring-deterministic}
     Given a set $P$ of $n$ points in $\R^d$, where $d$ is fixed, one can compute a ball $b(p,r)$, which is centered at $p$ and has radius $r$, such that $\mathbf{b}(p,r)$ contains at least $n/(2\lambda^3)$ points of $P$, where $\lambda$ is the doubling constant of $\R^d$, and $\mathbf{b}(p,2r)$ of twice the radius contains at most $n/2$ points of $P$. The running time of the algorithm is $O(\lambda^{3d} n)$.
\end{lemma}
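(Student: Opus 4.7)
My plan is to prove existence via a minimum-$k$-th-nearest-neighbour argument that invokes the doubling property of $\R^d$, then describe a grid-based implementation for the running time.

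Set $k = \lceil n/(2\lambda^3) \rceil$, and for each $p \in P$ let $\rho_k(p)$ denote the smallest radius $\rho$ such that $|\mathbf{b}(p,\rho) \cap P| \geq k$. Let $R = \min_{p \in P} \rho_k(p)$, attained at some $p^* \in P$; I will show that $(p^*, R)$ is a valid output pair.

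The first inequality, $|\mathbf{b}(p^*, R) \cap P| \geq k \geq n/(2\lambda^3)$, holds by construction. For the second, I argue by contradiction: suppose $|\mathbf{b}(p^*, 2R) \cap P| > n/2$. Applying the doubling property of $\R^d$ three times (shrinking radii $2R \to R \to R/2 \to R/4$), I cover $\mathbf{b}(p^*, 2R)$ by $\lambda^3$ balls of radius $R/4$, say $\mathbf{b}(c_1, R/4), \ldots, \mathbf{b}(c_{\lambda^3}, R/4)$. Then $\sum_i |\mathbf{b}(c_i, R/4) \cap P| \geq |\mathbf{b}(p^*, 2R) \cap P| > n/2$, so by pigeonhole some covering ball $\mathbf{b}(c, R/4)$ contains strictly more than $n/(2\lambda^3)$, hence at least $k$, points of $P$. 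Pick any witness $q \in P \cap \mathbf{b}(c, R/4)$; the triangle inequality yields $\mathbf{b}(q, R/2) \supseteq \mathbf{b}(c, R/4)$, so $|\mathbf{b}(q, R/2) \cap P| \geq k$ and $\rho_k(q) \leq R/2 < R$, contradicting the minimality of $R$.

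For the running time, the task reduces to computing $\rho_k(p)$ for each $p \in P$ and returning the minimiser. The plan is to overlay $P$ with a uniform grid at the relevant dyadic scale, so that any ball containing at most $k$ points meets only $O(\lambda^{3d})$ cells; then each per-point $k$-NN distance is determined by inspecting $O(\lambda^{3d})$ cells, giving $O(\lambda^{3d} n)$ total. The main obstacle I foresee is not the existence proof, which is clean, but ensuring this $k$-NN minimisation runs deterministically in $O(\lambda^{3d} n)$ without an $O(\log n)$ factor from explicit quadtree construction or any dependence on the spread of $P$. I would sidestep this by first computing a single reference point's $k$-th nearest neighbour distance by $O(n)$-time selection and then re-bucketing at that scale, since a standard exchange argument shows that any valid $R$ lies within a constant factor of this reference value, and therefore the rebucketed grid suffices to identify $(p^*, R)$ in linear time.
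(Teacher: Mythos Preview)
Your existence argument is correct and is essentially the same as the paper's: both cover the doubled ball by $\lambda^3$ balls of radius a quarter (or half of $r_{opt}$) via three applications of the doubling property, then pigeonhole to contradict minimality. The only cosmetic difference is that the paper works with $r_{opt}(P,k)$, the radius of the smallest $k$-enclosing ball (not necessarily centred at a point of $P$), and with a $2$-approximation to it rather than the exact minimiser; this is because the algorithm it invokes returns only an approximation.

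The gap is in your running-time argument. The paper does not re-derive an algorithm at all: it simply invokes the Har-Peled--Mazumdar result that a ball of radius at most $2r_{opt}(P,k)$ containing $k$ points can be computed in $O(n(n/k)^d)$ time, which for $k=n/(2\lambda^3)$ gives $O(\lambda^{3d}n)$ directly. Your proposed workaround --- compute the $k$-th nearest-neighbour distance of \emph{one} reference point and grid at that scale --- does not work, because that distance can be arbitrarily larger than $R$: take a single isolated point $p_0$ far from a tight cluster of $n-1$ points, and choose $p_0$ as the reference. There is no ``standard exchange argument'' bounding $\rho_k(p_0)$ by a constant times $R$ for an arbitrary $p_0$. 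The Har-Peled--Mazumdar algorithm obtains its initial scale estimate by a more careful gridding and pruning procedure, not by sampling a single point; if you want to avoid the citation you would need to reproduce that machinery, which is not a one-line fix.
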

\begin{proof}
    Let $r_{opt}(P,k)$ be the radius of the smallest ball enclosing $k$ points of $P$. Har-Peled and Mazumdar~\cite{book-geometric-apprx-algo-Peled2011} gave an algorithm for computing a $k$-enclosing ball with radius at most $2r_{opt}(P,k)$ in $O(n(n/k)^d)$ time. 
    As a result, one can compute a $(n/(2\lambda^3))$-enclosing ball $\mathbf{b}(p,r)$ in $O(\lambda^{3d}\cdot n)$ time.

    Now one can prove that $\mathbf{b}(p,2r)$ contains at most $n/2$ points. Since $r\leq 2r_{opt}(P,n/(2\lambda^3))$, by the doubling property, $\mathbf{b}(p,2r)$ can be covered by at most $\lambda^3$ balls of radius $r_{opt}(P,n/(2\lambda^3))/2$. Note that any ball of radius $r_{opt}(P,n/(2\lambda^3))/2$ contains strictly less than $n/(2\lambda^3)$ points. Thus $\mathbf{b}(p,2r)$ contains at most $n/2$ points. This concludes the proof of the lemma. 
\end{proof}

Based on Lemma~\ref{lem:separator-ring-deterministic}, we can show that one can construct separators of size $O(c)$ that $(1-\frac{1}{2\lambda^3})$-separate $c$-packed graphs, as shown in Theorem~\ref{thm:separator-technical-overview}.

\separator*

\begin{proof}
    First we construct a separator $C$. Then we prove that $C$ satisfies the required properties. Finally we analyze the running time of the algorithm.

    Let $P$ be the vertices of $c$-packed graph $G$ and run the algorithm described in Lemma~\ref{lem:separator-ring-deterministic}. Let $\mathbf{b}(p,r)$ be the computed ball that contains at least $n/(2\lambda^3)$ vertices of $G$. Let $\hat{A}$ be the vertices of $G$ contained in $\mathbf{b}(p,r)$ and let $\hat{B}$ be the vertices of $G$ lying outside $\mathbf{b}(p,2r)$. See Figure~\ref{fig:thm-small-separator} for an illustration. 

    Now set up a max-flow instance. A similar construction was used in Lemma~11 of~\cite{DBLP:conf/soda/GudmundssonSW23}. Set the capacity of each edge of $G$ to 1. Let the vertices in $\hat{A}$ be the sources, and let the vertices in $\hat{B}$ be the sinks. Run the Ford-Fulkerson algorithm on the instance. According to the max-flow min-cut theorem, the max-flow of the instance is equal to its min-cut. Let $(S,T)$ be a min-cut and let $\{e_1,\ldots,e_l\}$ be the set of edges from $S$ to $T$, where $\hat{A}\subseteq S$ and let $\hat{B}\subseteq T$. Choose one endvertex for each edge in $\{e_1,\ldots,e_l\}$ and add it to $C$ ($C$ is initially empty). Let $A=S\setminus C$ and let $B=T\setminus C$. This finishes the construction of $C$.

    \begin{figure}[htb]
    \centering
    \includegraphics[width=.35\textwidth]{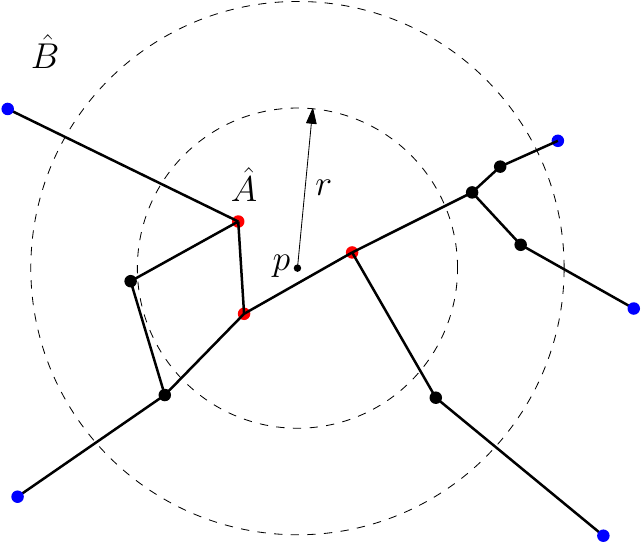}
    \caption{Illustration of the proof of Theorem~\ref{thm:separator-technical-overview} where $p$ is the center of ball $\mathbf{b}(p,r)$. Vertices in $\hat{A}$ are drawn in red. Vertices in $\hat{B}$ are drawn in blue. The value of the min-cut in the figure is 4.}
    \label{fig:thm-small-separator}
    \end{figure}

    Next we show that $C$ (i) has size $O(c)$, and (ii) $(1-\frac{1}{2\lambda^3})$-separates $G$. Property (ii) follows from $(S,T)$ being a min-cut. Since $\hat{A}$ contains at least $n/(2\lambda^3)$ vertices of $G$ and $\hat{A}\subset S$, $A$ contains at least $n/(2\lambda^3)$ vertices of $G$. Since $\hat{B}$ contains at least $n/2$ vertices of $G$ and $\hat{B}\subset T$, $B$ contains at least $n/2$ vertices of $G$. Property (i) follows from $c$-packedness. In the max-flow instance, the value of the max-flow is $l$. Since all the edges have capacity 1, there are $l$ edge-disjoint paths from $\hat{A}$ to $\hat{B}$. Each such path pierces both the inner and the outer boundaries of the $d$-dimensional spherical shell $\mathbf{b}(p,2r)\setminus \mathbf{b}(p,r)$. Due to $c$-packedness, the length of edges inside the spherical shell is at most $2cr$. Since the width of the spherical shell is $r$, there are at most $2c$ edge-disjoint paths from $\hat{A}$ to $\hat{B}$. Therefore $|C|=l\leq 2c$.

    Finally we analyze the running time of the algorithm. Running the algorithm in Lemma~\ref{lem:separator-ring-deterministic} on the vertices of $G$ takes $O(\lambda^{3d} n)$
    time. We use the Ford-Fulkerson algorithm to solve the max-flow instance. The running time of the algorithm is proportional to the number of edges in $G$ times the value of the max-flow. Since $l\leq 2c$ and there are $O(cn)$ edges in $G$, solving the max-flow instance takes $O(c^2n)$ time. Since $\lambda=O(d)$, the overall running time of the algorithm is $O(c^2n)$.
\end{proof}

In the next Section we show how one can use the above results to obtain an exact distance oracle for $c$-packed graphs. 

\section{An Exact Distance Oracle for $c$-packed Graphs}
\label{sec:Dist_oracles}
We can combine our separator in Theorem~\ref{thm:separator-technical-overview} with standard techniques~\cite{ShortestPathAmongObstaclesPlane_ArikatiCCDSZ96} to construct an exact distance oracle via a separator hierarchy~\cite{Planar-separator-theorem_LiptonTarjan79}. 

We apply Theorem~\ref{thm:separator-technical-overview} recursively to decompose the given $c$-packed graph $G$. Applying Theorem~\ref{thm:separator-technical-overview} on $G$ partitions the vertices of $G$ into subsets $A$, $B$ and $C$ where $C$ is a separator of size $O(c)$. Let $G(A)$ be the subgraph of $G$ induced by vertices in $A$ and let $G(B)$ be the subgraph of $G$ induced by the vertices in $B$. Recursively apply the separator theorem on $G(A)$ and $G(B)$ until the subgraphs have constant size, and we get a balanced decomposition tree $\mathcal{T}$. Each node in $\mathcal{T}$ corresponds to a subgraph of $G$ and is associated with a separator of the subgraph. The root of $\mathcal{T}$ corresponds to $G$ and is associated with the separator $C$, and so on.  For each node $v$ in $\mathcal{T}$, compute shortest path trees rooted at each vertex in the separator associated with $v$. Note that the separators associated with the internal nodes of $\mathcal{T}$ and the subgraphs of the leaves of $\mathcal{T}$ form a partition of the vertices of $G$. Map each vertex in $G$ to the node in $\mathcal{T}$ whose separator (internal node) or subgraph (leaf node) contains the vertex. Finally preprocess $\mathcal{T}$ in linear time so that LCA (Lowest Common Ancestor) queries can be answered in constant time~\cite{LCArevisited_BenderColton2000}. This finishes the preprocessing of the distance oracle.

To answer a distance query between any two vertices $u$ and $v$ in $G$, use the mapping for vertices in $G$ and the LCA structure to locate the node $s$ in $\mathcal{T}$, whose associated separator separates $u$ from $v$ in the subgraph of $s$. Let $G_s$ denote the subgraph of $s$ and let $sep(s)$ denote the separator associated with $s$. Use the shortest path trees stored at $s$ to compute $d_{G_s}(u,v)=\min_{t\in sep(s)}\{d_{G_s}(u,t)+d_{G_s}(t,v)\}$. Let $anc(s)$ be the ancestors of $s$ in $\mathcal{T}$. For each node $a$ in $anc(s)$, compute $d_{G_a}(u,v)=\min_{t\in sep(a)}\{d_{G_a}(u,t)+d_{G_a}(t,v)\}$. Return $\min_{a\in anc(s)}\{d_{G_s}(u,v),d_{G_a}(u,v)\}$ as the answer.

We can prove the correctness of the query algorithm inductively. Starting from the root $o$ of $\mathcal{T}$, if the shortest path from $u$ to $v$ goes through the separator associated with $o$, then the distance from $u$ to $v$ is $d_{G_o}(u,v)$. Otherwise the shortest path from $u$ to $v$ does not go through the separator associated with $o$, therefore must stay in the subgraph of $y$, where $y$ is $o$'s child that is an ancestor of $s$. If the shortest path from $u$ to $v$ goes through the separator associated with $y$, the distance from $u$ to $v$ is $d_{G_y}(u,v)$. And so on until we reach $s$. This proves that $\min_{a\in anc(s)}\{d_{G_s}(u,v),d_{G_a}(u,v)\}$ is the distance from $u$ to $v$, therefore the correctness of the query algorithm.

Finally we analyze the complexities of the preprocessing algorithm and the query algorithm. Since $\mathcal{T}$ is a balanced binary tree, its height is $O(\log n)$. For each level of $\mathcal{T}$, computing the shortest path trees rooted at the separators takes $O(c^2n + cn\log n)$ time, using Dijkstra's algorithm. Computing the separators at each level takes $O(c^2n)$
time according to Theorem~\ref{thm:separator-technical-overview}. There are $O(\log n)$ levels, thus the total preprocessing time is $O(c^2n\log n+cn\log^2 n)$
. Since there are $O(c)$ vertices in an associated separator, computing $d_{G_s}(u,v)$ or $d_{G_a}(u,v)$ takes $O(c)$ time. Therefore the query time is $O(c\log n)$. The shortest path trees stored at each level of $\mathcal{T}$ takes $O(cn)$ space, and the total space of preprocessing is $O(cn\log n)$. We have obtained the following theorem.

\exactdistoracle*

To build the tree cover in the next section, we need a WSPD$_G$ that uses graph distances between representatives instead of Euclidean distances between representatives. Let $a$ and $b$ be two nodes in the $c$-CT. The pair $\{C_a,C_b\}$ is well-separated w.r.t graph distance if $\max\{diam_G(C_a),diam_G(C_b)\}\cdot \sigma\leq dist_G(rep_a,rep_b)-diam_G(C_a)-diam_G(C_b)$. From Theorem~\ref{thm:GWSPD} in Section~\ref{sec:GWSPD} and Theorem~\ref{thm:exact-distance-oracle}, we have
\begin{corollary}
    \label{thm:GWSPD-graph-distance}
    Given a $c$-packed graph $G$ in $\R^d$, one can construct a WSPD$_G$ with separation factor $\sigma$, where all distances are graph distances, in $O(c^4\sigma n\log n+cn\log ^2 n)$ time, using $O(cn\log n)$ space. The WSPD$_G$ has size $O(c^3\sigma n)$.
\end{corollary}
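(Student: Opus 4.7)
The plan is to reuse the construction of Theorem~\ref{thm:GWSPD} almost verbatim, changing only the well-separatedness test so that the lower bound on the distance between representatives is the graph distance instead of the Euclidean distance. Since graph distances dominate Euclidean distances, the only effect on correctness is tightening the separation test; the output is still a valid WSPD, now $\sigma$-separated with respect to the graph metric.

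Concretely, I would first construct the compressed quadtree $Q$ and the $c$-CT $T$ together with the diameter upper bounds $\gdub{\cdot}$ exactly as in Section~\ref{subsec:construction}, at cost $O(n\log n + c^{2}n)$ time and $O(cn)$ space (Lemmas~\ref{lem:GSTRuntime} and~\ref{lem:gdub_complexity}). Next, I would invoke Theorem~\ref{thm:exact-distance-oracle} to build an exact distance oracle on $G$ in $O(c^{2}n\log n + cn\log^{2} n)$ preprocessing time and $O(cn\log n)$ space, supporting graph distance queries between representatives in $O(c\log n)$ time per query. Then I would run the standard recursive WSPD procedure of~\cite{book-geometric-apprx-algo-Peled2011} on $T$, but replacing the Euclidean test $\mathrm{dist}(\mathrm{rep}_a,\mathrm{rep}_b) \geq \sigma\cdot\max\{\gdub{a},\gdub{b}\}$ with the graph-distance test
\[
  d_G(\mathrm{rep}_a,\mathrm{rep}_b) - \gdub{a} - \gdub{b} \;\geq\; \sigma\cdot\max\{\gdub{a},\gdub{b}\},
\]
each query being answered by the oracle.

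For correctness, observe that $\gdub{a}$ and $\gdub{b}$ upper bound the graph diameters of the connected sets represented by $a$ and $b$ (Lemma~\ref{lem:correctnessDiamUb}), so by the triangle inequality the test above certifies $\sigma$-separation w.r.t.\ graph distances between any pair in $C_a \times C_b$. The partition property of the WSPD is unaffected by the choice of test. For the size bound, the charging argument of Lemma~\ref{lem:size_and_separation_WSPDG} carries over: if a pair $(\pi(v),u)$ fails the new test then $d_G(\pi(v),u) < (\sigma+1)\gdub{\pi(v)}$, and since $\gdub{\pi(v)}$ already bounds the graph diameter, the connected set of $\pi(v)$ lies in a graph (hence Euclidean) ball of radius $O(\sigma\,\gdub{\pi(v)})$ around $u$, which by $c$-packedness admits only $O(c\sigma)$ candidate parents and $O(c^{2}\sigma)$ candidate children. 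Combined with $|T|=O(cn)$ this yields a WSPD$_G$ of size $O(c^{3}\sigma n)$.

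For the running time, the recursive procedure performs $O(c^{3}\sigma n)$ well-separation tests, each costing one oracle query of time $O(c\log n)$, for a total of $O(c^{4}\sigma n\log n)$; adding the $O(c^{2}n\log n+cn\log^{2}n)$ oracle preprocessing and the $O(n\log n + cn\log n)$ cost of $Q$, $T$ and $\gdub{\cdot}$ gives the claimed $O(c^{4}\sigma n\log n + cn\log^{2}n)$. The space is dominated by the oracle at $O(cn\log n)$. The main subtle point — and the step I would verify most carefully — is that replacing the Euclidean lower bound by the graph-distance lower bound does not blow up the recursion depth of the WSPD procedure nor the charging bound; this is where one must check that the ``witness'' ball argument works with graph distances, which it does because $d_G \geq \mathrm{dist}$ so the failing-pair radius in the Euclidean argument is still a valid upper bound on the Euclidean ball containing $\pi(v)$'s set.
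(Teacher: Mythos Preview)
Your proposal is correct and matches the paper's approach exactly: the paper derives Corollary~\ref{thm:GWSPD-graph-distance} simply by combining Theorem~\ref{thm:GWSPD} with Theorem~\ref{thm:exact-distance-oracle}, replacing the Euclidean lower bound in the separation test by a graph-distance query to the exact oracle. Your analysis of the resulting time, space, and size bounds, as well as your observation that the charging argument of Lemma~\ref{lem:size_and_separation_WSPDG} still applies because $d_G \ge \mathrm{dist}$, are precisely the details the paper leaves implicit.
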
 

\section{A tree cover for \texorpdfstring{$c$}{c}-packed graphs}\label{sec:tree_cover}
In this section, we show a $(1+\varepsilon)$-distortion tree cover for $c$-packed metrics with constant number of trees when $c$ is $O(1)$.

We use the concept of ``dumbbells'', which was used in the ``Dumbbell Theorem" for Euclidean metrics by Arya~\etal~\cite{dumbbell-theorem-Euclidean-stoc95}. We construct the dumbbells using the WSPD$_G$ in Corollary~\ref{thm:GWSPD-graph-distance}. Recall that all the distances in the WSPD$_G$ are graph distances. We then partition the linear number of dumbbells into dumbbell groups each of which satisfy (i) the \emph{length-grouping property} and (ii) the \emph{empty-region property}. Finally we connect the dumbbells in each group hierarchically and build a ``dumbbell tree''. 

The rest of this section is organized as follows. We describe the dumbbells in Section~\ref{ssec:dumbbells}, prove the packing lemmas in Section~\ref{ssec:packing-lemmas-tc}, and show how to partition the dumbbells into groups satisfying the length-grouping property and the empty-region property in Section~\ref{ssec:partition-dumbbells}. We construct the dumbbell trees in Section~\ref{ssec:build-dumbbell-trees} and prove that the dumbbell trees constitutes a tree cover in Section~\ref{ssec:dumbbell-trees-constitute-tc}.

\subsection{Dumbbells}\label{ssec:dumbbells}
Throughout Section~\ref{sec:tree_cover}, let $T$ denote $T$ and let $\{A_i,B_i\}$, $1\leq i\leq m$, be the well-separated pairs generated by the WSPD$_G$ algorithm in Corollary~\ref{thm:GWSPD-graph-distance}. For any node $v$ in $T$, let $C_v$ denote the connected set of $v$ and let $rep_v$ denote the representative point of $C_v$. $A_i$ or $B_i$ is the connected set of some node in $T$. Let $A_i=C_a$ and let $B_i=C_b$, where $a,b$ are nodes in $T$. 

For each pair $\{A_i,B_i\}$, $1\leq i\leq m$, we construct a dumbbell $D(a,b)$. The dumbbell $D(a,b)$ has two \emph{heads}, $head(a)$ and $head(b)$. $head(a)$ ($head(b)$) is the graph metric ball that is centered at $rep_a$ ($rep_b$), has radius $\gdub{C_a}$ ($\gdub{C_b}$) and \emph{only} includes the points in $C_a$ ($C_b$). Recall that $\gdub{C_a}$ is the upper bound of the graph diameter of $C_a$. We refer to $C_a$ as the \emph{head set} of $head(a)$. $head(a)$ and $head(b)$ are joined by a graph metric segment of length $dist_G(rep_a,rep_b)$. See Figure~\ref{fig:dumbbells} for an illustration.
    
    \begin{figure}[H]
	\centering
		\includegraphics[width=.8\textwidth]{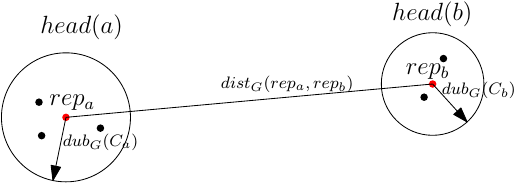}
		\caption{The dumbbell $D(a,b)$ for the pair $\{A_i=C_a,B_i=C_b\}$. $head(a)$ ($head(b)$) is the ball that is centered at $rep_a$ ($rep_b$), has radius $\gdub{C_a}$ ($\gdub{C_b}$) and only includes points in $C_a$ ($C_b$). }
		\label{fig:dumbbells}
    \end{figure}

\subsection{Packing lemmas}\label{ssec:packing-lemmas-tc}

In this section we prove a few packing lemmas, which are required for establishing the empty-region property and constructing the dumbbell trees. The first lemma bounds the number of disjoint heads whose parent heads have size proportional to the hypercube $\mathbf{s}$. The proof is similar to Lemma~9.4.3 in~\cite{book-geom-spanner-networks}.
\begin{lemma}\label{lem:hypercube-intersecting}
    Let $\mathbf{s}$ be a hypercube in $\R^d$, let $\ell$ be the side length of $\mathbf{s}$ and let $\alpha$ be a positive real number. Let $b_1,b_2,\ldots,b_k$ be nodes of $T$ such that
    \begin{enumerate}
        \item $b_i$ is not the root of $c$-CT for all $1\leq i\leq k$,
        \item the sets $C_{b_i}$, $1\leq i\leq k$, are pairwise disjoint,
        \item $L(s(\pi(b_i)))\geq \ell/\alpha$ for $1\leq i\leq k$, and
        \item $s(b_i)\cap \mathbf{s}\neq \emptyset$ for all $1\leq i\leq k$.
    \end{enumerate}
    Then $k\leq c(2\alpha +2)^d$.
\end{lemma}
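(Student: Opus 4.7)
The plan is a two-step packing argument on the compressed quadtree $Q$ underlying the $c$-CT $T$. First, I would show that for each distinct $Q$-cell $P$ appearing as $s(\pi(b_i))$ for some $i$, the number of $b_i$'s with $s(\pi(b_i))=P$ is at most $O(c\cdot 2^d)$. This follows because $P$ has at most $2^d$ child cells in $Q$, and by Lemma~\ref{lem:number_nodes_per_cell} each child cell contributes at most $O(c)$ $T$-nodes. Crucially, the $T$-children of all $T$-nodes associated with cell $P$ (summed across the $O(c)$ such $T$-nodes, which correspond to the distinct $c$-connected sets of $P$) are exactly the $T$-nodes at the $Q$-child cells of $P$, so across all $T$-nodes at $P$ the aggregate count of children is bounded by $2^d\cdot O(c)$.

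Second, I would count the distinct parent cells. Each parent cell is a $Q$-cell of side length at least $\ell/\alpha$ that intersects $\mathbf{s}$ (since $s(b_i)\subseteq s(\pi(b_i))$ and $s(b_i)\cap \mathbf{s}\neq \emptyset$). Partitioning such cells by side length (which is a power of 2) and noting that same-level quadtree cells are pairwise interior-disjoint, the number of side-$s$ cells intersecting $\mathbf{s}$ is at most $(\lceil \ell/s\rceil+1)^d$. Summing a geometric series over $s=\ell/\alpha, 2\ell/\alpha, 4\ell/\alpha,\ldots$ bounds the number of distinct parent cells by $O((\alpha+1)^d)$. Multiplying the two bounds yields $k\leq O(c\cdot 2^d\cdot (\alpha+1)^d)=O(c(2\alpha+2)^d)$; a careful accounting of the ceilings and the power-of-2 alignment of canonical cells recovers the stated bound $c(2\alpha+2)^d$.

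The main obstacle lies in Step 2 for parent cells much larger than $\mathbf{s}$, which in an uncompressed quadtree would form a nesting chain of length $\Omega(\log(1/\ell))$, seemingly contributing a spurious log factor. The resolution is that $Q$ is a \emph{compressed} quadtree that collapses non-branching chains into single edges, so only the $O(1)$ branching ancestors appear as distinct parent cells in $Q$; combined with the per-level packing bounds from the geometric sum, the total number of distinct parent cells stays $O((\alpha+1)^d)$ independent of $n$ and $\ell$, as required.
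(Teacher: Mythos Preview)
Your argument has a genuine gap: you never invoke hypothesis~(2), the pairwise disjointness of the $C_{b_i}$. That hypothesis is indispensable. If you drop it, take any leaf $v$ of $T$ whose point lies in $\mathbf{s}$ and let the $b_i$ run over all proper ancestors of $v$ whose parent cell has side at least $\ell/\alpha$; in a compressed quadtree with a long branching spine there are $\Theta(n)$ such ancestors, and every one of them satisfies conditions~(1), (3) and~(4). Since your two steps would bound $k$ just as well in this situation, they cannot be correct as stated.

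The failure surfaces concretely in your Step~2 fix. The claim that ``the compressed quadtree collapses non-branching chains, so only $O(1)$ branching ancestors appear as distinct parent cells'' is false: in a compressed quadtree \emph{every} internal node branches, yet a root--to--leaf path can still contain $\Theta(n)$ nodes. Consequently the number of distinct $Q$-cells of side $\ge \ell/\alpha$ meeting $\mathbf{s}$ along a single chain is unbounded, and your geometric sum over levels does not terminate at $O((\alpha+1)^d)$.

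The argument the paper intends (it cites Lemma~9.4.3 of Narasimhan--Smid) uses disjointness from the outset: pairwise disjoint $C_{b_i}$ means the $b_i$ form an antichain in $T$, which is what rules out the chain scenario above. One then performs a \emph{single} volume packing on the outer rectangles $s_0(q_i)$ of the underlying $Q$-cells---each has minimum side at least $\tfrac{1}{2}L(s(\pi(b_i)))\ge \ell/(2\alpha)$ and meets $\mathbf{s}$---rather than a level-by-level sum. The extra factor $c$ accounts for the at most $O(c)$ $T$-nodes that can share a common $Q$-cell (Lemma~\ref{lem:number_nodes_per_cell}); after this multiplicity is handled, the disjoint-rectangle packing yields the $(2\alpha+2)^d$ factor directly.
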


The following lemma states that the radius of a head of a dumbbell is at most $\frac{1}{\sigma+1}$ of the length of the dumbbell. Recall that $\gdub{C_a}$ is the radius of $head(a)$.
\begin{lemma}\label{lem:head-size-to-length}
    Let $D(a,b)$ be a dumbbell and let $\ell(a,b)$ be its length. Let $p\in C_a$ and let $q\in C_b$. Then $\gdub{C_a}\leq \frac{\ell(a,b)}{\sigma+1}$ and $\gdub{C_b}\leq\frac{\ell(a,b)}{\sigma+1}$. Also $\gdub{C_a}\leq \frac{dist_G(p,q)}{\sigma}$ and $\gdub{C_b}\leq \frac{dist_G(p,q)}{\sigma}$.
\end{lemma}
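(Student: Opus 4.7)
The plan is to derive both inequalities by direct algebraic manipulation of the well-separation condition for the pair $\{C_a,C_b\}$. By Corollary~\ref{thm:GWSPD-graph-distance}, the WSPD$_G$ guarantees well-separation with respect to the graph metric, using $\gdub{}$ as the diameter proxy (since this is the value actually used at each node of the $c$-CT to judge separation). Thus, for the pair from which the dumbbell $D(a,b)$ is constructed, the inequality
\[
\max\{\gdub{C_a},\gdub{C_b}\}\cdot \sigma \;\leq\; dist_G(rep_a,rep_b)-\gdub{C_a}-\gdub{C_b}
\]
holds, and $\ell(a,b)=dist_G(rep_a,rep_b)$ by definition of the dumbbell's joining segment.

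For the first pair of inequalities, I would simply note that $\gdub{C_a}\leq\max\{\gdub{C_a},\gdub{C_b}\}$, so $\sigma\cdot\gdub{C_a}\leq \ell(a,b)-\gdub{C_a}-\gdub{C_b}\leq \ell(a,b)-\gdub{C_a}$, and rearranging yields $\gdub{C_a}(\sigma+1)\leq \ell(a,b)$. The symmetric argument gives the bound for $\gdub{C_b}$.

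For the second pair of inequalities, I would invoke the triangle inequality in the graph metric along a path routed through $p$ and $q$:
\[
dist_G(rep_a,rep_b)\;\leq\; dist_G(rep_a,p)+dist_G(p,q)+dist_G(q,rep_b).
\]
Since $rep_a,p\in C_a$ and $rep_b,q\in C_b$, each of $dist_G(rep_a,p)$ and $dist_G(q,rep_b)$ is bounded by the respective graph diameter, hence by $\gdub{C_a}$ and $\gdub{C_b}$. Substituting this upper bound on $dist_G(rep_a,rep_b)$ into the well-separation condition and cancelling the $\gdub{C_a}+\gdub{C_b}$ terms yields $\sigma\cdot \gdub{C_a}\leq dist_G(p,q)$, and similarly for $\gdub{C_b}$.

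There is no real obstacle here: the lemma is a direct consequence of the definition of well-separation combined with the triangle inequality. The only thing to be careful about is to use the $\gdub{}$-based formulation of well-separation (which is what the construction actually guarantees) rather than the weaker bound stated in terms of $diam_G$, and to make sure $rep_a\in C_a$ and $rep_b\in C_b$ so the triangle-inequality step bounds the detour by graph diameters.
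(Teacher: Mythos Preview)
Your proposal is correct and follows essentially the same approach as the paper: the paper also starts from the well-separation inequality $\max\{\gdub{C_a},\gdub{C_b}\}\cdot\sigma\leq \ell(a,b)-\gdub{C_a}-\gdub{C_b}$ and then bounds the right-hand side above by $dist_G(p,q)$ via the triangle inequality through $p$ and $q$. The only difference is presentation---the paper chains the two inequalities into one line rather than treating the $\ell(a,b)/(\sigma+1)$ bound and the $dist_G(p,q)/\sigma$ bound in separate paragraphs.
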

\begin{proof}
    Since $C_a$ and $C_b$ are well-separated with respect to $\sigma$, we have 
    \begin{align*}
        \max\{\gdub{C_a},\gdub{C_b}\}\cdot \sigma\leq \ell(a,b)-\gdub{C_a}-\gdub{C_b}\leq dist_G(p,q).
    \end{align*}
    Thus $\gdub{C_a}\leq\frac{\ell(a,b)}{\sigma+1}$ and $\gdub{C_b} \leq\frac{\ell(a,b)}{\sigma+1}$. Also $\gdub{C_a}\leq \frac{dist_G(p,q)}{\sigma}$ and $\gdub{C_b}\leq \frac{dist_G(p,q)}{\sigma}$.
\end{proof}

The heads $head(a)$, $head(b)$ of a dumbbell $D(a,b)$ might be very small compared to the length of $D(a,b)$. Let $L(s(a))$ denote the side length of cube $s(a)$. The following lemma states that the heads $head(\pi(a))$, $head(\pi(b))$ are at least proportional to the length of $D(a,b)$. 
\begin{lemma}\label{lem:head-size}
    Let $D(a,b)$ be a dumbbell and let $\ell$ be its length. Then $\gdub{C_{\pi(a)}}\geq\frac{\ell}{\sigma + 4}$ and $\gdub{C_{\pi(b)}}\geq\frac{\ell}{\sigma + 4}$. Also, $L(s(\pi(a)))\geq \frac{\ell}{c\sqrt{d}(\sigma+4)}$ and $L(s(\pi(b)))\geq \frac{\ell}{c\sqrt{d}(\sigma+4)}$.
\end{lemma}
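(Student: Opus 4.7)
My plan is to prove the lower bound on $\gdub{C_{\pi(a)}}$ (and hence on $L(s(\pi(a)))$); the bound for $\pi(b)$ follows by the symmetric argument with the roles of $a$ and $b$ swapped.

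First I would identify the stage in the recursive WSPD$_G$ construction of~\cite{book-geometric-apprx-algo-Peled2011} (Section~3.1.1) at which the node $\pi(a)$ is split. Starting from the root-pair, the algorithm descends by repeatedly splitting the endpoint whose $\gdub{\cdot}$ value is larger into its children in $T$, and $(a,b)$ is eventually accepted as well-separated. Tracing this chain backward, there must be an earlier call on a pair $\{C_{\pi(a)}, C_z\}$ that was found not well-separated, where $z$ is an ancestor of $b$ in $T$ (possibly $z=b$) and $\gdub{C_{\pi(a)}}\geq \gdub{C_z}$, since otherwise $z$ rather than $\pi(a)$ would have been split at that step. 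Verifying this structural claim rigorously --- by induction on the sequence of recursive calls, using the invariant that each call alters exactly one endpoint by descending to one of its children --- is the step I expect to require the most care.

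The second step is a short algebraic derivation using the graph-distance WSPD$_G$ of Corollary~\ref{thm:GWSPD-graph-distance}. Failure of well-separation for $\{C_{\pi(a)}, C_z\}$ with $\gdub{C_{\pi(a)}}\geq \gdub{C_z}$ yields
\[
dist_G(rep_{\pi(a)}, rep_z) < (\sigma+1)\,\gdub{C_{\pi(a)}} + \gdub{C_z}.
\]
Since $rep_a\in C_{\pi(a)}$ and $rep_b\in C_z$, applying the triangle inequality twice along $rep_a$--$rep_{\pi(a)}$--$rep_z$--$rep_b$ gives
\[
dist_G(rep_{\pi(a)}, rep_z) \;\geq\; \ell - \gdub{C_{\pi(a)}} - \gdub{C_z}.
\]
Chaining these two inequalities and using $\gdub{C_z}\leq \gdub{C_{\pi(a)}}$ produces $\ell \leq (\sigma+2)\gdub{C_{\pi(a)}} + 2\gdub{C_z} \leq (\sigma+4)\gdub{C_{\pi(a)}}$, which is the first claim.

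Finally, to transfer the bound from $\gdub{C_{\pi(a)}}$ to $L(s(\pi(a)))$, I would invoke the $c$-CT property from the proof of Lemma~\ref{lem:valid_GST}: the graph diameter of $C_{\pi(a)}$ is at most $c\cdot \diam{s(\pi(a))} = c\sqrt{d}\cdot L(s(\pi(a)))$, so $\gdub{C_{\pi(a)}} \leq c\sqrt{d}\cdot L(s(\pi(a)))$ and hence $L(s(\pi(a)))\geq \ell/(c\sqrt{d}(\sigma+4))$. The symmetric argument for $\pi(b)$ then completes the lemma.
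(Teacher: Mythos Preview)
Your approach is essentially the same as the paper's: both locate the recursive call at which $\pi(a)$ is split (the paper writes the partner node as $\pi^k(b)$, which is your $z$), use the failure of well-separation together with the triangle inequality to bound $\ell$ by $(\sigma+4)\,\gdub{C_{\pi(a)}}$, and then pass to the side length via $c$-packedness. The paper phrases the algebraic step as a contradiction with metric balls; your direct chain of inequalities is equivalent and arguably cleaner.

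Two small points deserve attention. First, your structural claim that $\pi(a)$ appears in the recursion chain fails in one edge case: when $a$ is itself a child of the lowest common ancestor $u$ of $a$ and $b$ in $T$. In the algorithm of~\cite{book-geometric-apprx-algo-Peled2011} the chain producing $(a,b)$ starts at the pair of children of $u$, so $\pi(a)=u$ is never visited and no call on $\{C_{\pi(a)},C_z\}$ exists. The paper treats this case separately (then $C_a,C_b\subseteq C_u=C_{\pi(a)}$, hence $\ell\leq\gdub{C_{\pi(a)}}$ trivially); you should add the same one-line check. Second, in the final transfer step you appeal to Lemma~\ref{lem:valid_GST} for $\gdub{C_{\pi(a)}}\leq c\sqrt d\cdot L(s(\pi(a)))$, but that lemma bounds the \emph{actual} graph diameter of the $c$-connected set, whereas $\gdub{\cdot}$ is the computed upper bound obtained by summing relevant edge lengths over the $3^d$ neighbouring cells and can be larger by a factor of $3^d$. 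The paper's own derivation accordingly yields $L(s(\pi(a)))\geq \ell/(3^d c\sqrt d(\sigma+4))$; since $d$ is treated as a fixed constant this is harmless, but the distinction between the true diameter and $\gdub{\cdot}$ should be kept straight.
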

\begin{proof}
    Without loss of generality, assume that $\gdub{C_{\pi(a)}}\leq \gdub{C_{\pi(b)}}$, thus it suffices to prove that $\gdub{C_{\pi(a)}}\geq \frac{\ell}{\sigma+4}$. Let $u$ be the lowest common ancestor of $a$ and $b$ in $T$, let $v$ be the child of $u$ that is ancestor of $a$ and let $w$ be the child of $u$ that is ancestor of $b$. 

    If $a=v$, then $\pi(a)=u$. Since $C_a\subseteq C_u$ and $C_b\subseteq C_u$, the length of $D(a,b)$ is at most the diameter of $C_u$. Thus $\ell\leq \gdub{C_{\pi(a)}}$, and the lemma holds in this case.

    If $a\neq v$, let $\pi^0(a)=a$ and let $\pi^i(a)=\pi(\pi^{i-1}(a))$ for $i\geq 1$. When computing the WSPD$_G$ in Corollary~\ref{thm:GWSPD-graph-distance}, we find well-separated pairs for the nodes pair $(v,w)$, which include the pair $\{C_a,C_b\}$. Let $k\geq 0$ be the integer such that finding well-separated pairs for $(\pi(a),\pi^k(b))$ recursively finds well-separated pairs for $(a,\pi^k(b))$ (note that such $k$ exists). Thus $C_{\pi(a)}$ and $C_{\pi^k(b)}$ are not well-separated, and we have $\gdub{C_{\pi^k(b)}} \leq \gdub{C_{\pi(a)}}$. 

    Assume that $\gdub{C_{\pi(a)}}<\frac{\ell}{\sigma + 4}$, we will show that the assumption leads to a contradiction. Let the representatives (centers) of $C_{a}$ and $C_{b}$ be $x$ and $y$, respectively. We have $\ell=dist_G(x,y)$. Let the representatives of $C_{\pi(a)}$ and $C_{\pi^k(b)}$ be $x'$ and $y'$, respectively. Since $x,x'\in C_{\pi(a)}$, 
    \begin{align*}
        dist_G(x,x')\leq \gdub{C_{\pi(a)}}.
    \end{align*} 
    Since $y,y'\in C_{\pi^k(b)}$, 
    \begin{align*}
        dist_G(y,y')\leq \gdub{C_{\pi^k(b)}}\leq \gdub{C_{\pi(a)}}.
    \end{align*} 
    By the triangle inequality, $\ell=dist_G(x,y)\leq dist_G(x,x')+dist_G(x',y')+dist_G(y',y)$. Thus 
    \begin{align*}
        dist_G(x',y')\geq \ell-2\cdot \gdub{C_{\pi(a)}} > \ell - \frac{2}{\sigma+4}\ell.
    \end{align*}

    Let $B_1$ ($B_2$) be the metric ball which is centered at $x'$ ($y'$) and has radius $\frac{\ell}{\sigma+4}$. The graph distance between $B_1$ and $B_2$ is at least 
    \begin{align*}
        dist_G(x',y')-\frac{2}{\sigma+4}\ell > \sigma \cdot \frac{\ell}{\sigma+4}.
    \end{align*}
    This implies that $B_1$ and $B_2$ are well-separated. By the assumption, $B_1$ contains $C_{\pi(a)}$ and $B_2$ contains $C_{\pi^k(b)}$. Therefore $C_{\pi(a)}$ and $C_{\pi^k(b)}$ are well-separated, which is a contradiction. 

    According to the definition of $\gdub{\cdot}$ and the $c$-packedness property, we have 
    \begin{align*}
        3^d\cdot c\cdot \sqrt{d}L(s(\pi(a)))\geq \gdub{\pi(a)}\geq \frac{\ell}{\sigma+4}.
    \end{align*}
    Thus $L(s(\pi(a)))\geq \frac{\ell}{3^dc\sqrt{d}(\sigma+4)}$. $L(s(\pi(b)))\geq \frac{\ell}{3^dc\sqrt{d}(\sigma+4)}$ follows from the same argument.
\end{proof}

The following lemma shows that the size of a head decreases by a factor of 2 after descending a constant number of levels.
\begin{lemma}\label{lem:descend-head-size}
    Let $b$ and $b'$ be two nodes in $T$ where $b'$ is an ancestor of $b$. If the path from $b'$ to $b$ contains at least $1+d+\log (4c\sqrt{d})$ edges, then $\gdub{C_{b}}\leq \frac{\gdub{C_{b'}}}{2}$.
\end{lemma}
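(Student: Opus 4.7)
The plan is to compare $\gdub{C_b}$ and $\gdub{C_{b'}}$ by tracking how the cube side length shrinks along the path in $T$ and plugging this into the $c$-packedness bound on $\gdub$.

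First, I would show that each edge of $T$ at least doubles the cube side length. By the construction in Section~\ref{subsec:c_CT}, a node of $T$ created during iteration $j$ has its $T$-children at strictly smaller quadtree cubes (either original vertices from leaves of $Q$, or representatives from earlier iterations $j' < j$ which processed level $i_{j'} > i_j$). Since the level values are integers, $L(s(\pi(v)))/L(s(v)) \geq 2$ for every non-root $v \in T$. Chaining this over the $k \geq 1 + d + \log(4c\sqrt{d})$ edges on the path from $b$ to $b'$ yields
$L(s(b')) \geq 2^k \cdot L(s(b)) \geq 8c\sqrt{d} \cdot 2^d \cdot L(s(b))$.

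Second, I would upper bound $\gdub{C_b}$ via $c$-packedness in the style of Lemma~\ref{lem:head-size}. By definition $\gdub{C_b}$ sums the relevant-edge overlap over $s(b)$ and its $3^d-1$ same-level neighbours, and this $3L(s(b))$-cube is enclosed in a ball of radius $\frac{3\sqrt{d}}{2}L(s(b))$; $c$-packedness therefore gives $\gdub{C_b} \leq \frac{3c\sqrt{d}}{2} L(s(b))$. Combined with the cube-ratio bound above, this yields $\gdub{C_b} \leq \frac{3}{16 \cdot 2^d} L(s(b'))$.

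The final step, which I expect to be the technical crux, is to establish a matching lower bound $\gdub{C_{b'}} \geq \frac{3}{8 \cdot 2^d} L(s(b'))$, so that the two estimates combine to $\gdub{C_b} \leq \gdub{C_{b'}}/2$. Since $s(b) \subseteq s(b')$ and the relevant-edge threshold only grows with cube size, the monotonicity $\gdub{C_{b'}} \geq \gdub{C_b}$ is immediate, but sharpening this to a constant fraction of $L(s(b'))$ will require invoking the fact that $b'$ is a genuine merge node with descendant $b$: the paths realising the merging at $b'$ lie inside $s^+(b')$, hence consist of relevant edges for $s(b')$, and their total length forces $\gdub{C_{b'}}$ to scale with $L(s(b'))$ up to a $c\sqrt{d}\cdot 2^{O(d)}$ factor. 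Pinning down this constant via a packing argument reminiscent of Lemma~\ref{lem:hypercube-intersecting} is the hardest part; the rest is routine bookkeeping using the cube ratio $2^k$ provided by the choice $k = 1 + d + \log(4c\sqrt{d})$.
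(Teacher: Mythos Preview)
Your first two steps are sound: each edge of the $c$-CT does shrink the cube side by at least a factor of~$2$, and the $c$-packedness upper bound $\gdub{C_b}\le \tfrac{3c\sqrt d}{2}\,L(s(b))$ is correct. The problem is entirely in step~3. The inequality $\gdub{C_{b'}}\ge \mathrm{const}\cdot L(s(b'))$ that you aim for is \emph{false}, even under the hypothesis of the lemma, and your proposed justification conflates two unrelated quantities: the total length of ``the paths realising the merging at $b'$'' is governed by the graph diameter of $C_{b'}$, not by $L(s(b'))$. A concrete obstruction: take a straight-line path of $256$ vertices with spacing~$\delta$ in~$\mathbb R^2$, positioned so that it straddles the centre of the unit root cube. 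The compressed quadtree makes one huge jump from the root (side~$1$) to cells of side $\Theta(\delta)$, and then has $\approx\log 128$ further levels. At the root $b'$ one has $\gdub{C_{b'}}=\Theta(\delta)$ while $L(s(b'))=1$, and the path below $b'$ has $\geq 8$ edges, exceeding $1+d+\log(4c\sqrt d)$ for $d=2$, $c=2$. So the ratio $\gdub{C_{b'}}/L(s(b'))$ can be made arbitrarily small; no packing argument \`a la Lemma~\ref{lem:hypercube-intersecting} can help, because there are simply no long edges to pack.

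The paper's proof avoids this trap by \emph{not} comparing $\gdub{C_{b'}}$ with $L(s(b'))$ at all. Instead it spends the first $d{+}1$ edges of the descent on a pigeonhole argument over the $2d$ axis-parallel half-spaces: each step down leaves a point of $C_{b'}$ on the far side of some face of the current cell, and after at most $d{+}1$ steps either two such witnesses lie on opposite sides of $s(b_j')$, or one witness is separated from $s(b_j')$ by a full slab of width $L(s(b_j'))$. Either way this forces $\gdub{C_{b'}}\ge L(s(b_j'))$ for the \emph{intermediate} node~$b_j'$. The remaining $\log(4c\sqrt d)$ edges are then used exactly as in your step~2 to shrink the side length by a factor $4c\sqrt d$, so that the $c$-packedness upper bound gives $\gdub{C_b}\le \tfrac12 L(s(b_j'))\le \tfrac12\gdub{C_{b'}}$. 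The whole point of the $(d{+}1)$ summand in the edge count is to buy this intermediate scale; routing the comparison through $L(s(b'))$, as you propose, cannot work because compression in the quadtree can make $L(s(b'))$ arbitrarily large relative to everything else.
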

\begin{proof} 
    We first prove a lower bound on $\gdub{C_{b'}}$. Let $b_0'=b',b_1',\ldots,b_h'=b$ be the path from $b'$ to $b$ in $T$. When descending from $b_i'$ to $b_{i+1}'$, $0\leq i<h$, there must be a point in $C_{b'}$ lying in the half space delimited by the hyperplane through one face of $s(b_i')$ and not containing $s(b_i')$. Therefore descending from $b'$ for at most $1+d$ levels, at $b_j'$, $1\leq j\leq d+1$, either (i) there are two points in $C_{b'}$ lying in two half spaces that are (respectively) delimited by one of the hyperplanes through two opposite faces of $s(b_j')$ and do not contain $s(b_j')$, or (ii) there is one point in $C_{b'}$ that is separated from $s(b_j')$ by a slab of width at least $L(s(b_j'))$. In Figure~\ref{fig:lem-edges-between-heads}(a), $s(b_j')$ is the cell containing the red point, the two blue points lie in the two half spaces delimited by the hyperplanes through two opposite faces of $s(b_j')$. In Figure~\ref{fig:lem-edges-between-heads}(b), the blue point is separated from $s(b_j')$ by a slab of width at least $L(s(b_j'))$. In either case, we know that $\gdub{C_{b'}}\geq L(s(b_j'))$.

    On descending one level, the size of a quadtree cell decreases by a factor of at least 2. Thus $L(s(b_{j+\log (4c\sqrt{d})}'))\leq \frac{L(s(b_j'))}{4c\sqrt{d}}$. From the $c$-packedness property, we know that $\gdub{C_{b_{j+\log (4c\sqrt{d})}'}}$, is at most $\frac{L(s(b_j'))}{2}$. Thus $\gdub{C_b}\leq \frac{\gdub{C_{b'}}}{2}$. 
\end{proof}
\begin{figure}[H]
	\centering
		\includegraphics[width=.8\textwidth]{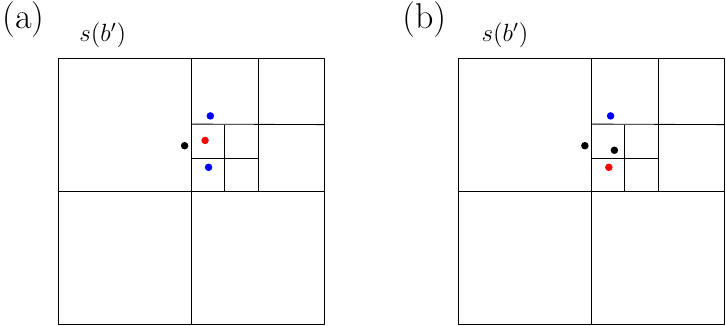}
		\caption{(a) $s(b_j')$ is the cell containing the red point. The two blue points lie in two half spaces which are (respectively) delimited by (one of) the hyperplanes through two opposite faces of $s(b_j')$ and do not contain $s(b_j')$. (b) $s(b_j')$ is the cell containing the red point. The blue point is separated from $s(b_j')$ by a slab of width at least $L(s(b_j'))$.}
		\label{fig:lem-edges-between-heads}
\end{figure}

Let $D(a,b)$ and $D(a',b')$ be two dumbbells whose length are in $[\ell,2\ell]$ and $b'$ is an ancestor of $b$. The next lemma gives a constant upper bound on the number of edges on the path from $b'$ to $b$ in $T$. 
\begin{lemma}\label{lem:edges-on-path-between-heads}
    Let $\ell > 0$, and let $D(a,b)$ and $D(a',b')$ be two dumbbells whose length are in $[\ell,2\ell]$. Let $b'$ be an ancestor of $b$ in $T$, and let $h$ be the number of edges on the path between $b'$ and $b$ in $T$. We have 
    \begin{align*}
        h\leq 2+d+\log (4c\sqrt{d}) + (1+d+\log (4c\sqrt{d}))\log \left(\frac{2(\sigma+4)}{\sigma+1}\right).
    \end{align*}
\end{lemma}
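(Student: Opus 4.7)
The plan is to combine the length-vs-head-size bounds from Lemmas~\ref{lem:head-size-to-length} and~\ref{lem:head-size} with the halving statement of Lemma~\ref{lem:descend-head-size}. Intuitively, since the two dumbbells have comparable length, the heads $head(b')$ and $head(\pi(b))$ must be comparable up to a multiplicative factor depending only on $\sigma$; and since $\gdub{\cdot}$ halves every $p := 1+d+\log(4c\sqrt{d})$ levels of descent, this forces $b'$ and $\pi(b)$ to lie within a bounded (in $\sigma$, $c$, $d$) number of levels of each other in $T$.

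First I would bound both head sizes in terms of $\ell$. Applying Lemma~\ref{lem:head-size-to-length} to $D(a',b')$, whose length is at most $2\ell$, yields $\gdub{C_{b'}} \leq 2\ell/(\sigma+1)$. Applying Lemma~\ref{lem:head-size} to $D(a,b)$, whose length is at least $\ell$, yields $\gdub{C_{\pi(b)}} \geq \ell/(\sigma+4)$. Dividing,
\[
\frac{\gdub{C_{b'}}}{\gdub{C_{\pi(b)}}} \;\leq\; \frac{2(\sigma+4)}{\sigma+1}.
\]

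Next I would iterate Lemma~\ref{lem:descend-head-size} along the descent from $b'$ to $\pi(b)$, which by assumption has $h-1$ edges. Setting $k := \lfloor (h-1)/p \rfloor$ and placing intermediate ancestors $v_1,\ldots,v_{k-1}$ of $\pi(b)$ at depths $p, 2p, \ldots, (k-1)p$ below $b'$, each consecutive pair $(v_{i-1}, v_i)$, as well as the final pair $(v_{k-1}, \pi(b))$, sits at descent distance at least $p$, so Lemma~\ref{lem:descend-head-size} halves $\gdub{\cdot}$ across each step. Chaining gives $\gdub{C_{\pi(b)}} \leq \gdub{C_{b'}}/2^k$, and the ratio bound above then yields $2^k \leq 2(\sigma+4)/(\sigma+1)$, i.e., $k \leq \log(2(\sigma+4)/(\sigma+1))$.

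Finally I would unwind: $\lfloor (h-1)/p \rfloor \leq \log(2(\sigma+4)/(\sigma+1))$ forces $h - 1 < (1 + \log(2(\sigma+4)/(\sigma+1)))\,p$, which, substituting back $p = 1+d+\log(4c\sqrt{d})$, is exactly the stated upper bound on $h$. The boundary cases $b' = b$ ($h = 0$) and $b' = \pi(b)$ ($h = 1$, giving $k = 0$) fall trivially below the claimed bound. I do not anticipate any real obstacle: the head-size inequalities are already in hand, and what remains is short ratio-vs-depth bookkeeping tying them to the descent lemma, the only mild care being that the last application of Lemma~\ref{lem:descend-head-size} goes all the way to $\pi(b)$ rather than stopping at $v_k$, which is fine because $h-1-(k-1)p \geq p$ by the choice of $k$.
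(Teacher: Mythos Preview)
Your proposal is correct and follows essentially the same approach as the paper: both bound $\gdub{C_{b'}}$ from above via Lemma~\ref{lem:head-size-to-length} and $\gdub{C_{\pi(b)}}$ from below via Lemma~\ref{lem:head-size}, obtain the ratio $2(\sigma+4)/(\sigma+1)$, and then iterate Lemma~\ref{lem:descend-head-size} along the $(h-1)$-edge path from $b'$ to $\pi(b)$ to conclude $2^{\lfloor (h-1)/p\rfloor}\le 2(\sigma+4)/(\sigma+1)$ with $p=1+d+\log(4c\sqrt d)$, from which the stated bound on $h$ follows by the floor inequality. Your explicit placement of the intermediate ancestors $v_1,\ldots,v_{k-1}$ is just a slightly more detailed rendering of the same iteration the paper performs in one line.
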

\begin{proof}
    If the number of edges between $b'$ and $b$ is at most $2+d+\log (4c\sqrt{d})$, the lemma holds trivially. Thus in the following we assume that $h>2+d+\log (4c\sqrt{d})$.

    According to Lemma~\ref{lem:head-size-to-length}, $\gdub{C_{b'}}$ is at most $\frac{1}{\sigma+1}$ times the length of $D(a',b')$. Thus 
    \begin{align*}
        \gdub{C_{b'}}\leq \frac{2\ell}{\sigma+1}.
    \end{align*}
    From Lemma~\ref{lem:head-size} we have 
    \begin{align*}
        \gdub{C_{\pi(b)}}\geq \frac{\ell}{\sigma+4}.
    \end{align*}
    Therefore $\frac{\gdub{C_{b'}}}{\gdub{C_{\pi(b)}}}\leq \frac{\frac{2\ell}{\sigma+1}} {\frac{\ell}{\sigma+4}}=\frac{2(\sigma+4)}{\sigma+1}.$
    According to Lemma~\ref{lem:edges-on-path-between-heads}, $\gdub{\cdot}$ decreases by a factor of at least 2 if descending $1+d+\log (4c\sqrt{d})$ levels. Thus 
    \begin{align*}
        2^{\lfloor\frac{h-1}{1+d+\log (4c\sqrt{d})}\rfloor}\leq \frac{2(\sigma+4)}{\sigma+1}.
    \end{align*}
    Since $\frac{h-1}{1+d+\log (4c\sqrt{d})}-1 \leq \lfloor\frac{h-1}{1+d+\log (4c\sqrt{d})}\rfloor$, the lemma hence follows.
\end{proof}

The following lemma shows an upper bound on the number of dumbbells having a fixed head and having similar length.
\begin{lemma}\label{lem:dumbbells-with-fixed-head}
    Let $a$ be a node of $T$ and let $\ell>0$. The number of dumbbells having $head(a)$ as one head and having length in $[\ell,2\ell]$ is at most $c(8c\sqrt{d}(\sigma+4)+2)^d$.
\end{lemma}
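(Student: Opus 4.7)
The plan is to reduce this counting problem to an application of the hypercube-packing lemma (Lemma~\ref{lem:hypercube-intersecting}). Let $b_1,\ldots,b_k$ be the distinct nodes of $T$ such that $D(a,b_i)$ is a dumbbell of length in $[\ell,2\ell]$. I will verify the four hypotheses of Lemma~\ref{lem:hypercube-intersecting} for a suitable hypercube $\mathbf{s}$ and constant $\alpha$, and then read off the bound.

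First I would fix the hypercube. Since the length of each $D(a,b_i)$ is at most $2\ell$, we have $\mathrm{dist}_G(rep_a, rep_{b_i})\leq 2\ell$, and because Euclidean distances are at most graph distances in $G$, $rep_{b_i}$ lies in the axis-aligned hypercube $\mathbf{s}$ of side length $4\ell$ centered at $rep_a$. Since $rep_{b_i}\in C_{b_i}\subseteq s(b_i)$, we obtain $s(b_i)\cap \mathbf{s}\neq\emptyset$, which is condition (4). Condition (1) is automatic: the root of $T$ represents all of $V$, so it cannot participate as a head of a WSPD pair.

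Next I would use Lemma~\ref{lem:head-size} to verify condition (3). Since $\ell(a,b_i)\geq \ell$, that lemma gives $L(s(\pi(b_i)))\geq \tfrac{\ell}{c\sqrt{d}(\sigma+4)}$, so setting $\alpha = 4c\sqrt{d}(\sigma+4)$ yields $L(s(\pi(b_i)))\geq \tfrac{4\ell}{\alpha}$, exactly condition (3) with the side length $4\ell$ of $\mathbf{s}$.

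The main obstacle is establishing disjointness of the sets $C_{b_i}$ (condition (2)). Since the $b_i$ are nodes of the $c$-CT $T$, any two sets $C_{b_i}, C_{b_j}$ are either nested or disjoint, so it suffices to show that no two of the $b_i$ are in an ancestor--descendant relation. Suppose towards contradiction that $b_j$ is a proper descendant of $b_i$. Then $C_{b_j}\subsetneq C_{b_i}$, so the well-separation of $C_a$ and $C_{b_i}$ (which holds because $\{A,B\}=\{C_a,C_{b_i}\}$ is a WSPD pair) would immediately imply well-separation of $C_a$ and $C_{b_j}$. But the WSPD construction from~\cite{book-geometric-apprx-algo-Peled2011} (Section 3.1.1) that we apply in Section~\ref{subsec:construction} only outputs a pair $(a,b_j)$ when the pair $(a, \pi(b_j))$ was not well-separated and hence was recursively split; since the ancestor $b_i$ of $b_j$ already formed a well-separated pair with $a$, the algorithm never descends below $b_i$ on that branch, contradicting the existence of the pair $(a,b_j)$. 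Hence the $C_{b_i}$ are pairwise disjoint, condition (2) holds, and Lemma~\ref{lem:hypercube-intersecting} yields
\[
k\leq c(2\alpha+2)^d = c\bigl(8c\sqrt{d}(\sigma+4)+2\bigr)^d,
\]
as claimed.
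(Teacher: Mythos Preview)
Your proposal is correct and follows exactly the paper's route: center a hypercube of side $4\ell$ at $rep_a$, invoke Lemma~\ref{lem:head-size} to get condition~(3), and apply Lemma~\ref{lem:hypercube-intersecting} with $\alpha=4c\sqrt{d}(\sigma+4)$. You are in fact more careful than the paper in verifying the disjointness hypothesis~(2), which the paper's own proof silently assumes; your recursion argument for it is slightly imprecise (the parent call producing $(a,b_j)$ could just as well be $(\pi(a),b_j)$ rather than $(a,\pi(b_j))$), but the conclusion follows immediately from the uniqueness clause in the WSPD definition, since $C_{b_j}\subsetneq C_{b_i}$ would make any $p\in C_a$, $q\in C_{b_j}$ belong to two distinct pairs.
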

\begin{proof}
    Let $b$ be a node of $T$ such that $D(a,b)$ is a dumbbell whose length is in $[\ell,2\ell]$. Let $x$ be the center of $head(a)$ and let $y$ be the center of $head(b)$. We have $dist_G(x,y)\leq 2\ell$. 
    
    To bound the number of such nodes like $b$, consider the hypercube $\mathbf{s}$ which is centered at the center of $head(a)$ and has side length $4\ell$. We have $s(b)\cap \mathbf{s}\neq \emptyset$. According to Lemma~\ref{lem:head-size}, $L(s(\pi(b)))\geq \frac{\ell}{3^dc\sqrt{d}(\sigma+4)}.$

    Applying Lemma~\ref{lem:hypercube-intersecting} with $\ell=4\ell$ and $\alpha=43^dc\sqrt{d}(\sigma+4)$, we have the number of such nodes as $b$ is at most $c(8c3^d\sqrt{d}(\sigma+4)+2)^d$.
\end{proof}

Now we are ready to prove the main packing property of dumbbells with similar length, which is critical for establishing the empty-region property.
\begin{lemma}\label{lem:main-packing-tc}
    Let $\gamma$ and $\ell$ be positive numbers. Let $D(u,v)$ be a dumbbell with length in $[\ell,2\ell]$. The number of dumbbells $D(a,b)$ such that 
    \begin{enumerate}[(i)]
        \item the length of $D(a,b)$ is in $[\ell,2\ell]$,
        \item at least one of $head(a)$ and $head(b)$ is within graph distance $\gamma \ell$ to one head of $D(u,v)$
    \end{enumerate}
    is $t_{\sigma\gamma}=O(c^{2d+2}\sigma^d(1+\gamma\sigma)^d)$.
\end{lemma}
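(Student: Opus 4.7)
The plan is a two-step packing argument. By symmetry, the count is inflated by a factor of $4$ and I assume throughout that $head(a)$ is within graph distance $\gamma\ell$ of $head(u)$; the remaining three pairings (swapping $a\leftrightarrow b$ and/or $u\leftrightarrow v$) are handled analogously. The first step will bound the number of candidate tree nodes $a$, and the second will count the partners $b$ for each such $a$ via a direct appeal to Lemma~\ref{lem:dumbbells-with-fixed-head}.

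For Step 1, Lemma~\ref{lem:head-size-to-length} applied to both $D(a,b)$ and $D(u,v)$ gives $\gdub{C_a}\leq 2\ell/(\sigma+1)$ and $\gdub{C_u}\leq 2\ell/(\sigma+1)$. Choosing witnesses $p\in head(a)$ and $q\in head(u)$ of the proximity hypothesis and applying the triangle inequality,
\begin{align*}
    dist_G(rep_a, rep_u) \leq \gdub{C_a} + \gamma\ell + \gdub{C_u} \leq \gamma\ell + \frac{4\ell}{\sigma+1}.
\end{align*}
Since Euclidean distance is dominated by graph distance, $rep_a$ lies in a fixed hypercube $\mathbf{s}$ of side length $\Theta((\gamma+1/\sigma)\ell)$ centered at $rep_u$, and in particular $s(a)\cap \mathbf{s}\neq\emptyset$. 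Combining this with the lower bound $L(s(\pi(a)))\geq \ell/(3^d c\sqrt{d}(\sigma+4))$ from Lemma~\ref{lem:head-size}, I plan to invoke Lemma~\ref{lem:hypercube-intersecting} with $\alpha = \Theta(c(1+\gamma\sigma))$, which yields at most $c(2\alpha+2)^d = O(c^{d+1}(1+\gamma\sigma)^d)$ candidate nodes $a$.

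For Step 2, Lemma~\ref{lem:dumbbells-with-fixed-head} immediately bounds the number of dumbbells having $head(a)$ as one head and length in $[\ell,2\ell]$ by $O(c^{d+1}\sigma^d)$. Multiplying the two bounds and absorbing the symmetry factor gives the claimed $O(c^{2d+2}\sigma^d(1+\gamma\sigma)^d)$.

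The main obstacle will be verifying the disjointness hypothesis of Lemma~\ref{lem:hypercube-intersecting} in Step 1: the enumerated candidates are distinct nodes of the $c$-CT, but the lemma also demands that the sets $C_a$ be pairwise disjoint. I expect this to follow from the WSPD$_G$ recursion structure, since two tree nodes whose connected sets are comparable in $T$ cannot both arise as endpoints of dumbbells whose lengths lie in a common dyadic range $[\ell,2\ell]$. Once this detail is pinned down, the remainder is routine constant-chasing.
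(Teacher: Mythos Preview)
Your outline tracks the paper's proof closely, but the step you flag as the ``main obstacle'' is a genuine gap, and your proposed resolution does not work. It is \emph{not} true that two comparable nodes of $T$ are forbidden from both being heads of dumbbells whose lengths fall in the same dyadic window $[\ell,2\ell]$. Nothing in the WSPD$_G$ recursion prevents a node $a'$ and a proper descendant $a$ from each being paired (with different partners $b'$, $b$) at comparable scales; indeed, Lemma~\ref{lem:edges-on-path-between-heads} is stated precisely for such ancestor--descendant pairs and bounds the number of edges between them by a constant $h$ rather than by zero. The paper is explicit about this: one cannot apply Lemma~\ref{lem:hypercube-intersecting} directly to the full list of candidate nodes because the sets $C_{a_i}$ need not be pairwise disjoint.

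The fix the paper uses, and which your proposal is missing, is a layering argument. First pass to the maximal candidates $a_1',\dots,a_{k'}'$ (those not lying in each other's subtrees); these \emph{are} pairwise disjoint, so Lemma~\ref{lem:hypercube-intersecting} bounds $k'$. Then invoke Lemma~\ref{lem:edges-on-path-between-heads} to see that every remaining candidate lies within $h=O(\log c)$ levels below some $a_j'$; stratifying by depth $l\in\{1,\dots,h\}$ gives, at each fixed level, another pairwise-disjoint family to which Lemma~\ref{lem:hypercube-intersecting} applies. Summing over the $h$ levels yields $k=O(c^{d+1}(1+\gamma\sigma)^d)$ as you want. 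Once you insert this step, your Step~2 via Lemma~\ref{lem:dumbbells-with-fixed-head} goes through as written.
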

\begin{proof}
    Let $\overline{\mathcal{D}}$ denote the set of dumbbells $D(a,b)$ such that at least one head of $D(a,b)$ is within graph distance $\gamma\ell$ to the head $head(u)$ of $D(u,v)$. We will give an upper bound for $|\overline{\mathcal{D}}|$.

    Without loss of generality, assume that the head $head(a)$ of $D(a,b)$ is within graph distance $\gamma \ell$ to $head(u)$. From Lemma~\ref{lem:head-size-to-length}, we have $\gdub{C_a}\leq \frac{2\ell}{\sigma+4}$ and $\gdub{C_u}\leq \frac{2\ell}{\sigma+4}$. Thus the graph distance between the center of $head(a)$ ($rep_a$) and the center of $head(u)$ ($rep_u$) is at most 
    \begin{align*}
        \gamma\ell+\frac{4\ell}{\sigma+4}=(\gamma+\frac{4}{\sigma+4})\ell.
    \end{align*}
    Observe that when graph distance between $rep_a$ and $rep_u$ is at most $(\gamma+\frac{4}{\sigma+4})\ell$, $rep_a$ must lie in the hypercube $\mathbf{s}$ which is centered at $rep_u$ and has side length $(2\gamma+\frac{8}{\sigma+4})\ell$, thus $s(a)\cap \mathbf{s}\neq \emptyset$. 

     Let $a_1,a_2,\ldots,a_k$ be the nodes in $T$ such that $s(a_i)\cap \mathbf{s}\neq \emptyset$, $1\leq i\leq k$. According to Lemma~\ref{lem:dumbbells-with-fixed-head}, we have $|\overline{\mathcal{D}}|\leq c(8c3^d\sqrt{d}(\sigma+4)+2)^d\cdot k.$
     
     To upper bound $k$, one can not directly apply Lemma~\ref{lem:hypercube-intersecting} since the sets $C_{a_i}$, $1\leq i\leq k$, are not pairwise disjoint. Let $a_1',a_2',\ldots,a_{k'}'$ be the nodes in $a_1,a_2,\ldots,a_k$ such that: (i) each $a_i$, $1\leq i\leq k$, is in the subtree rooted at $a_j'$ for some $1\leq j\leq k'$, and (ii) for any $1\leq i<j\leq k'$, neither $a_{i}'$ is in the subtree rooted at $a_{j}'$ nor $a_{j}'$ is in the subtree rooted at $a_{i}'$. By condition (ii), the sets $C_{a_{j}'}$, $1\leq j\leq k'$, are pairwise disjoint. Therefore we can apply Lemma~\ref{lem:hypercube-intersecting} to bound $k'$. Let $\ell'=(2\gamma+\frac{8}{\sigma+4})\ell$, according to Lemma~\ref{lem:head-size}, we have 
     \begin{align*}
         L(s(\pi(a_{j}')))\geq \frac{1}{3^dc\sqrt{d}(\sigma+4)}\ell=\frac{1}{3^dc\sqrt{d}(\sigma+4)(2\gamma+\frac{8}{\sigma+4})}\ell'=\frac{1}{2c3^d\sqrt{d}(\gamma(\sigma+4)+4)}\ell'.
     \end{align*}
     Applying Lemma~\ref{lem:hypercube-intersecting}, with $\ell=\ell'$ and $\alpha=2c3^d\sqrt{d}(\gamma(\sigma+4)+4)$, we have 
     \begin{align*}
         k'\leq c(4c3^d\sqrt{d}(\gamma(\sigma+4)+4)+2)^d.
     \end{align*}

     By condition (i), for each $j$ with $1\leq j\leq k'$, define 
     \begin{align*}
         \mathcal{H}_j=\{a_i:1\leq i\leq k\text{ and }a_j'\text{ is ancestor of }a_i\},
     \end{align*}
     and
     \begin{align*}
         \mathcal{H}_{j,l}=\{a_i:1\leq i\leq k\text{ and }a_j'\text{ is the }l\text{-th ancestor of }a_i\}.
     \end{align*}
     
     Let $h=2+d+\log (4c\sqrt{d}) + (1+d+\log (4c\sqrt{d}))\log \left(\frac{2(\sigma+4)}{\sigma+1}\right)$. According to Lemma~\ref{lem:edges-on-path-between-heads}, $\mathcal{H}_{j,l}$ is empty when $l> h$. Since the sets $C_{a_j'}$, $1\leq j\leq k'$, are pairwise disjoint, the sets $C_{a'}$, where $a'\in \bigcup\limits_{1\leq j\leq k'}\mathcal{H}_{j,l}$, are pairwise disjoint. Thus for each $l$ with $1\leq l\leq h$, applying  Lemma~\ref{lem:hypercube-intersecting} with $\ell=\ell'$ and $\alpha=2c3^d\sqrt{d}(\gamma(\sigma+4)+4)$, we have 
     \begin{align*}
         |\bigcup\limits_{1\leq j\leq k'}\mathcal{H}_{j,l}|\leq c(4c3^d\sqrt{d}(\gamma(\sigma+4)+4)+2)^d.
     \end{align*}
     Since $\mathcal{H}_j=\bigcup\limits_l\mathcal{H}_{j,l}$, we have 
     \begin{align*}
         |\bigcup\limits_{1\leq j\leq k'}\mathcal{H}_j|\leq ch(4c3^d\sqrt{d}(\gamma(\sigma+4)+4)+2)^d.
     \end{align*}

     Therefore $k\leq k'+|\bigcup\limits_{1\leq j\leq k'}\mathcal{H}_j|\leq c(h+1)(4c3^d\sqrt{d}(\gamma(\sigma+4)+4)+2)^d$. Since the dimension $d$ is constant and $c,\sigma>1$, we have $k=O(c^{d+1}(1+\gamma\sigma)^d)$ and $|\overline{\mathcal{D}}|=O(c^{2d+2}\sigma^d(1+\gamma\sigma)^d)$. The same upper bound holds for the set of dumbbells $D(a,b)$ such that at least one head of $D(a,b)$ is within graph distance $\gamma\ell$ to $head(v)$ of $D(u,v)$. This concludes the proof of the lemma.
\end{proof}

\subsection{Partition the dumbbells}\label{ssec:partition-dumbbells}
Let $\mathcal{D}$ denote the set of $m$ dumbbells constructed from the WSPD$_G$ in Corollary~\ref{thm:GWSPD-graph-distance}. Dumbbells in $\mathcal{D}$ are partitioned into a number of groups such that each group satisfies two properties. Let $D$ and $D'$ be any two dumbbells in the same group, and let $\ell$ and $\ell'$ be their length, respectively. Without loss of generality, assume that $\ell\leq \ell'$. $D$ and $D'$ are required to satisfy:\\
\textbf{Length-grouping property:} Either $\ell'\leq 2\ell$ or $\ell'\geq \sigma\ell$ where $\sigma$ is the separation ratio of the graph WSPD. Therefore, either $D$ and $D'$ has similar length, or their length differs significantly by a factor of at least $\sigma$.\\
\textbf{The empty-region property:} If $\ell'\leq 2\ell$, the graph distance between any head of $D$ and any head of $D'$ is at least $\gamma\ell$, where $\gamma$ is a positive real number. Therefore there is no (head of other) dumbbells of similar length in the vicinity of $D$'s ($D'$'s) heads.

The length-grouping property is established as follows. Sort the dumbbells $\mathcal{D}$ according to their length. Let the normalized length of the minimum dumbbell length be 1. Consider the geometric sequence of common ratio 2 starting from 1. Every two consecutive numbers in the sequence corresponds to an interval $[2^{i},2^{i+1}]$, $i\geq 0$, which we call bucket $i$. See Figure~\ref{fig:length-grouping} for an illustration. Put the dumbbells by their normalized length into the corresponding buckets in order\footnote{This requires the word RAM model of computation.}. Dumbbells in buckets $i, \lceil \log \sigma\rceil+1+i, 2\lceil \log \sigma\rceil+2+i, \ldots$, $0\leq i \leq \lceil \log \sigma\rceil$, form a dumbbell group $i$. Each dumbbell group $i$, $0\leq i \leq \lceil \log \sigma\rceil$, satisfies the length-grouping property. The running time of the algorithm is dominated by sorting the dumbbells by their length, which is $O(m\log m)$. We have the following lemma. 
\begin{figure}[H]
	\centering
		\includegraphics[width=.8\textwidth]{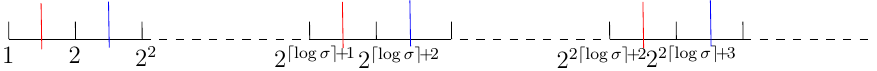}
		\caption{The buckets $0,\lceil \log \sigma \rceil +1,2\lceil \log \sigma\rceil+2,\ldots$ are stabbed with red segments. The buckets $1,\lceil \log \sigma \rceil +2,2\lceil \log \sigma\rceil+3,\ldots$ are stabbed with blue segments.}
		\label{fig:length-grouping}
\end{figure}
\begin{lemma}\label{lem:length-grouping}
    In $O(m\log m )$ time one can partition the dumbbells $\mathcal{D}$ into $\lceil \log \sigma \rceil+1$ groups. Let $D$ and $D'$ be two dumbbells in dumbbell group $i$, $0\leq i\leq \lceil \log \sigma \rceil$. Let $\ell$ ($\ell'$) be the length of $D$ ($D'$) where $\ell\leq \ell'$. Then $\ell'\leq 2\ell$ or $\ell'\geq \sigma\ell$.
\end{lemma}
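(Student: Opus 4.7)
The plan is to verify the two claims of the lemma for the construction already described in the paragraph preceding the statement: first the $O(m\log m)$ running time, and second the length-grouping property for any two dumbbells placed in the same group. Since the construction is completely explicit (sort by length, assign to geometric buckets of ratio $2$, then stride through the buckets with step $\lceil \log \sigma\rceil+1$), the proof is essentially a correctness verification together with a case analysis.

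For the running time, I would argue that the cost is dominated by sorting the $m$ dumbbells by length, giving $O(m\log m)$. The bucket assignment then uses the normalised length: a dumbbell of normalised length $x$ is placed in bucket $\lfloor \log x\rfloor$, which takes constant time per dumbbell in the word RAM model (as indicated by the footnote). Distributing the buckets into the $\lceil \log \sigma\rceil +1$ groups by taking indices modulo $\lceil \log\sigma\rceil+1$ is also $O(m)$. The total is therefore $O(m\log m)$.

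For correctness, I would consider two dumbbells $D,D'$ assigned to the same group $i$, with lengths $\ell\leq \ell'$. Let $b_1$ be the bucket of $D$ and $b_2$ the bucket of $D'$, so that $b_1\leq b_2$ and $b_2 - b_1$ is a non-negative multiple of $\lceil \log \sigma\rceil +1$. The case $b_1=b_2$ means both lengths lie in $[2^{b_1},2^{b_1+1}]$, giving $\ell'\leq 2\ell$ directly. Otherwise $b_2\geq b_1+\lceil\log\sigma\rceil+1$, so
\[
\ell' \;\geq\; 2^{b_2} \;\geq\; 2^{b_1+\lceil\log\sigma\rceil+1} \;\geq\; \sigma\cdot 2^{b_1+1} \;\geq\; \sigma\ell,
\]
where the last inequality uses $\ell\leq 2^{b_1+1}$. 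This establishes the required dichotomy and completes the argument.

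There is no serious obstacle here; the only subtlety is making sure the stride $\lceil \log \sigma\rceil +1$ is chosen so that the strict inequality $2^{\lceil\log\sigma\rceil}\geq \sigma$ produces the factor-$\sigma$ gap even after accounting for the worst case within a bucket (upper endpoint $2^{b_1+1}$ for $D$ versus lower endpoint $2^{b_2}$ for $D'$). This is why the stride is $\lceil \log\sigma\rceil+1$ rather than $\lceil \log \sigma\rceil$, and I would flag this single point of care explicitly in the write-up.
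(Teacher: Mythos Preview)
Your proposal is correct and follows exactly the construction and reasoning the paper gives in the paragraph immediately preceding the lemma; the paper simply asserts the length-grouping property and the $O(m\log m)$ bound without the explicit bucket-index computation you supply. Your case analysis (same bucket versus buckets at least $\lceil\log\sigma\rceil+1$ apart) and the observation about why the stride must be $\lceil\log\sigma\rceil+1$ are precisely the verification the paper leaves implicit.
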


For each dumbbell group $i$, let $\mathcal{E}_1,\ldots,\mathcal{E}_k$ be its dumbbells where subset $\mathcal{E}_j$, $1\leq j\leq k$, contains dumbbells of similar length. Assume that the dumbbells in $\mathcal{E}_j$ have length in the interval $[\ell,2\ell]$ for $\ell>0$. We prove in Lemma~\ref{lem:main-packing-tc} that for any dumbbell $D$ in $\mathcal{E}_j$, there are at most $t_{\sigma\gamma}=O(c^{2d+2}\sigma^d(1+\gamma\sigma)^d)$ dumbbells in $\mathcal{E}_j$ such that one head of the dumbbells is within graph distance $\gamma \ell$ to one head of $D$. 

Define a graph $G_j$ whose vertex set is the set $\mathcal{E}_j$ of dumbbells. For two dumbbells $D$ and $D'$ in $\mathcal{E}_j$, there is an edge in $G_j$ joining $D$ and $D'$ if one head of $D$ is within graph distance $2\gamma\ell$ to one head of $D'$. According to Lemma~\ref{lem:main-packing-tc}, each vertex in $G_j$ has degree at most $t_{\sigma,2\gamma}$. Thus the vertices of $G_j$ can be colored using at most $1+t_{\sigma,2\gamma}$ colors, such that no two adjacent vertices have the same color. Given $G_j$, such a coloring can be computed in $O(t_{\sigma,2\gamma}|\mathcal{E}_j|)$ time by a greedy algorithm. Dumbbells with the same color are put into the same subset, which satisfies the empty-region property. Let $\mathcal{E}_{j,l}$, $1\leq l\leq 1+t_{\sigma,2\gamma}$, be such a subset. The subsets $\mathcal{E}_{j,l}$, $1\leq l\leq 1+t_{\sigma,2\gamma}$, form a partition of $\mathcal{E}_j$.  

The graph $G_j$ is built as follows. We construct a graph $G_j'$, from which we can obtain $G_j$. Let $D(u,v)$ be a dumbbell in $\mathcal{E}_j$ and consider one of its heads, say $head(u)$. Let $D(a,b)$ be a dumbbell in $\mathcal{E}_j$ such that one of its head, say $head(a)$, is within graph distance $2\gamma\ell$ to $head(u)$. According to Lemma~\ref{lem:head-size-to-length}, $\gdub{C_a}\leq \frac{2\ell}{\sigma+1}$ and $\gdub{C_u}\leq \frac{2\ell}{\sigma+1}$. If $head(a)$ is within graph distance $2\gamma\ell$ to $head(u)$, the graph distance between the center of $head(u)$ ($rep_u$) and the center of $head(a)$ ($rep_a$) is at most 
\begin{align*}
    \frac{4\ell}{\sigma+1}+2\gamma\ell=(\frac{4}{\sigma+1}+2\gamma)\ell.
\end{align*}
Let $\mathcal{E}'$ denote the set of such $D(a,b)$. Applying Lemma~\ref{lem:main-packing-tc} with $\gamma$ replaced by $\frac{4}{\sigma+1}+2\gamma$, we have that \begin{align*}
    |\mathcal{E}'|=O(c^{2d+2}\sigma^d(1+\gamma\sigma)^d).
\end{align*}

Let the head centers of the dumbbells in $\mathcal{E}_j$ be the vertex set of $G_j'$. Construct a bounded degree 2-spanner $G_j'$ on its vertex set with respect to the graph distance. We use the $(1+\varepsilon)$-spanner with bounded degree for doubling metrics by Gottlieb and Roditty~\cite{dynamic-spanner-doubling-metrics_GottliebRoddity-esa08} to construct $G_j'$. Then for each vertex $u'$ of $G_j'$, run Dijkstra's algorithm from $u'$ and find all vertices $v$ such that $dist_{G_j'}(u',v)\leq 2(\frac{4}{\sigma+1}+2\gamma)\ell$. Check all such vertices $v$ and report all vertices $v'$ such that $dist_G(u',v')\leq (\frac{4}{\sigma+1}+2\gamma)\ell$. The graph $G_j$ is thus constructed.

The spanner graph $G_j'$ has degree $c^{O(1)}$, has $2|\mathcal{E}_j|$ vertices and can be constructed in $O(c^{O(1)}|\mathcal{E}_j|\log |\mathcal{E}_j|)$ time. Checking vertex $v$ for $dist_G(u',v)$ takes $O(c\log n)$ time by Theorem~\ref{thm:exact-distance-oracle}. The graph $G_j$ is thus constructed in time 
\begin{align*}
O(c^{O(1)}|\mathcal{E}_j|\log|\mathcal{E}_j|+|\mathcal{E}_j|c^{2d+3}\sigma^d(1+\gamma\sigma)^d\log n).
\end{align*}
We have obtained the following lemma.
\begin{lemma}\label{lem:empty-region}
    Let $\ell$ be a positive number, and let $\mathcal{E}_j$ be a set of dumbbells whose length are in the interval $[\ell,2\ell]$. In 
    \begin{align*}
        O(c^{O(1)}|\mathcal{E}_j|\log |\mathcal{E}_j|+|\mathcal{E}_j|c^{2d+3}\sigma^d(1+\gamma\sigma)^d\log n)
    \end{align*} 
    time, one can partition $\mathcal{E}_j$ into $O(c^{2d+2}\sigma^d(1+\gamma\sigma)^d)$ subsets, each of which satisfies the $\gamma$-empty-region property.
\end{lemma}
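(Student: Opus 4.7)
The plan is to reduce the partition problem to a vertex coloring problem on a suitably defined conflict graph $G_j$. The vertices of $G_j$ are the dumbbells in $\mathcal{E}_j$, and two vertices are joined by an edge whenever at least one head of one dumbbell is within graph distance $2\gamma\ell$ of a head of the other. Any proper coloring of $G_j$ immediately yields a partition of $\mathcal{E}_j$ with the $\gamma$-empty-region property: if two dumbbells in the same color class had heads within graph distance $\gamma\ell$ (hence certainly within $2\gamma\ell$), the corresponding vertices would be adjacent in $G_j$, contradicting properness. Using $2\gamma\ell$ rather than $\gamma\ell$ when defining edges is a convenience — it makes the empty-region condition robust, and it does not affect the asymptotic degree bound.

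The degree bound follows directly from the packing lemma already in hand. Applying Lemma~\ref{lem:main-packing-tc} with the parameter $\gamma$ replaced by $2\gamma$, each vertex of $G_j$ has degree at most $t_{\sigma,2\gamma}=O(c^{2d+2}\sigma^d(1+\gamma\sigma)^d)$. A straightforward greedy coloring that processes the vertices in arbitrary order and assigns each the smallest color unused by its colored neighbors therefore uses at most $1+t_{\sigma,2\gamma}$ colors and runs in $O(t_{\sigma,2\gamma}|\mathcal{E}_j|)$ time, giving exactly the claimed number of parts.

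The main obstacle is constructing $G_j$ fast enough. A naive construction that checks every pair of dumbbells would cost $\Theta(|\mathcal{E}_j|^2)$ oracle queries, which is too slow. I would instead first build a bounded-degree $(1+\tfrac{1}{2})$-spanner $G_j'$ on the set of head centers of the dumbbells in $\mathcal{E}_j$, using the doubling-metric spanner of Gottlieb and Roditty; this takes $O(c^{O(1)}|\mathcal{E}_j|\log|\mathcal{E}_j|)$ time and produces a graph of degree $c^{O(1)}$. By Lemma~\ref{lem:head-size-to-length}, the two head centers of any dumbbell in $\mathcal{E}_j$ are within graph distance $\tfrac{4\ell}{\sigma+1}$ apart from each other (after including the head radii), so an edge of $G_j$ between two dumbbells whose heads lie within graph distance $2\gamma\ell$ translates into two head centers at true graph distance at most $(\tfrac{4}{\sigma+1}+2\gamma)\ell$. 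For each head center $u'$, I would run Dijkstra in $G_j'$ to enumerate all centers within spanner distance $2(\tfrac{4}{\sigma+1}+2\gamma)\ell$ — which, by the spanner stretch, is a superset of all true candidates — and then verify each candidate with a single $O(c\log n)$ call to the exact distance oracle of Theorem~\ref{thm:exact-distance-oracle}.

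To bound the verification cost I would reapply Lemma~\ref{lem:main-packing-tc} with $\gamma$ replaced by $\tfrac{4}{\sigma+1}+2\gamma$, which shows that the number of candidate head centers around any fixed head center is $O(c^{2d+2}\sigma^d(1+\gamma\sigma)^d)$. Summed over the $2|\mathcal{E}_j|$ head centers, the oracle queries contribute $O(|\mathcal{E}_j|c^{2d+3}\sigma^d(1+\gamma\sigma)^d\log n)$ time, and adding the spanner construction and greedy coloring yields the total bound in the lemma. The only subtlety I anticipate is arguing carefully that the Dijkstra search radius in the spanner is large enough to catch every true neighbor and small enough not to enumerate too many spurious candidates; the constant stretch of the Gottlieb–Roditty spanner together with the packing lemma gives both bounds simultaneously.
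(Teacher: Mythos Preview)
Your proposal is correct and follows essentially the same route as the paper: build the conflict graph $G_j$ on dumbbells with the $2\gamma\ell$ threshold, bound its maximum degree via Lemma~\ref{lem:main-packing-tc}, greedily color it, and construct $G_j$ efficiently by first building a bounded-degree Gottlieb--Roditty spanner $G_j'$ on the head centers, running truncated Dijkstra from each center, and verifying candidates with the exact distance oracle of Theorem~\ref{thm:exact-distance-oracle}. The only cosmetic slip is your sentence ``the two head centers of any dumbbell \ldots are within graph distance $\tfrac{4\ell}{\sigma+1}$''---what you actually need (and correctly use afterward) is that the two head \emph{radii} sum to at most $\tfrac{4\ell}{\sigma+1}$, so that a head-to-head distance of $2\gamma\ell$ becomes a center-to-center distance of at most $(\tfrac{4}{\sigma+1}+2\gamma)\ell$.
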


Let $\mathcal{E}_{jl}$, $1\leq l\leq O(c^{2d+2}\sigma^d(1+\gamma\sigma)^d)$, be the partitioned subsets of $\mathcal{E}_j$. The dumbbell group $i$ is partitioned into (sub)groups 
\begin{align*}
    \mathcal{D}_{l}^i =\bigcup\limits_{1\leq j\leq k}\mathcal{E}_{jl}.
\end{align*}
Each group $\mathcal{D}_{l}^i$ satisfies the length-grouping property and the $\gamma$-empty-region property. Putting Lemma~\ref{lem:length-grouping} and Lemma~\ref{lem:empty-region} together, we obtain the following result.
\begin{lemma}\label{lem:dumbbell-partition-main}
    Let $\mathcal{D}$ be the set of all $m$ dumbbells, and let $\gamma$ be a positive number. In \begin{align*}
        O(c^{O(1)}m\log m + mc^{2d+3}\sigma^d(1+\gamma\sigma)^d\log n)
    \end{align*} 
    time, we can partition $\mathcal{D}$ into subsets $\mathcal{D}_1,\mathcal{D}_2,\ldots,\mathcal{D}_l$, where 
    \begin{align*}
        l=O(c^{2d+2}\sigma^d(1+\gamma\sigma)^d\log \sigma),
    \end{align*}
    such that each subset $\mathcal{D}_j$, $1\leq j\leq l$, satisfies the length-grouping property and the $\gamma$-empty-region property.
\end{lemma}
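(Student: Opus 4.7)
The plan is to combine Lemma~\ref{lem:length-grouping} and Lemma~\ref{lem:empty-region} in two stages and then merge across similar-length buckets to keep the total subset count logarithmic rather than linear in $m$. First, I would apply Lemma~\ref{lem:length-grouping} in $O(m \log m)$ time to split $\mathcal{D}$ into $\lceil \log \sigma \rceil + 1$ length-grouping classes $\mathcal{D}^0, \ldots, \mathcal{D}^{\lceil \log \sigma \rceil}$. Within each class $i$, the dumbbells are naturally partitioned into the buckets $\mathcal{E}_1^i, \ldots, \mathcal{E}_{k_i}^i$ used in the length-grouping construction (dumbbells whose lengths lie in a common interval $[2^r, 2^{r+1}]$); by construction, lengths in two distinct buckets of the same class differ by a factor of at least $2^{\lceil \log \sigma \rceil} \geq \sigma$.

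Second, for each bucket $\mathcal{E}_j^i$ I would invoke Lemma~\ref{lem:empty-region} to produce a $\gamma$-empty-region sub-partition $\mathcal{E}_{j,1}^i, \ldots, \mathcal{E}_{j,t}^i$ with $t = O(c^{2d+2} \sigma^d (1+\gamma\sigma)^d)$. I would then form the final output by taking, for each class $i$ and each color $l \in \{1,\ldots,t\}$, the union $\mathcal{D}_l^i = \bigcup_{j} \mathcal{E}_{j,l}^i$. This yields $O(c^{2d+2} \sigma^d (1+\gamma\sigma)^d \log \sigma)$ subsets in total, matching the claim.

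The only subtle step, which I expect to be the main obstacle, is verifying that merging across buckets of a fixed class preserves the empty-region property. For any two dumbbells $D \in \mathcal{E}_{j,l}^i$ and $D' \in \mathcal{E}_{j',l}^i$ in $\mathcal{D}_l^i$ with lengths $\ell \leq \ell'$: if $j = j'$, Lemma~\ref{lem:empty-region} directly gives the $\gamma$-empty-region property within the color class; if $j \neq j'$, then $\ell' \geq \sigma \ell > 2\ell$ (assuming the standard $\sigma > 2$), so the hypothesis $\ell' \leq 2\ell$ of the empty-region property is violated and the condition is vacuous. The length-grouping property is inherited immediately because $\mathcal{D}_l^i \subseteq \mathcal{D}^i$ and Lemma~\ref{lem:length-grouping} already guarantees it for all of $\mathcal{D}^i$.

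For the running time, summing the cost of Lemma~\ref{lem:empty-region} over all buckets and using $\sum_{i,j} |\mathcal{E}_j^i| = m$ gives $O(c^{O(1)} m \log m + m c^{2d+3} \sigma^d (1+\gamma\sigma)^d \log n)$, which dominates the $O(m \log m)$ cost of the length-grouping step and hence yields the stated total.
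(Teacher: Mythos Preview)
Your proposal is correct and follows essentially the same approach as the paper: first apply Lemma~\ref{lem:length-grouping} to obtain $\lceil\log\sigma\rceil+1$ length-grouping classes, within each class apply Lemma~\ref{lem:empty-region} to every similar-length bucket to get $O(c^{2d+2}\sigma^d(1+\gamma\sigma)^d)$ color classes, and then take the union over buckets of each fixed color to form the final groups $\mathcal{D}_l^i$. Your explicit verification that the cross-bucket merge keeps the empty-region property (because $\ell'\geq\sigma\ell>2\ell$ makes the condition vacuous) is exactly the point the paper leaves implicit when it asserts that each $\mathcal{D}_l^i$ satisfies both properties.
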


\subsection{Construct the dumbbell trees}\label{ssec:build-dumbbell-trees}
For each dumbbell group $\mathcal{D}_j$ in Lemma~\ref{lem:dumbbell-partition-main}, we will construct a ``dumbbell tree'' $DT_j$ connecting the dumbbells in $\mathcal{D}_j$ and points in $V(G)$ hierarchically. 

Let $s_0$ be a bounding hypercube of points in $V(G)$. Let $rep$ be an arbitrary point in $V(G)$ and let $tl$ be the total length of all edges in $G$. Let $head_0$ be a graph metric ball which is centered at $rep$ and has radius $tl$. Define a dummy dumbbell $D_0$ which has $head_0$ and a translated copy of $head_0$ as its heads, and whose length is $\sigma$ times the maximum length of all dumbbells in $\mathcal{D}$.

The dumbbell tree $DT_j$ is a rooted tree consisting of \emph{dumbbell nodes}, \emph{head nodes} and \emph{leaves}.
\begin{enumerate}
    \item Each dumbbell in $\mathcal{D}_j\cup \{D_0\}$ corresponds to a dumbbell node in $DT_j$. This dumbbell node has two children, each of which is a head node corresponding to one of its heads.
    \item The dumbbell node corresponding to $D_0$ is the root of $DT_j$.
    \item For each dumbbell $D$ in $\mathcal{D}_j$, let $D'$ be the dumbbell satisfying the two conditions in Lemma~\ref{lem:dumbbell-unique-parent} and let $head'$ be head of $D'$ such that some head of $D$ is within graph distance $\gamma'\ell$ to $head'$. The dumbbell node for $D$ is a child of the head node for $head'$.
    \item Each point $p$ in $V(G)$ corresponds to a leaf of $DT_j$ which stores $p$. Let $b$ be the deepest node in the connected tree $T$ such that $p\in C_b$ and $C_b$ is the head set of some dumbbell in $\mathcal{D}_j\cup \{D_0\}$. The leaf storing $p$ is a child of the head node corresponding to $head(b)$.
    \item Each node $v$ of $DT_j$ stores a point of $V(G)$, called the \emph{representative} of $v$. 
    \begin{itemize}
        \item If $v$ is a leaf, the representative is the point stored at $v$.
        \item If $v$ is a head node, the representative is an arbitrary point in $C_a$ where $v$ is the head node for $head(a)$.
        \item If $v$ is a dumbbell node, the representative is an arbitrary point in $C_a$ or $C_b$ where $head(a)$ and $head(b)$ are the two heads of the dumbbell corresponding to $v$.
    \end{itemize}
\end{enumerate}

The following lemma statesd that the size of $DT_j$ is $O(n)$.
\begin{lemma}\label{lem:dumbbell-tree-size-linear}
    The size of the dumbbell tree $DT_j$, $1\leq j\leq l$, is $O(n)$.
\end{lemma}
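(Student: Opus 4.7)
The plan is to bound the size of $DT_j$ by counting its three types of nodes separately: dumbbell nodes, head nodes, and leaves. The main work will reduce to bounding the number of dumbbells placed in a single group $\mathcal{D}_j$, for which I will invoke the linear-size bound on the WSPD$_G$.

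First, by the construction of $DT_j$ in Section~\ref{ssec:build-dumbbell-trees}, each point of $V(G)$ is assigned to exactly one leaf, namely as a child of the head node corresponding to the unique deepest $b\in T$ with $p\in C_b$ and $C_b$ a head set of some dumbbell in $\mathcal{D}_j\cup\{D_0\}$. Hence the number of leaves in $DT_j$ is exactly $n$. Next, each dumbbell in $\mathcal{D}_j\cup\{D_0\}$ contributes exactly one dumbbell node, giving $|\mathcal{D}_j|+1$ dumbbell nodes. Since each dumbbell node has exactly two head node children, one per head of the dumbbell, the number of head nodes is $2(|\mathcal{D}_j|+1)$. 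Summing, the total size of $DT_j$ is
\[
n + 3(|\mathcal{D}_j|+1).
\]

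To conclude, I would bound $|\mathcal{D}_j|$. Since $\mathcal{D}_j\subseteq \mathcal{D}$ and Corollary~\ref{thm:GWSPD-graph-distance} gives $|\mathcal{D}|=m=O(c^3\sigma n)$, we have $|\mathcal{D}_j|\leq m=O(n)$, where the hidden constant absorbs the WSPD$_G$ parameters $c$ and $\sigma$. Therefore the total size of $DT_j$ is $n+3(|\mathcal{D}_j|+1)=O(n)$, matching the claim.

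There is no real structural obstacle in the proof: the argument is essentially pure bookkeeping. The only point that needs a line of care is confirming the identification of the three node types, in particular that no two points of $V(G)$ land at the same leaf (since each $p$ has a unique deepest containing $C_b$) and that every dumbbell in $\mathcal{D}_j\cup\{D_0\}$ is represented by exactly one dumbbell node with exactly two head children; both are immediate from the construction.
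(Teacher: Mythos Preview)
Your counting is correct, but your route differs from the paper's, and the difference matters for the constants. The paper argues purely structurally: the $n$ leaves are the points of $V(G)$; every dumbbell node has exactly two head-node children; and crucially, \emph{every head node has at least one child} (either a dumbbell node or a leaf). Since head nodes appear only as children of dumbbell nodes, the number of head nodes is $2D$ where $D$ is the number of dumbbell nodes; since the children of head nodes comprise all $n$ leaves and all $D-1$ non-root dumbbell nodes, and each head node has at least one child, one gets $2D\le n+D-1$, hence $D\le n-1$. This yields a bound with an absolute constant, independent of $c$ and $\sigma$.

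Your argument instead bounds $|\mathcal{D}_j|\le m=O(c^3\sigma n)$ via Corollary~\ref{thm:GWSPD-graph-distance} and absorbs $c,\sigma$ into the hidden constant. That is formally valid if $c,\sigma$ are treated as constants, but the paper tracks them explicitly: the ``$O(n)$'' in this lemma is meant to be free of $c$ and $\sigma$, which is what makes the downstream size bounds (Theorem~\ref{thm:tree-cover-c-packed} and Corollary~\ref{col:tree-cover-apprx-do}) come out as stated. With your bound, each dumbbell tree would have size $O(c^3\sigma n)$, inflating the ADO space by an extra $c^3/\varepsilon$ factor. So your proof is not wrong, but it proves a weaker statement than the paper needs; the structural ``every head has a child'' observation is the missing idea that removes the dependence on $c$ and $\sigma$.
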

\begin{proof}
    By construction, all the leaves of $DT_j$ corresponds to points in $V(G)$. Internal nodes of $DT_j$ correspond to dumbbells or heads of dumbbells. Each dumbbell node has two child head nodes. A head node has at least one child, which is either a dumbbell node or a leaf. It follows that the number of internal nodes in $O(n)$. The lemma thus holds. 
\end{proof}

By the length-grouping property and the empty-region property, the parent node of a leaf in $DT_j$ is unique. Let $\gamma'>0$ be a real number such that $\gamma'\leq \gamma$ and $\sigma\gamma\geq 2\gamma'+\frac{\sigma+3}{\sigma+1}$. The following lemma shows that for each dumbbell $D$ in $\mathcal{D}_j$, the parent node of $D$'s dumbbell node has a unique parent dumbbell node.  
\begin{lemma}\label{lem:dumbbell-unique-parent}
    Let $D$ be a dumbbell in $\mathcal{D}_j$ with length $\ell$. There is a unique dumbbell $D'$ of minimum length in $\mathcal{D}_j\cup \{D_0\}$ such that (i) the length of $D'$ is greater than $\ell$ and (ii) at least one head of $D'$ is within graph distance $\gamma'\ell$ to some head of $D$. Let $\ell'$ be the length of $D'$, then $\ell'\geq \sigma \ell$.
\end{lemma}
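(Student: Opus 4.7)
My plan is to establish the three claims in order: existence of some qualifying $D'$, the lower bound $\ell' \geq \sigma \ell$, and uniqueness, each of which is driven by a different property of the dumbbell group $\mathcal{D}_j$.

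\medskip

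\emph{Existence.} The dummy dumbbell $D_0$ already qualifies: by construction its length is $\sigma$ times the maximum dumbbell length in $\mathcal{D}$, so certainly greater than $\ell$, and its head $head_0$ is a graph-metric ball centered at $rep$ of radius equal to the total edge length $tl$. Since $G$ is connected, every vertex of $V(G)$ lies in $head_0$, so any head of $D$ is within graph distance $0\leq \gamma'\ell$ of $head_0$. Thus the set of dumbbells in $\mathcal{D}_j\cup\{D_0\}$ satisfying (i) and (ii) is nonempty and admits a minimum-length element $D'$.

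\medskip

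\emph{The bound $\ell'\geq \sigma\ell$.} If $D'=D_0$ this is immediate from the definition of $D_0$'s length, since $\sigma\ell_{\max}\geq \sigma\ell$. Otherwise $D'\in \mathcal{D}_j$. Suppose for contradiction that $\ell<\ell'\leq 2\ell$. Then $D$ and $D'$ are two dumbbells of the same group with lengths satisfying the hypothesis of the $\gamma$-empty-region property, which therefore forces the graph distance between any head of $D$ and any head of $D'$ to be at least $\gamma\ell$. Since $\gamma'\leq\gamma$, this violates condition (ii), a contradiction. Hence $\ell'>2\ell$ and the length-grouping property (Lemma~\ref{lem:length-grouping}) then forces $\ell'\geq \sigma\ell$.

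\medskip

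\emph{Uniqueness.} Suppose $D'$ and $D''$ are two distinct dumbbells in $\mathcal{D}_j\cup\{D_0\}$, both of minimum length $\ell'$, each satisfying (i) and (ii). If exactly one of them is $D_0$, then the other lies in $\mathcal{D}_j$ and has length at most $\ell_{\max}<\sigma\ell_{\max}$, contradicting equality of minimum lengths. Hence both $D'$ and $D''$ belong to $\mathcal{D}_j$ with common length $\ell'\geq \sigma\ell$. The $\gamma$-empty-region property applied to $D'$ and $D''$ yields that the graph distance between any head of $D'$ and any head of $D''$ is at least $\gamma\ell'\geq \gamma\sigma\ell$. On the other hand, condition (ii) gives heads $h'$ of $D'$ and $h''$ of $D''$ such that $h'$ is within graph distance $\gamma'\ell$ of some head $h_1$ of $D$ and $h''$ is within graph distance $\gamma'\ell$ of some head $h_2$ of $D$ (possibly $h_1=h_2$). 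The triangle inequality gives
\[
d_G(h',h'')\;\leq\;\gamma'\ell \;+\; d_G(h_1,h_2)\;+\;\gamma'\ell,
\]
and Lemma~\ref{lem:head-size-to-length} bounds the radius of each head of $D$ by $\ell/(\sigma+1)$, so even in the worst case $h_1\neq h_2$ we have
\[
d_G(h_1,h_2)\;\leq\;\tfrac{\ell}{\sigma+1}+\ell+\tfrac{\ell}{\sigma+1}\;=\;\ell\cdot\tfrac{\sigma+3}{\sigma+1}.
\]
Combining the two estimates produces $\gamma\sigma\ell\leq \bigl(2\gamma'+\tfrac{\sigma+3}{\sigma+1}\bigr)\ell$, which contradicts the standing assumption $\sigma\gamma\geq 2\gamma'+\tfrac{\sigma+3}{\sigma+1}$ (the inequality being strict in the application, since either $\gamma'<\gamma$ or the empty-region distance is strictly greater than $\gamma\ell'$). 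Hence $D'=D''$.

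\medskip

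\emph{Main obstacle.} The only delicate point is the uniqueness argument: it requires tracking distances between heads (which are metric balls in the graph, not single points) and carefully accounting for the worst case where the two candidate heads bracket $D$ at opposite ends. Getting the $\tfrac{\sigma+3}{\sigma+1}$ factor right is what makes the numerical hypothesis $\sigma\gamma\geq 2\gamma'+\tfrac{\sigma+3}{\sigma+1}$ tight; all other steps are straightforward applications of length-grouping, empty-region, and Lemma~\ref{lem:head-size-to-length}.
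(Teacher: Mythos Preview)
Your proof is correct and follows essentially the same route as the paper: existence via $D_0$, the length bound via the length-grouping and empty-region properties, and uniqueness by bounding the head-to-head distance between two putative minimizers through $D$ and contradicting the empty-region property via the hypothesis $\sigma\gamma \geq 2\gamma' + \tfrac{\sigma+3}{\sigma+1}$. You are slightly more careful than the paper in explicitly ruling out the case where one minimizer is $D_0$ and in flagging the strict-versus-nonstrict inequality issue, but the structure and the key estimates are identical.
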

\begin{proof}
    Observe that the dummy dumbbell $D_0$ satisfies conditions (i) and (ii), thus the dumbbell $D'$ exists. Let $\ell'$ be the length of $D'$. From condition (i), $\ell'>\ell$. If $\ell'\leq 2\ell$, since $\gamma'\leq \gamma$, $D$ and $D'$ do not satisfy the empty-region property, a contradiction. Thus $\ell'\geq \sigma\ell$.

    Assume there is another dumbbell $D''$ of minimum length in $\mathcal{D}_j\cup \{D_0\}$ that satisfies conditions (i) and (ii). According to Lemma~\ref{lem:head-size-to-length}, both heads of $D$ has radius at most $\frac{\ell}{\sigma+1}$. Let $head'$ be a head of $D'$ which is within graph distance $\gamma'\ell$ to some head of $D$, and let $head''$ be a head of $D''$ which is within graph distance $\gamma'\ell$ to some head of $D$. By the triangle inequality, the graph distance between $head'$ and $head''$ is at most 
    \begin{align*}
        2\gamma'\ell + \frac{2\ell}{\sigma+1} + \ell=(2\gamma'+\frac{\sigma+3}{\sigma+1})\ell\leq \sigma\gamma\ell\leq \gamma\ell'.
    \end{align*}
    This implies that $D'$ and $D''$ does not satisfy the $\gamma$-empty-region property, a contradiction. Therefore $D'$ is unique.
\end{proof}

To construct the dumbbell tree $DT_j$, what remains to be done are: (i) compute the parent head node for each leaf which stores a point in $V(G)$, (ii) compute the parent head node for each dumbbell node. Task (i) is simple. Using the connected tree $T$, the parents of the leaves in $DT_j$ can be computed in $O(n)$ total time by a traversal of $T$ (recall that $T$ has $O(n)$ nodes). 

Task (ii) is harder. Let $D=D(a,b)$ be a dumbbell in $DT_j$ with length $\ell$. We need to find the dumbbell $D'$ in $DT_j$ with minimum length such that (i) the length of $D'$ is at least $\sigma\ell$, (ii) some head of $D'$ is within graph distance $\gamma'\ell$ to some head of $D(a,b)$. Let $D'=D(u,v)$ where $u,v$ are nodes in $T$. We only explain the case when $head(a)$ is within graph distance $\gamma'\ell$ to some head of $D'$. Without loss of generality, assume that $head(a)$ is close to $head(u)$ of $D'$. Consider the hypercube $\mathbf{s}$ that is centered at the center of $head(a)$ and has side length $(\frac{2}{\sigma+1}+2\gamma')\ell$. Observe that $s(u)$ intersects the hypercube $\mathbf{s}$. Construct a grid on $\mathbf{s}$, where each cell has side length $\frac{\sigma\ell}{2c\sqrt{d}(\sigma+4)}$. There are $O(c^d(1+\gamma'\sigma)^d/\sigma^d)$ grid cells. 

Similar to the split tree in~\cite{DBLP:journals/jacm/CallahanK95}, when splitting cells of the quadtree $Q$, let $s_0(u)$ be a hyperrectangle which contains the (quadtree) cell of $u$. The $s_0$ of the root of $Q$ is the bounding cube of $V(G)$. When splitting the cell of a node in $Q$, say $u$, $s_0(u)$ is partitioned into the $s_0$s of its children nodes. See Figure~\ref{fig:R0-rectangle} for an illustration. It follows that the $s_0$ rectangles of the leaves of $Q$ partitions the bounding cube of $V(G)$.

    \begin{figure}[H]
	\centering
		\includegraphics[width=.8\textwidth]{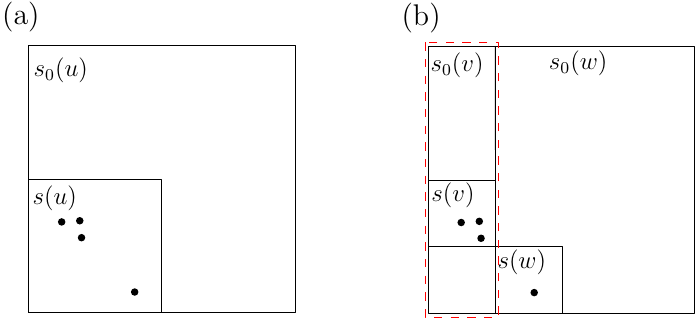}
		\caption{(a) Illustration for the $s_0$ rectangle of a node $u$. (b) When splitting the cell ($s(u)$) of $u$, $s_0(u)$ is partitioned into $s_0(v)$ and $s_0(w)$ where $s(v)$ and $s(w)$ are the non-empty children cells of $s(u)$.}
		\label{fig:R0-rectangle}
    \end{figure}

We have $L_{min}(s_0(u))\geq \frac{1}{2}\cdot L(s(\pi(u)))$, where $L_{min}(s_0(u))$ is the length of the shortest side of $s_0(u)$. According to Lemma~\ref{lem:head-size-to-length}, 
\begin{align*}
    L_{min}(s_0(u))\geq \frac{1}{2}\cdot L(s(\pi(u)))\geq \frac{\sigma\ell}{2c\sqrt{d}(\sigma+4)}.
\end{align*} 
Since $s(u)\subseteq s_0(u)$, $s_0(u)$ intersects $\mathbf{s}$ and contains at least one grid point. Let $z$ be the grid point and let $w$ be the leaf of $Q$ such that $s_0(w)$ contains $z$. Therefore, $u$ is the deepest node in the connected tree $T$ such that (i) $head(u)$ is the head of some dumbbell $D'$ in $DT_j$, (ii) the length of $D'$ is at least $\sigma\ell$, and (iii) the graph distance between $head(a)$ and $head(u)$ is at most $\gamma'\ell$.

The node $u$ and the dumbbell $D'=D(u,v)$ can be found in two steps:
\begin{enumerate}
    \item For each grid point $z$ of $\mathcal{R}$, compute the leaf $w$ in $Q$ such that $s_0(w)$ contains $z$.\label{search-step-1}
    \item For each leaf $w'$ in $T$ whose cell is $s(w)$, compute the deepest node $u$ in $T$ on the path from the root to $w'$ that satisfies (i), (ii) and (iii) above. \label{search-step-2}
\end{enumerate}

In Step~\ref{search-step-1}, for each grid point, computing the leaf $w$ takes $O(\log n)$ time by using a centroid decomposition of $Q$. In Step~\ref{search-step-2}, for each leaf $w'$, computing the deepest node $u$ takes $O(\log n)$ time by using the heavy path decomposition of $T$. For Step~\ref{search-step-2}, observe that the length of dumbbells in $DT_j$ having $head(u)$ as one head are increasing by a factor of at least $\sigma$, due to the empty-region property. Therefore given any dumbbell $D$ in $\mathcal{D}_j$, its parent in $DT_j$ can be computed in $O(c^{d+1}(1+\gamma'\sigma)^d\log n/\sigma^d)$ time. 

Since the parents for the leaves can be computed in $O(n)$ time, the dumbbell tree $DT_j$ can be constructed in $O(n\log n+c^{d+1}(1+\gamma'\sigma)^d|\mathcal{D}_j|\log n/\sigma^d)$ time. Summing up, we have proved the following lemma.
\begin{lemma}\label{lem:dumbbell-trees-construct}
    Let $\gamma,\gamma'$ be positive numbers where $\gamma'\leq \gamma$ and $\sigma\gamma\geq 2\gamma'+\frac{\sigma+3}{\sigma+1}$. The set $\mathcal{D}$ of all dumbbells is partitioned into subsets $\mathcal{D}_j$, $1\leq j\leq l$, where $l=O(c^{2d+2}\sigma^d(1+\gamma\sigma)^d\log \sigma)$. One can construct the dumbbell trees $DT_j$, $1\leq j\leq l$, in $O(nl\log n + c^{d+1}(1+\gamma'\sigma)^dm\log n/\sigma^d)$ time.
\end{lemma}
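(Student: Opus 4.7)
My plan is to combine the partition result of Lemma~\ref{lem:dumbbell-partition-main} with an explicit construction algorithm that handles the two parent-assignment tasks described in Section~\ref{ssec:build-dumbbell-trees}: assigning each leaf (point of $V(G)$) to its head-node parent, and assigning each dumbbell node to its dumbbell-node grandparent. First I would invoke Lemma~\ref{lem:dumbbell-partition-main} with the stated $\gamma$ to obtain the $l = O(c^{2d+2}\sigma^d(1+\gamma\sigma)^d\log \sigma)$ subsets $\mathcal{D}_j$ in the time bounded by the first term of the claimed running time. After this step, for each $j$ I only need to build $DT_j$ on top of the fixed partition.

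For the leaf phase, I would use the $c$-connected tree $T$ directly. Because every head set $C_b$ of a dumbbell coincides with the connected set of some node $b$ of $T$, a single bottom-up traversal of $T$ (in which each node records whether it is the head of a dumbbell in $\mathcal{D}_j \cup \{D_0\}$) identifies, for each leaf of $T$, the deepest ancestor that is such a head. This attaches every point of $V(G)$ to the correct head node in $O(n)$ time per tree, contributing $O(n l)$ overall, which is absorbed into the first term of the target bound.

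The more delicate phase is assigning a parent to each dumbbell node, where I would use the search scheme sketched in the paragraph preceding the lemma. Given $D=D(a,b) \in \mathcal{D}_j$ of length $\ell$, Lemma~\ref{lem:dumbbell-unique-parent} guarantees a unique parent $D'=D(u,v)$ with $\ell' \geq \sigma \ell$ and at least one head within graph distance $\gamma'\ell$ of a head of $D$. I would restrict the search to the hypercube $\mathbf{s}$ of side length $(\tfrac{2}{\sigma+1}+2\gamma')\ell$ centered at the appropriate head of $D$, overlay on $\mathbf{s}$ a grid of cell size $\tfrac{\sigma \ell}{2c\sqrt{d}(\sigma+4)}$, and for each of the $O(c^d(1+\gamma'\sigma)^d/\sigma^d)$ grid points (i) locate in $O(\log n)$ time via a centroid decomposition of $Q$ the leaf $w$ whose $s_0$-rectangle contains the grid point, and then (ii) in $O(\log n)$ time via a heavy path decomposition of $T$ recover the deepest ancestor of $w$ that is the head of a dumbbell of length at least $\sigma\ell$ in $\mathcal{D}_j$. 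Per dumbbell this costs $O(c^{d+1}(1+\gamma'\sigma)^d \log n/\sigma^d)$, and summing over all $m$ dumbbells across all groups yields the second term of the claimed bound.

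The main obstacle is not the algorithmic machinery but proving that the grid is fine enough for the search to be complete, i.e., that $s_0(u)$ is guaranteed to contain at least one grid point of $\mathbf{s}$. This is where I would combine the bound $L_{\min}(s_0(u)) \geq \tfrac{1}{2} L(s(\pi(u)))$ from the $s_0$-splitting rule with the lower bound $L(s(\pi(u))) \geq \tfrac{\ell'}{c\sqrt{d}(\sigma+4)} \geq \tfrac{\sigma \ell}{c\sqrt{d}(\sigma+4)}$ from Lemma~\ref{lem:head-size}, matching exactly the chosen cell size. Completeness of step (ii) in turn rests on the empty-region property, which forces dumbbells sharing a fixed head to have geometrically growing lengths, so that the deepest qualifying ancestor is precisely the node $u$ sought. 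Assembling the per-tree costs $O(n\log n)$ for leaves and $O(c^{d+1}(1+\gamma'\sigma)^d|\mathcal{D}_j|\log n/\sigma^d)$ for dumbbell nodes, and summing over the $l$ groups and all $m$ dumbbells, produces the overall running time stated.
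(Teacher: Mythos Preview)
Your proposal is correct and follows essentially the same approach as the paper: leaf attachment via a traversal of the $c$-CT, and dumbbell-parent search via the grid on $\mathbf{s}$ combined with centroid decomposition of $Q$ and heavy-path decomposition of $T$, with completeness of the grid argued through the $s_0$-rectangle bound and Lemma~\ref{lem:head-size}. Two minor points you glossed over (but which do not affect correctness): the partition time from Lemma~\ref{lem:dumbbell-partition-main} is \emph{not} part of the bound claimed in this lemma (the lemma only accounts for building the trees once the partition is given), and the extra factor of $c$ turning $c^d$ grid points into an $O(c^{d+1}\cdots)$ per-dumbbell cost arises because a leaf $w$ of $Q$ may correspond to up to $O(c)$ leaves $w'$ of $T$, each of which must be probed in step~(ii).
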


\subsection{The dumbbell trees constitute a tree cover}\label{ssec:dumbbell-trees-constitute-tc}
Let $DT_j$ be a dumbbell tree, and let $u$ and $v$ be two nodes in $DT_j$. Let $u=u_1,u_2,\ldots,u_f=v$ be the path in $DT_j$ between $u$ and $v$, and let $p_i$, $1\leq i\leq f$, be the representative of $C_{u_i}$. Let $\Pi=p_1,\ldots,p_f$ be the path where the distance between $p_i$ and $p_{i+1}$, $1\leq i < f$, equals $dist_G(p_i,p_{i+1})$. We call $\Pi$ the \emph{graph metric path between $u$ and $v$}. We can prove the following lemma. The proof is omitted here.
\begin{lemma}\label{lem:graph-metric-path-length}
    Let $u$ and $v$ be nodes in $DT_j$, where $u$ is a leaf, $v$ is a head node and $v$ is an ancestor of $u$. Let $\ell$ be the length of the dumbbell corresponding to the parent of $v$. The length of the graph metric path between $u$ and $v$ is at most $\frac{15}{\sigma}\cdot \ell$. 
\end{lemma}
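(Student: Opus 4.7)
The plan is to enumerate the path from $u$ to $v$ in $DT_j$ level by level, bound each hop in the graph metric path $\Pi$ by the length of the dumbbell at that level, and then sum a geometric series in $1/\sigma$ provided by the length-grouping property established in Lemma~\ref{lem:dumbbell-unique-parent}.

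First I would set up the notation. Since in $DT_j$ the children of a dumbbell node are head nodes and the children of a head node are either dumbbell nodes or leaves, the path from the leaf $u$ to its head-node ancestor $v$ has even length, say $f=2K$ for some $K\geq 1$. Write it as $u_1=u,u_2,\ldots,u_{2K}=v$, where for each $i$ the node $u_{2i}$ is a head of a dumbbell $D_i$ and $u_{2i+1}$ is the dumbbell node of $D_i$. In particular $v=u_{2K}$ is a head of $D_K$, the parent dumbbell of $v$ has length $\ell_K=\ell$, and $\Pi$ traverses the $K-1$ dumbbells $D_1,\ldots,D_{K-1}$ strictly smaller than $D_K$. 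Applying Lemma~\ref{lem:dumbbell-unique-parent} level by level, consecutive dumbbell lengths grow by a factor of at least $\sigma$, so $\ell_i\leq \ell/\sigma^{K-i}$ for all $i$.

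Second, I would bound each hop $d_G(p_j,p_{j+1})$ of $\Pi$ using Lemma~\ref{lem:head-size-to-length}, which bounds the radius of any head of $D_i$ by $\ell_i/(\sigma+1)$ and hence the graph diameter of the entire dumbbell $D_i$ by $\ell_i(\sigma+3)/(\sigma+1)$. The leaf-to-head hop $d_G(p_1,p_2)$ stays inside a single head of $D_1$ and is thus at most $\ell_1/(\sigma+1)$. A head-to-dumbbell hop at level $i$ stays inside $D_i$, hence is at most the diameter $\ell_i(\sigma+3)/(\sigma+1)$. The crucial dumbbell-to-head hop from $u_{2i+1}$ to $u_{2i+2}$ crosses levels: by Lemma~\ref{lem:dumbbell-unique-parent} some head of $D_i$ lies within graph distance $\gamma'\ell_i$ of the head $u_{2i+2}$ of $D_{i+1}$, and combining this bridge with the diameter of $D_i$ and the radius of $u_{2i+2}$ yields at most $\ell_i\bigl[(\sigma+3)/(\sigma+1)+\gamma'\bigr]+\ell_{i+1}/(\sigma+1)$.

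Finally, I would sum the bounds. Using $\sum_{i=1}^{K-1}\ell_i\leq \ell/(\sigma-1)$ and $\sum_{i=1}^{K-1}\ell_{i+1}/(\sigma+1)\leq \ell/(\sigma+1)+\ell/((\sigma+1)(\sigma-1))$, every family of hops contributes $O(\ell/\sigma)$, and substituting the specific values of $\sigma$ and $\gamma'$ fixed in the constructions of Lemma~\ref{lem:dumbbell-partition-main} and Lemma~\ref{lem:dumbbell-unique-parent} collects these constants into the stated bound $|\Pi|\leq 15\ell/\sigma$. The main obstacle I expect is the bookkeeping of dumbbell-to-head hops: the representative of a dumbbell node may lie in the head \emph{far} from the parent, forcing a full traversal of $D_i$ before crossing the $\gamma'\ell_i$ bridge — this is exactly where the $\ell_i[(\sigma+3)/(\sigma+1)+\gamma']$ term arises — and one must verify that the sole non-shrinking contribution, the head radius $\ell/(\sigma+1)$ of the final hop into $v$, still telescopes with the geometric bulk so that the total constant is an absolute $15$ rather than growing with $K$.
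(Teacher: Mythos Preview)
The paper explicitly omits the proof of this lemma (``We can prove the following lemma. The proof is omitted here.''), so there is nothing to compare against directly. That said, your proposal is the standard argument for this type of statement---it is the $c$-packed analogue of the geometric-series bound in the Euclidean Dumbbell Theorem (cf.\ Lemma~11.5.4 in the Narasimhan--Smid book)---and the ingredients you assemble are exactly the ones the paper has set up: the alternating head/dumbbell structure of $DT_j$ gives the even-length path; Lemma~\ref{lem:dumbbell-unique-parent} gives the geometric decay $\ell_i\le \ell/\sigma^{K-i}$; Lemma~\ref{lem:head-size-to-length} bounds the head radii; and item~3 of the dumbbell-tree construction supplies the $\gamma'\ell_i$ bridge for the cross-level hop. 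Your identification of the one non-shrinking term, the contribution $\ell_K/(\sigma+1)=\ell/(\sigma+1)$ arising in the final hop into $v$, is also correct and is precisely what fixes the leading constant.

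One minor point of care: when you say the path ``has even length $f=2K$'' you mean $2K$ \emph{nodes} (hence $2K-1$ edges), since a leaf's parent is a head node; the base case $K=1$ (where $v$ is the direct parent of $u$) then gives a single hop bounded by $\ell/(\sigma+1)$, well within $15\ell/\sigma$. With that bookkeeping straight, your summation gives roughly $(3+\gamma')\ell/\sigma$ for large $\sigma$, which, under the parameter choices $\gamma'\le\gamma$ and $\sigma\ge 3$ used downstream, comfortably fits under the stated $15\ell/\sigma$.
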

Let $p$ be a point in $V(G)$, let $u$ be a leaf in $DT_j$ which stores $p$, and let $v$ be a head node in $DT_j$ such that $v$'s head set contains $p$. By setting $\sigma$ and $\gamma'$, one can guarantee that $v$ is an ancestor of $u$. We have the following lemma. The proof is omitted.
\begin{lemma}\label{lem:proper-containment}
    Assume that $\sigma\geq 3$, $\frac{15}{\sigma}\leq \gamma'$ and $\frac{15}{\sigma}<\frac{\sigma}{\sigma+2}$. Let $p$ be a point in $V(G)$, let $u$ be the leaf in $DT_j$ which stores $p$, and let $v$ be a head node in $DT_j$ such that $v$'s head set contains $p$. Then $v$ is an ancestor of $u$.
\end{lemma}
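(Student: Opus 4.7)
I plan to prove the lemma by contradiction. Suppose $v$, the head node for a head $head(a)$ with $p \in C_a$, is \emph{not} an ancestor of $u$ in $DT_j$, and let $D_a \in \mathcal{D}_j$ be the dumbbell containing $head(a)$, with length $\ell_a$. Enumerate the head ancestors of $u$ in $DT_j$ as $v_0, v_1, v_2, \ldots$, with corresponding parent dumbbells $D^u_0, D^u_1, \ldots$ whose lengths grow by a factor of at least $\sigma$ at each step (Lemma~\ref{lem:dumbbell-unique-parent}). Because $v$ is not on this chain, the lowest common ancestor of $v$ and $u$ in $DT_j$ lies strictly above $v$, and by examining its two children I identify some head ancestor $v_j$ of $u$ whose parent dumbbell $D^u_j$ plays the role of the ``closest'' reference anchor for a distance argument comparing $D^u_j$ and $D_a$.

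The argument then proceeds by cases on the relationship between $\ell_a$ and $\ell(D^u_j)$ (or $\ell(D^u_{j-1})$), using the length-grouping property of $\mathcal{D}_j$. In the \emph{similar-length} case ($\ell_a$ and the reference length within a factor of $2$), the empty-region property gives a lower bound of $\gamma \cdot \min$ on the graph distance between any head of $D_a$ and any head of the anchor dumbbell, while Lemma~\ref{lem:graph-metric-path-length} combined with $p \in C_a$ gives an upper bound of $(15/\sigma)$ times the reference length on the same distance. The hypothesis $15/\sigma \leq \gamma' \leq \gamma$ makes these bounds incompatible.

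In the \emph{well-separated length} case ($\ell_a \geq \sigma$ times a reference length), I would show via Lemma~\ref{lem:graph-metric-path-length} and the hypothesis $15/\sigma \leq \gamma'$ that $D_a$ satisfies the candidacy conditions of Lemma~\ref{lem:dumbbell-unique-parent} for being the parent of a suitable $D^u_{j-1}$. Uniqueness of the minimum-length parent then forces $D_a = D^u_j$, collapsing to the subcase where $v$ is a sibling of $v_j$ in $DT_j$. In this subcase, $head(a)$ and $head(v_j)$ are the two well-separated heads of the shared dumbbell $D^u_j$, giving $dist_G(rep_a, rep_{v_j}) = \ell_a$; chaining $dist_G(p, rep_a) \leq \gdub{C_a} \leq \ell_a/(\sigma+1)$ (Lemma~\ref{lem:head-size-to-length}) with $dist_G(p, rep_{v_j}) \leq (15/\sigma)\ell_a$ (Lemma~\ref{lem:graph-metric-path-length}) through the triangle inequality yields $\ell_a \leq \ell_a/(\sigma+1) + (15/\sigma)\ell_a$, which rearranges to $\sigma/(\sigma+1) \leq 15/\sigma$. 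The hypothesis $15/\sigma < \sigma/(\sigma+2) < \sigma/(\sigma+1)$ is calibrated precisely to contradict this.

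The main obstacle I expect is the sibling subcase just described, where $D_a$ coincides with an ancestor dumbbell of $u$ but $v$ is the ``other'' head-child of the shared dumbbell node; this is where the third hypothesis $15/\sigma < \sigma/(\sigma+2)$ is essential. The other hypotheses play supporting roles: $\sigma \geq 3$ for the length-class separation between the two regimes of the length-grouping property, and $15/\sigma \leq \gamma'$ for activating the candidacy argument of Lemma~\ref{lem:dumbbell-unique-parent} in the well-separated-length case.
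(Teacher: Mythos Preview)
The paper explicitly omits the proof of this lemma (it states ``The proof is omitted''), so there is no paper-side argument to compare your proposal against. Your plan follows the classical contradiction strategy used for the analogous statement in the Euclidean Dumbbell Theorem (cf.\ Narasimhan--Smid, \emph{Geometric Spanner Networks}, \S11.5): locate where the chain of head ancestors of $u$ diverges from $v$, and use the length-grouping property, the $\gamma$-empty-region property, and the uniqueness clause of Lemma~\ref{lem:dumbbell-unique-parent} to rule out each configuration. This is almost certainly the argument the authors intend, and you have correctly identified that the three hypotheses are calibrated for three distinct subcases, with $15/\sigma<\sigma/(\sigma+2)$ reserved for the sibling configuration.

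Two places deserve more care when you write it out. First, your case split names the similar-length regime and the regime $\ell_a \geq \sigma\cdot(\text{reference})$, but the branch point can also be a head node (not a dumbbell node), in which case $D_a$ sits strictly below a sibling of $D^u_{j-1}$ and $\ell_a$ may be much \emph{smaller} than the reference; this is handled symmetrically by showing that the head ancestor of $u$ at the appropriate scale satisfies the candidacy conditions of Lemma~\ref{lem:dumbbell-unique-parent} relative to $D_a$ and invoking uniqueness the other way around. Second, in the sibling-case numerics be careful that the endpoint of the graph-metric path in Lemma~\ref{lem:graph-metric-path-length} is the $DT_j$-representative of $v_j$ (an \emph{arbitrary} point of the head set), not necessarily the head center; the cleanest fix is to take head-node representatives equal to head centers in the construction (which the paper explicitly permits), after which your triangle-inequality chain $\ell_a \leq \ell_a/(\sigma+1)+(15/\sigma)\ell_a$ goes through as written.
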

Let $D=D(v,v')$ be dumbbell in $DT_j$. Let $p$ and $q$ be two points in $V(G)$ such that $p\in C_v,q\in C_{v'}$ or $p\in C_{v'},q\in C_v$. Let $u$ and $u'$ be the leaves in $DT_j$ which stores $p$ and $q$, respectively. The following lemma shows that given any $0<\varepsilon<1$, the length of the graph metric path between $u$ and $u'$ in $DT_j$ is at most $(1+\varepsilon)dist_G(p,q)$, by setting $\sigma$ and $\gamma'$ appropriately.
\begin{lemma}
    Let $D=D(v,v')$ be dumbbell in $DT_j$. Let $p$ and $q$ be two points in $V(G)$ such that $p\in C_v,q\in C_{v'}$ or $p\in C_{v'},q\in C_v$. Let $u$ and $u'$ be the leaves in $DT_j$ which stores $p$ and $q$, respectively. Given any $0<\varepsilon<1$, by setting $\gamma'=\frac{15}{\sigma}$ and $\sigma=\frac{63}{\varepsilon}$, the length of the graph metric path between $u$ and $u'$ in $DT_j$ is at most $(1+\varepsilon)dist_G(p,q)$.
\end{lemma}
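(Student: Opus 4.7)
The plan is to trace out the unique path in $DT_j$ between the leaves $u$ and $u'$, use Lemma~\ref{lem:proper-containment} to locate the relevant head nodes as ancestors, and then sum up contributions along the path using Lemma~\ref{lem:graph-metric-path-length} and Lemma~\ref{lem:head-size-to-length}.

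Specifically, let $h_v$ and $h_{v'}$ denote the head nodes in $DT_j$ corresponding to $head(v)$ and $head(v')$, and let $d$ denote the dumbbell node for $D$. Since $p\in C_v$ and $q\in C_{v'}$, Lemma~\ref{lem:proper-containment} (which is applicable because $\sigma=63/\varepsilon\geq 3$, $\gamma'=15/\sigma$, and $15/\sigma<\sigma/(\sigma+2)$ for this choice) guarantees that $h_v$ is an ancestor of $u$ and $h_{v'}$ is an ancestor of $u'$. The path from $u$ to $u'$ in $DT_j$ therefore has the form: $u \leadsto h_v \to d \to h_{v'} \leadsto u'$. Let $\ell$ be the length of $D$; since $D$ is the parent dumbbell of both $h_v$ and $h_{v'}$, Lemma~\ref{lem:graph-metric-path-length} bounds each of the two outer segments of the graph metric path by $(15/\sigma)\cdot \ell$.

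Next I would bound the two middle edges. The representative of $d$ lies in $C_v\cup C_{v'}$; without loss of generality assume it lies in $C_v$. The representative of $h_v$ lies in $C_v$, so the first edge contributes at most $\gdub{C_v}$. For the second edge, the triangle inequality through $rep_v$ and $rep_{v'}$ gives a bound of $\gdub{C_v}+\ell+\gdub{C_{v'}}$. By Lemma~\ref{lem:head-size-to-length}, both $\gdub{C_v}$ and $\gdub{C_{v'}}$ are at most $\ell/(\sigma+1)$, so the total contribution of the middle edges is at most $\ell+3\ell/(\sigma+1)$. Summing everything, the length of the graph metric path is at most
\[
L \;\leq\; \frac{30\,\ell}{\sigma} + \ell + \frac{3\,\ell}{\sigma+1} \;\leq\; \ell\Bigl(1 + \frac{33}{\sigma}\Bigr).
\]

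To turn this into the stated distortion, I would apply the triangle inequality from the opposite direction to lower-bound $dist_G(p,q)$. Using $p\in C_v$ and $q\in C_{v'}$ and Lemma~\ref{lem:head-size-to-length},
\[
dist_G(p,q) \;\geq\; dist_G(rep_v,rep_{v'}) - \gdub{C_v} - \gdub{C_{v'}} \;\geq\; \ell - \frac{2\ell}{\sigma+1} \;=\; \ell\cdot\frac{\sigma-1}{\sigma+1}.
\]
Combining the two bounds yields $L/dist_G(p,q)\leq (1+33/\sigma)(\sigma+1)/(\sigma-1)$, which is at most $1+\varepsilon$ for $\sigma=63/\varepsilon$ (a direct calculation: the ratio expands to $1+35/\sigma+O(1/\sigma^2)$, and $35/\sigma = 35\varepsilon/63 < \varepsilon$ with room to absorb the quadratic correction since $\varepsilon<1$). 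The main obstacle is simply book-keeping: ensuring that the worst-case placement of the dumbbell-node representative is correctly handled in the middle-edge estimate, and checking that the final constant accommodates both the $(15/\sigma)$-factors from Lemma~\ref{lem:graph-metric-path-length} and the $2/(\sigma-1)$-factor from the lower bound on $dist_G(p,q)$.
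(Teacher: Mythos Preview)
Your proposal is correct and follows essentially the same approach as the paper: invoke Lemma~\ref{lem:proper-containment} to place the head nodes as ancestors of the leaves, split the graph metric path into the two outer legs bounded by Lemma~\ref{lem:graph-metric-path-length} and the two middle edges bounded via Lemma~\ref{lem:head-size-to-length}, and then collect terms. The only cosmetic difference is in the final accounting: the paper converts everything to multiples of $dist_G(p,q)$ up front (using $\gdub{C_v}\le dist_G(p,q)/\sigma$ and $\ell\le(1+2/\sigma)dist_G(p,q)$) and arrives at $(1+63/\sigma)dist_G(p,q)$ directly, whereas you bound in terms of $\ell$ first and then divide by the lower bound $dist_G(p,q)\ge \ell(\sigma-1)/(\sigma+1)$; both yield a ratio comfortably below $1+63/\sigma$.
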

\begin{proof}
    Given any $0<\varepsilon<1$, set $\gamma'=\frac{15}{\sigma}$ and $\sigma=\frac{63}{\varepsilon}$. Then Lemma~\ref{lem:proper-containment} is satisfied. Let $w$ be the dumbbell node in $DT_j$ corresponding to $D(v,v')$. Without loss of generality, assume that $p\in C_v,q\in C_{v'}$. According to Lemma~\ref{lem:proper-containment}, $v$ is an ancestor of $u$ and $v'$ is an ancestor of $u'$.

    Let $\ell$ be the length of $D(v,v')$, let $x$ be the representative of $v$, let $y$ be the representative of $w$ and let $z$ be the representative of $v'$. The graph metric path between $u$ and $u'$ consists of the graph metric path from $u$ to $v$, the graph metric path from $v$ to $v'$, and the graph metric path from $v'$ to $u'$. Using Lemma~\ref{lem:head-size}, the length of the graph metric path between $v$ and $v'$ is
    \begin{align*}
        dist_G(x,y)+dist_G(y,z)\leq dist_G(p,q)/\sigma+ (1+2/\sigma)dist_G(p,q)=(1+3/\sigma)dist_G(p,q).
    \end{align*}
    According to Lemma~\ref{lem:graph-metric-path-length}, the graph metric path from $u$ to $v$ and the graph metric path from $v'$ to $u'$ both have length at most $\frac{15}{\sigma}\ell$. Also we have $\ell\leq (1+2/\sigma)dist_G(p,q)$. Therefore the length of the graph metric path between $u$ and $u'$ is at most 
    \begin{align*}
        \frac{30}{\sigma}(1+2/\sigma)dist_G(p,q)+(1+3/\sigma)dist_G(p,q)<(1+\frac{63}{\sigma})dist_G(p,q)=(1+\varepsilon)dist_G(p,q).
    \end{align*}
\end{proof}

Set $\gamma=\gamma'=\frac{15}{\sigma}$. Putting Corollary~\ref{thm:GWSPD-graph-distance}, Lemma~\ref{lem:dumbbell-partition-main} and Lemma~\ref{lem:dumbbell-trees-construct} together, we obtain the following theorem.
\treecovermain*
\section{Conclusion} 

We proved that for constant dimension~$d$, any $c$-packed graph admits an $O(c^3 n)$-size well separated pair decomposition, an $O(c)$-size balanced separator, an $O(c^2 n \log n)$-size exact distance oracle and an $O(c^{d+2} n)$-size approximate distance oracle. The preprocessing times are near-linear and the query times (of the distance oracles) are logarithmic.

We conclude with two open problems. Can we improve the query times to constant time? Can we construct an approximate distance oracle for low density or lanky graphs? In the literature, $c$-packed graphs have often been used as a first step to obtaining good bounds for $\lambda$-low density graphs or lanky graphs.

\bibliographystyle{plainurl}
\bibliography{reference.bib}
\end{document}